\documentclass[12pt,twoside]{book}
 \usepackage{amsmath,amsfonts,amssymb,latexsym,graphicx,setspace}

 \oddsidemargin=5mm
 \evensidemargin=-10mm

\paperheight=297mm \paperwidth=210mm

    \oddsidemargin 1.6mm
    \evensidemargin -3.4mm
    \marginparwidth 18mm
    \textwidth 162mm

\columnsep 0.75cm \topmargin 0pt \textheight 218mm

\newtheorem{theorem}{Theorem}[section]
\newtheorem{definition}{Definition}[section]

\newtheorem{proposition}{Proposition}[section]
\newtheorem{remark}{Remark}[section]

\newtheorem{example}{Example}[section]
\newenvironment{proof}[1][Proof]{\textsc{#1.} }{\ \rule{0.5em}{0.5em}}
\numberwithin{equation}{section}
\def\be{\begin{equation}}
\def\ee{\end{equation}}
\def\bq{\begin{eqnarray}}
\def\eq{\end{eqnarray}}
\def\beq{\begin{eqnarray*}}
\def\eeq{\end{eqnarray*}}

\def\a{\alpha}
\def\b{\beta}

\def\d{\delta}

\begin{document}
\pagenumbering{roman}
\begin{titlepage}

\vspace{4cm}

\begin{center}
{\LARGE \bf{The nature of cosmological singularities in isotropic universes and
braneworlds}}

\vspace{3cm}

by \vspace{0.5cm}

{\Large Ifigeneia Klaoudatou}

\end{center}

\vspace{2cm} {\large
\begin{center}
A thesis defended before the advisory and examining committee

In partial fulfillment
of the requirements for the degree of \\

Doctor of Philosophy

Department of Information and Communication Systems Engineering

University of the Aegean

\end{center}
\vspace{0.5cm}
\begin{center}
\textbf{Samos 2008}
\end{center}}
\vspace{0.5cm}
\end{titlepage}
\thispagestyle{empty}
\newpage
\pagebreak
\newpage
\;\;\; \vfill
\parindent=0.75cm
\vfill
\begin{center}
\large\em\null\vskip1in \vfill \
\end{center}
\pagebreak
\newpage
\thispagestyle{empty}
\newpage
\begin{titlepage}
\begin{center}
{\small
The advisory committee \\
\vspace{0.5cm}
Spiros Cotsakis, Professor, Supervisor\\
Department of Information and Communication Systems Engineering

George Flessas, Professor, Member\\
Department of Information and Communication Systems Engineering

Ioannis Miritzis, Assistant Professor, Member\\
Department of Marine Science

\vspace{0.3cm}
University of the Aegean

\vspace{1.5cm}

The examining committee \\
\vspace{0.5cm}
Spiros Cotsakis, Professor\\
Department of Information and Communication Systems Engineering,
University of the Aegean

George Flessas, Professor\\
Department of Information and Communication Systems Engineering,
University of the Aegean

Ioannis Miritzis, Assistant Professor\\
Department of Marine Science, University of the Aegean

George Vouros, Professor\\
Department of Information and Communication Systems Engineering,
University of the Aegean

Konstantinos Vagionakis, Professor\\
Department of Physics, University of Ioannina

Konstantinos Kokkotas, Associate Professor\\
Department of Physics, Aristotle University of Thessaloniki

Nikolaos Stergioulas, Assistant Professor\\
Department of Physics, Aristotle University of Thessaloniki}
\end{center}
\end{titlepage}
\thispagestyle{empty}
\newpage
\pagebreak
\newpage
\;\;\; \vfill
\parindent=0.75cm
\vfill
\begin{center}
\large\em\null\vskip1in \vfill \
\end{center}
\pagebreak
\newpage
\thispagestyle{empty}
\newpage
\begin{center}
{\Large{\bf Acknowledgements}}
\end{center}
I owe a great dept of gratitude to my supervisor Spiros Cotsakis for his
encouragement and enthusiasm that were the driving force for starting and
completing my PhD. I am also grateful to him for his contribution and guidance
in research and in the writing up of our work. This thesis has been
influenced dramatically both in style and content from his comments and
suggestions.

I would like to acknowledge the financial support from the Ministry of
Education grant `Herakleitos'.

I thank the members of the advisory committee George Flessas and John Myritzis
for their support. I am also grateful to the members of our group Manolis Delis
and Antonios Tsokaros for useful discussions.

I am very grateful to Ignatios Antoniadis for making it possible for me to stay
at CERN, where part of our research was done.
I am also grateful to him for all the useful discussions we had during the
making and writing up of our joint work with Spiros Cotsakis.

I would also like to thank Mihalis Dafermos for inviting me to the
Seventh British Gravitational Conference which gave me the opportunity to
present part of my work and discuss it with other British relativists.

Many thanks to the Department of Information and Communication Systems
Engineering of the University of the Aegean for offering
ideal conditions for doing a research degree and for giving me the opportunity
to work as a tutor. Special thanks to the Head of the
Department Stephanos Gritzalis.

I would also like to thank the Library of the Eugenidion Foundation.
I have benefited of its inspiring environment and
hospitality since my school years. I wish to thank especially
Eleni Tsalaga, staff of the Library, for her help that was beyond her typical
obligations.

I am also grateful to my brother-in-law Giorgos Vourdachas for his help in
organising and technical issues.

I particularly thank my friends Georgia Kotsala, Betty Dertili, Orestis
Anavaloglou and Elisavet Konstantinou. The times of fun and joy that we
shared are mostly appreciated. Special thanks to Danai Konstanta for her
encouragement and for sharing her humor and idiosyncratic ideas on various
`existential' matters throughout this project and life in general.
I am looking forward to attend SEP classes with her and my sister someday.

I am deeply grateful to my sister Eleni for her support and
constructive criticism in all stages of my PhD. She has helped me
overcome all the difficulties I met through the years of my study.
In various occasions of frenetic rhythms of work she has been the `voice
of wisdom' suggesting that I should simply `stop and smell the roses on the way'.

Finally, I dedicate this thesis to my mother Tzeni.
\newpage

\begin{center}
{\bf {\Large Abstract}}
\end{center}
In this thesis we explore the character of finite-time
singularities that are possible
to arise in FRW universes and in specific braneworld configurations.

We begin with a review of the structure and properties of RW-spaces. We
then give an adaptation of a completeness theorem 
and analyse complete model universes from the recent literature.

We continue with the construction of a classification plan of singularities
that may appear in FRW universes. Our tool for this quest is the
contrapositive of the completeness theorem 
which translates in
\emph{necessary} conditions for singularities. The classification derived in this
way is based on the behaviour of the Hubble rate and offers a first insight
into the nature of singularities.
We further extend and complete this classification with the use of the
Bel-Robinson energy. We also associate the existence of a closed trapped
surface with a divergence in the Bel-Robinson
energy and use this relation to determine the evolution of various specific
cosmologies.

Next we study models from the recent literature that offer
the appropriate physical conditions for the appearance of singularities
predicted by our classification.
This study shows that our classification not only
accommodates all known types of singularities but it also predicts new ones.

Finally, we focus on braneworld models consisting of a three-brane embedded in a
five-dimensional bulk space that is inhabited by a scalar field or a perfect
fluid. Our goal is to study the singularities that may occur
within a finite distance from the brane. Our analysis that is performed with the
method of asymptotic splittings shows that these braneworlds can
exhibit three main types of finite-distance singularities.
\newpage
\tableofcontents

\listoftables
\newpage
\;\;\;
\vfill
\parindent=0.75cm
\vfill
\begin{center}
\large\em\null\vskip1in
\vfill \
\end{center}
\pagebreak
\newpage
\pagenumbering{arabic}
\chapter{Introduction} 
Ever since the fundamental discovery of the big bang and big crunch singularities
in the general relativistic FRW universes of dust and
radiation a lot of effort has been focused on finding ways to avoid these
philosophically and physically puzzling features of spacetime structure.
Indeed, spacetime
singularities constitute a challenging issue philosophically since they imply
that the immense universe emerged a finite time ago through a region of spacetime
of infinite curvature and zero size, as well as physically
since general relativity fails to describe such a state of spacetime.
A common way out of this situation is to modify general relativity to
obtain theories of gravity that do not possibly contain singularities. A
different attitude is to accept that singularities are essential features of our
universe and investigate further conditions that trigger their existence
as well as those that describe their nature.

The singularity theorems of general relativity \cite{he73} gave appropriate
conditions that led to the existence of singularities in generic spacetimes;
they did not, however, provide detailed information about their nature.
Uncovering the nature of spacetime singularities remains therefore an open issue
in cosmology and this thesis focuses on the precise pathologies
that may determine the character of singularities in the class of isotropic
spacetimes. Before we analyse more carefully our goals, let us briefly review
the development of recent research on the issue of finite-time singularities
that partly motivates our own research reported herein.

An investigation of the recent literature actually shows that
interest 
in the issue of spacetime singularities has been renewed
through the construction of various cosmological models with finite-time
singularities.
Among these are those similar to the
standard big bang and big crunch singularities ones albeit
appearing in universes containing different and even `exotic' forms of matter,
as well as other kinds not met in the standard cosmological models studied in
the usual textbook expositions of cosmology.

Interestingly, these novel types of singularities
sometimes involve the presence of quite an exotic type of matter
(cf. \cite{ph_1}-\cite{ph_42}), a characteristic example of such a singularity
being the so-called big-rip singularity (cf. \cite{cald03}, \cite{ch04},
\cite{ph_33}, \cite{steph},
\cite{sud_10}, \cite{sud_12}, \cite{gonzales}, \cite{noj1}, \cite{cald99}).
The study of some of these newer
types of singularities started off as soon as data from the WMAP
(cf. \cite{page}, \cite{spergel}) became available.
These data indicate that the universe is
currently undergoing an accelerating phase typically attributed to a form of
\emph{dark energy} that may be described by a linear equation of state with a
parameter $w$ converging, however, towards the value $w=-1$.
The case with $w<-1$ is not ruled out by the data \cite{cald03} and this provides
a surprising implication for the future of the universe which had not been
appreciated earlier:
once such a dark energy component becomes
dominant, the universe is destined to face up with a violent end towards
which the scale factor, the Hubble expansion rate, the horizon distance, the
density and the pressure all experiencing a divergence. In this sense every bound
structure will be \emph{ripped apart}.

The type of dark energy that leads to the emergence of the big-rip type of
singularity mentioned above has some awkward characteristics
(see Section 4.3) but it nonetheless sparkled new interest on
finite-time singularities. In particular, research
focused (cf. \cite{sud_10}, \cite{sud_12}, \cite{ba04}-\cite{sud_11}) in
finding paradigms of cosmologies that exhibited future finite-time
singularities potentially distinct from big crunches and rips but still feasible
to arise in an expanding phase of a model universe which did not require exotic
forms of matter. This led to the discovery of \emph{sudden} and other milder
types of singularities. These singularities are characterised by the
fact that the second or higher order derivatives of the scale factor diverge
while the scale factor itself and the Hubble rate remain finite.

The models described in the preceding paragraphs
played the important role of providing the appropriate
physical circumstances for the appearance of such singularities. But still they
provide mere \emph{examples} not general theorems that could capture
the unifying reasons for the occurrence of such spacetime singularities.
In fact, the study of an increasing number of models accommodating
finite-time singularities leaves one with the impression that different
singular behaviours are totally determined by (or connected to) the
particularities of each model.

This situation, however, provides us with a fertile ground for the formulation
of some basic questions such as: What do all these different singular
cosmologies have in common? Is there some underline reason that
unifies all these `phenomenologically' different singular behaviours?
Are there other types of finite-time singularities which are possible in these
models but have not yet been uncovered?
What are the general criteria that predict and describe the possible kinds and
nature of finite-time singularities?

This thesis aims to provide answers to such questions.
We will show that singular isotropic universes are characterised by the basic
property that their Hubble expansion rate \emph{fails} in one way or another
to be an integrable function of the proper time for an infinite time-interval.
The various ways in which this
non-integrability can be implemented leads to a classification of
finite-time singularities that consists of three possible types which
accommodate all the recently introduced singularities mentioned above
\cite{ck05}, \cite{iklaoud_2}.
This result offers first evidence for a more detailed description of the
nature of singularities in isotropic spacetimes.

Note that such a `classification' of singularities is totally effected by the
geometry of the spacetime in question; in no way do the various matter fields
influence its construction. A more subtle examination of the character
of a singularity implies however that matter fields may strongly affect the
asymptotic behaviour of these model universes and hence play a role in changing
the nature of their singularities. This type of classification needs therefore
to be refined by somehow incorporating the various matter fields into the picture.
This refinement results in a more detailed classification scheme
that sharpens our view of the nature of singularities
\cite{iklaoud_3}-\cite{iklaoud_5}.

Our approach in constructing a classification of singularities is a geometric one. In particular,
for the Hubble rate classification mentioned above we make use of the
contrapositive of the completeness theorem of \cite{chc02}, \cite{chc02pr},
while for the second we use the notion of the Bel-Robinson energy
(an invariant geometric quantity representing the slice-projected energy of the
gravitational field).
This latter approach makes our classification conducive
to more general spacetimes as well as more flexible since now more functions
describe the possible different types of singularities and trace the properties
and structure of spacetime.

After the completion of the classification of singularities in isotropic
universes, this thesis focuses on the analysis of singularities that occur in the
different context of \emph{braneworld universes}. The study of these models
(originally motivated by string theory) offers many interesting scenarios in
cosmology and particle physics (see Section 5.1 for a short review).
Our interest in this recently issued research area originates from the fact that
these models exhibit finite-\emph{distance} singularities that have many
analogies with the finite-time singularities predicted in our classification
and not met in simpler, more standard models.

For instance, the specific braneworld model of \cite{nima}
captured our interest precisely because it involves the appearance of
a finite-distance singularity. This model consists of a flat three-brane sitting
at a fixed position of an extra spatial dimension, and a scalar field propagating
throughout the whole of the enveloping five-dimensional space.
At the vicinity of the singularity occurring within a finite-distance from
the brane, the energy density of the scalar field becomes infinite and this may
offer an explanation to the small, observed value of the vacuum energy
(actualised in this model by the scalar field).

We further explore whether this singularity continues to
appear when we consider a curved instead of a flat brane. Our preliminary
results indicate that the singularity is in fact removed at an infinite distance from the brane
\cite{iklaoud_6}.
This work is further extended here by examining other types of finite-distance
singularities that become possible once we consider a different form of bulk
matter such as a perfect fluid. 

By the use of the method of asymptotic splittings developed
in \cite{skot}, we find that there are three
possible types of finite-distance singularities, triggering an infinite leak of
energy of the matter that inhabits the background space.

A more detailed plan of this thesis is as follows:

In Chapter 2, after a review of the basic properties of Robertson-Walker spaces
we give an adaptation of the proof of the general completeness theorem
of \cite{chc02} for the specific case of such spaces and analyse complete model
universes from the recent literature.

In Chapter 3, we use the contrapositive of the completeness theorem \cite{chc02},
\cite{chc02pr} to derive \emph{necessary} conditions for singularities in
isotropic spacetimes. These in turn provide us
with a first classification plan of singularities. This plan is later extended
and completed with the use of the Bel-Robinson energy. We also associate the
existence of a closed trapped surface with a divergence in the Bel-Robinson
energy and use this relation to determine the evolution of various specific
isotropic models.

In Chapter 4, we exploit model universes appearing in the recent literature to
demonstrate how some particular members of our classification indeed arise
during the evolution of such models. It follows from the analysis in this
chapter that our classification scheme accommodates all known types of
singularities. In fact, an examination of the literature shows that many new
types of singularities -predicted by our classification- have not yet been
embodied in appropriate cosmological models.

In Chapter 5, we present an asymptotic analysis of braneworld models that
consist of a three-brane embedded in a five-dimensional bulk space filled
with a scalar field or a perfect fluid. Our analysis, performed with
the method of asymptotic splittings, shows that these models
are constrained to exhibit only
three main types of singularities within a finite distance from the position
of the brane which is taken to be at the origin.

In Chapter 6 we conclude and discuss projects of future work.
This thesis has also five appendices.
In Appendix A, we explain the notation and conventions that we use throughout,
and in the appendices that follow we give some background of
some of the basic concepts that we use in this work such as the extrinsic
curvature and the Gauss-Codazzi formalism (Appendix B), the Bel-Robinson energy
(Appendix C), the energy-momentum tensor of a scalar field (Appendix D) and the
method of asymptotic splittings (Appendix E).

We declare that all statements in this work correspond to original research
conducted by the author except in all those places where it is otherwise stated
and proper citations are then given.

\chapter{Completeness and Robertson-Walker geometry}
In this Chapter we explore the character of Robertson-Walker spaces, the
basic spacetimes used and studied in this work.
We start by reviewing, in Section 2.1, the motivation behind the
adaptation of such spaces, their basic properties and structure as well
as some simple features of singularities that they may exhibit.
In Section 2.2, we first review the
assumptions of the completeness theorem stated and proved in \cite{chc02}
for generic spacetimes and then give a version of that proof for the
case of a Robertson-Walker space.
The key idea of the proof of that theorem is to study the geodesic equations
satisfied by every causal geodesic; this is a different approach from the one
taken in the proof of the `antipodal' singularity theorems, where the idea is to
study instead the Raychaudhouri equation which describes the expansion of a
bundle of geodesics.
The completeness theorem applied to the case of a Robertson-Walker space
guarantees that the integrability
of the norm of the extrinsic curvature, or equivalently of the Hubble expansion
rate, on an infinite-proper time interval suffices for the completeness of such
spaces. Finally, in Section 2.3 we analyse complete universes from the recent
literature.
\section{Structure and properties of Robertson-Walker spaces}
Since observations on the large-scale structure of the universe are limited,
the construction of cosmological models includes at a most basic level
philosophical ideas that we accept as principles. The
\emph{cosmological principle} plays a central role in the construction
of model universes and can be viewed as a generalisation of the
Copernican principle according to which the earth does not occupy a
privileged position in the solar system. The cosmological principle
assumes that in turn the solar system, our galaxy, and our local
group of galaxies do not occupy a privileged position in the universe.

The cosmological principle therefore implies that the universe on a large
enough scale ought to be spatially homogeneous. A spacetime is spatially
homogeneous if at any instant of time, represented by a spacelike hypersurface
$\Sigma_{t}$, any two points on $\Sigma_{t}$ are equivalent. More precisely,
a spatially homogeneous spacetime can be foliated by a one-parameter family of
spacelike hypersurfaces $\Sigma_{t}$
such that for each $t$ and for any pair of points $p,q$ on $\Sigma_{t}$,
there exists an isometry that maps $p$ into $q$ (cf. \cite{wald}, pp. 92-93).
This implies that the curvature of $\Sigma_{t}$ cannot vary from point to
point i.e., the spacelike slices $\Sigma_{t}$ are surfaces of constant curvature.

Apart from imposing homogeneity, the cosmological principle also requires isotropy
about every point in spacetime; this means that there are no privileged
directions, or equivalently, that the universe is spherically symmetric about
every point.

The strongest evidence for isotropy is provided by the discovery
of Penzias and Wilson \cite{penzias} that the universe is filled with a
microwave background radiation of about $3$ K that is highly isotropic.
Since isotropy implies homogeneity \cite{walker}, this fundamental discovery
provided evidence for the validity of the cosmological principle.

The requirement of isotropy further implies that there exists a class of
privileged observers, the \emph{isotropic observers}, that move along a
congruence of timelike geodesics with tangent vectors orthogonal to the
homogeneous slices $\Sigma_{t}$. These timelike geodesics do not intersect,
except possibly at singular points, so that through every non-singular point in
spacetime passes at \emph{most} one isotropic observer. The orthogonality between these geodesics
and the slices $\Sigma_{t}$ is a necessary feature of isotropic spacetimes:
if these geodesics did not cut the surfaces $\Sigma_{t}$ orthogonally, then
we would be able to construct a preferred spatial direction in violation of
isotropy (cf. \cite{wald}, p. 93).

The suitable metric for the description of an isotropic spacetime is
the Robertson-Walker metric given by
\be
d{s}^{2}=-dt^{2}+a^{2}(t)d\Omega^{2},
\ee
where 
$d\Omega^{2}$ is the metric of the three-spaces of constant negative,
zero or positive curvature. The non-zero constant curvature can be
normalised to be $\pm 1$, (so that $d\Omega^{2}$ is the metric of the unit
three-sphere or of the unit hyperbolic three-space respectively, and
the trichotomy of possibilities reads as follows:
\be
\label{frw_metric}
d{s}^{2}=-dt^{2}+a^{2}(t)(dr^{2}+f^{2}(r)(d\theta^{2}+
  \sin^{2}\theta d\phi^{2})),
\ee
where $a(t)$ is the \emph{scale factor}, and
\be
\label{f(r)}
f(r)=\left\{\begin{array}{ll}
  r, & k=0 \\
  \sin r, & k=1  \\
   \sinh r,& k=-1,
\end{array}
\right.
\ee
with $r\in [0,\infty)$ if $k=0,-1$, whereas $r\in[0,2\pi]$ if $k=+1$.
We call a universe with constant curvature $k$ equal to $-1,0$, or $+1$, an open,
flat, or closed universe respectively.
Note that the spacelike slices of a flat or open universe are of an infinite
extent 
in contrast to those of a closed universe which are 
finite. 

As we have already remarked already in a isotropic universe we are always able to
construct a privilege class of isotropic observers. Each of these observers can
be thought of as a particle of a cosmic fluid filling spacetime; the world lines of the
observers become then the flow lines of the fluid. Because of the orthogonality
between these flow lines and the spacelike slices $\Sigma_{t}$, the spacelike
coordinates of a particle in the fluid are constant -\emph{co-moving}
coordinates- along these lines.
This cosmic fluid models the matter content of the universe and the scale factor
describes the separation of neighboring flow lines (cf. \cite{he73} p.136).
In the simplest case the energy-momentum tensor of the cosmic fluid is thus
taken to be that of a perfect fluid with
density $\mu$ and pressure $p$ which are functions only of the (proper) time. The
resulting Einstein-Friedman equations can be written in the form
\bq
\label{frw_H}
3H^{2} &=&\mu+3\frac{k}{a^{2}} ,\\
\label{accfrw}
\frac{\ddot{a}}{a}&=&-\frac{1}{6}(\mu+3p),
\eq
while the conservation of the energy-momentum tensor gives
\be
\label{cons_fluid}
\dot{\mu}+3H(\mu+p)=0,
\ee
where $H=\dot{a}/a$ is the Hubble rate.

In the absence of exotic forms of matter, we assume that $\mu >0$ and $p\geq 0$
for all $t$.
Then Eq. (\ref{accfrw}) shows that $\dot{a}$ cannot be constant and thus the
universe is either expanding or contracting. The discovery of the expansion of
the universe by Hubble in 1929 offered a very interesting implication for its
history: because its density decreases as the
universe expands (see Eq. (\ref{cons_fluid})), it follows that it should have
been infinite in the past as $a(t)\rightarrow 0$; the universe must have
therefore began its existence with a singularity. In fact, the singularity
theorems predict that this is inevitable in all models satisfying
$\mu+3p>0$.

The first cosmological models with a RW geometry
that were found to possess such
a singularity were the dust and radiation-filled universes. Dust is a
pressureless fluid that provides a good approximation
for a matter component comprised solely by non-interacting galaxies. This follows
because the observed relative velocities of galaxies in
groups are small and hence galaxies are interpreted as particles of
a pressureless fluid. On the other hand, an adequate approximation to the matter
content of the universe in an early epoch is provided by radiation
which is described by taking $p=\mu/3$. 
Away from an initial singularity
open and flat universes 
typically expand forever, whereas, closed
ones end their existence with another singularity that arises within a
finite time from the initial big bang.
\section{Completeness theorems}
Although we are going to focus on exclusively isotropic spacetimes, we
refer briefly in this Section to the basic properties of general spacetimes used
in the formulation of the completeness theorem of \cite{chc02}.
In the end of this Section, we give an adapted version of the original
theorem and proof of \cite{chc02}, for our case of interest
-isotropic spacetimes. We will refer to the original version of the
completeness theorem (Theorem \ref{compl_arb} below) later in Chapter 3 when we
shall discuss the possible extensions of our work to more general spacetimes.

We begin by giving some basic definitions.
Consider a spacetime $(\mathcal{V},g)$ with
$\mathcal{V}=\mathcal{M}\times \mathcal{I},\;$ $\mathcal{I}
=(t_{0},\infty )$, where
$\mathcal{M}$ is a smooth manifold of dimension $n$ and
$^{(n+1)}g$ a Lorentzian metric which in the usual $n+1$
splitting, reads
\begin{equation}
\label{slicedspace}
^{(n+1)}g\equiv-N^{2}dt^{2}+g_{ij}(dx^{i}+\beta ^{i}dt)(dx^{j}+\beta ^{j}dt)
\end{equation}
Here $N=N(t,x^{i})$ is called the \emph{lapse function}, $\beta
^{i}(t,x^{j})$ is called the \emph{shift function} and the spatial
slices $\mathcal{M}_{t}\,(=\mathcal{M}\times \{t\})$ are spacelike
submanifolds endowed with the time-dependent spatial metric
$g_{t}\equiv g_{ij}dx^{i}dx^{j}$. We call such a spacetime a
\emph{sliced space} \cite{c03}. A sliced space is time-oriented
by increasing $t$ and we choose $\mathcal{I} =(t_{0},\infty )$ (or
$\mathcal{I} =(-\infty,t_{0})$)
when we study the future (past) singularity behaviour
of an expanding (contracting) universe with a singularity in the past (future),
for instance at $t=0<t_{0}$ ($t=0>t_{0}$). However, since $t$ is just a
coordinate, our study could apply as well
to any interval $\mathcal{I}\subset\mathbb{R}$.

A natural causal assumption for $(\mathcal{V},g)$ is that it is
\emph{globally hyperbolic}. This implies the existence of a time function
on $(\mathcal{V},g)$.
In a globally hyperbolic space,
spacetime splits as above with each spacelike
slice $\mathcal{M}_{t}$ being a Cauchy surface; this means that
each timelike and null curve without end points cuts $\mathcal{M}_{t}$
exactly once \cite{ge70}. In such a spacetime therefore
the future and history of the universe can be predicted or retrodicted from
conditions at the instant of time represented by $\mathcal{M}_{t}$ (cf. \cite{wald}, pp.200-202).

Following the definitions given in \cite{chc02}, we say that a sliced space
has \emph{uniformly bounded lapse} if the lapse function $N$ is bounded below
and above by positive numbers $N_{m}$ and $N_{M}$,
\begin{equation}
0<N_{m}\leq N\leq N_{M}. \label{h1}
\end{equation}
A sliced space has \emph{uniformly bounded shift} if the $g_{t}$
($t\in\mathcal{I}$) norm of the shift vector $\beta$, projection on the tangent
space to $\mathcal{M}_{t}$ of the tangent to the lines $\{x\}\times
\mathcal{I}$, is uniformly bounded by a number $B.$

A sliced space has \emph{uniformly bounded spatial metric} if the
time-dependent metric $g_{t}\equiv g_{ij}dx^{i}dx^{j}$ is
uniformly bounded below for all $t\in \mathcal{I}$ by a
metric $\gamma =g_{t_0}$, that is there exists a number $A>0$ such
that for all tangent vectors $v$ to $\mathcal{M}$ it holds that
\begin{equation}
A\gamma _{ij}v^{i}v^{j}\leq g_{ij}v^{i}v^{j}.  \label{h2}
\end{equation}
\begin{definition}
\label{reg_hyp}
A sliced space $(\mathcal{V},g)$ with uniformly bounded
lapse, shift and spatial metric is called regularly sliced.
\end{definition}

Denoting by $\nabla N$ the space gradient of the lapse $N$, by
$K_{ij}$ 
the extrinsic curvature of $\mathcal{M}_{t}$, and by $|K|^2_{g}$ the product
$g^{am}g^{bn}K_{ab}K_{mn}$, we have the following theorem of \cite{chc02} which
gives sufficient conditions for geodesic completeness:
\begin{theorem}\label{compl_arb}
Let  $(\mathcal{V},g)$ be a sliced space such that the following
assumptions hold:
\begin{description}
\item[C1] $(\mathcal{V},g)$ is globally hyperbolic
\item[C2] $(\mathcal{V},g)$ is regularly sliced
\item[C3] For each finite $t_{1},$ the space gradient of the lapse, $|\nabla N|_{g}$,
is bounded by a function of $t$ which is integrable on $[t_{1},+\infty )$
\item[C4] For each finite $t_{1},$  $|K|_{g}$ is bounded by a function of $t$ which is
integrable on $ [t_{1},+\infty )$.
\end{description}
Then  $(\mathcal{V},g)$ is future timelike and null geodesically
complete.
\end{theorem}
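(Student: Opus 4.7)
The plan is to show every future-directed causal geodesic $\gamma$ extends to all values of its affine parameter $\lambda$ by studying the quantity $\nu = N u^{0} = -g(u,n)$, where $u = dx/d\lambda$ is the tangent and $n$ the future unit normal to $\mathcal{M}_{t}$. Since $g(u,u) = -\epsilon$ with $\epsilon \in \{0,1\}$, the spatial part $V$ of $u$ satisfies $|V|_{g}^{2} = \nu^{2} - \epsilon$, so $\nu \geq \sqrt{\epsilon}$ and in particular $\nu > 0$. The goal is to bound $\nu$ from above and away from zero by positive constants; completeness then follows because $d\lambda/dt = N/\nu$ is pinched between positive bounds, forcing $\lambda \to \infty$ as $t \to \infty$.

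The first step is an ODE for $\nu$. Combining $u^{\alpha}\nabla_{\alpha} u^{\beta} = 0$ with the sliced-space identity $\nabla_{\gamma} n_{\alpha} = -K_{\gamma\alpha} - n_{\gamma}\, D_{\alpha}\ln N$ yields
\[
 \frac{d\nu}{d\lambda} \;=\; K(V,V) \;-\; \nu\, V^{i}\,\partial_{i}\ln N .
\]
Applying Cauchy-Schwarz together with $|V|_{g} \leq \nu$ and the lower bound $N \geq N_{m}$ from C2 gives
\[
 \left|\frac{d\ln\nu}{d\lambda}\right| \;\leq\; \nu\,\bigl(|K|_{g} + |\nabla N|_{g}/N_{m}\bigr) .
\]
Switching to the time parameter via $d\lambda = (N/\nu)\,dt$ removes one factor of $\nu$ on the right and brings in $N \leq N_{M}$:
\[
 \left|\frac{d\ln\nu}{dt}\right| \;\leq\; N_{M}\,\bigl(|K|_{g} + |\nabla N|_{g}/N_{m}\bigr) .
\]
By C3 and C4 the right-hand side is integrable on $[t_{0},\infty)$, so $\ln\nu(t) - \ln\nu(t_{0})$ remains bounded and hence $\nu(t) \in [c,C]$ for positive constants. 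Consequently $d\lambda/dt = N/\nu \in [N_{m}/C,\,N_{M}/c]$, and the affine parameter exhausts $[0,\infty)$ precisely when $t$ does.

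To conclude, one must still rule out lateral escape from $\mathcal{M}$ in finite $\lambda$. The spatial coordinate velocity is controlled by $|dx^{i}/d\lambda|_{g} \leq |V|_{g} + B\,u^{0} \leq C(1 + B/N_{m})$ using the bounded shift from C2; the bound $A\gamma_{ij} \leq g_{ij}$ converts this into a uniform estimate in the reference metric $\gamma$, so the spatial projection of $\gamma$ stays in a compact $\gamma$-ball over any bounded parameter interval, and global hyperbolicity (C1) forbids any inextendibility before $\lambda = \infty$. The main obstacle is this last step: converting coordinate-wise bounded velocity into genuine extendibility requires using the Cauchy character of the slices together with uniform comparability between $\gamma$ and the dynamical $g_{t}$, and this is the delicate ingredient in the general statement. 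In the Robertson-Walker adaptation everything collapses: $N=1$, $\beta^{i}=0$, $|\nabla N|_{g}=0$, and $|K|_{g} = \sqrt{3}\,|H|$, so that the ODE reduces to $\ddot t = -H(\dot t^{2} - \epsilon)$, whose first integral $\dot t^{2} - \epsilon = (\dot t_{0}^{2} - \epsilon)\exp\!\left(-2\int_{t_{0}}^{t}H\,ds\right)$ yields the desired two-sided bound on $\dot t$ directly from the integrability of $H$, while the spatial issue disappears because the constant-curvature slices are themselves complete.
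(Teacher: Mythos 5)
Your proposal is correct and follows essentially the same route as the paper's argument (itself an adaptation of the Choquet-Bruhat--Cotsakis proof): one integrates the logarithmic derivative of $dt/d\lambda$ along a causal geodesic, bounds it by the integrable quantities $|K|_{g}$ and $|\nabla N|_{g}$ using $|V|_{g}\leq\nu$ and the lapse bounds, and concludes that $d\lambda/dt$ is pinched between positive constants so the affine parameter is exhausted only as $t\to\infty$. The only difference is one of completeness rather than of method: the thesis writes out this computation only in the Robertson--Walker specialization ($N=1$, $\beta=0$, where the estimate reduces to $\ln(y(t)/y(t_{1}))<\int|K|_{g}\,dt$) and defers the general lapse-gradient term and the no-lateral-escape step to the cited reference, whereas you supply both, correctly identifying the compactness/extendibility argument via the uniformly bounded shift, the comparison metric $\gamma$, and completeness of the slices as the delicate ingredient.
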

It was later proved in \cite{c03} that in a regularly sliced spacetime,
condition $C1$ is in fact \emph{equivalent} to
the condition that each slice of $(\mathcal{V},g)$ is a complete
Riemannian manifold.

Turning back now to isotropic spacetimes we see from the definition of a sliced
space 
and the form of the RW metric (Eqs. (\ref{frw_metric}), (\ref{f(r)})), that an
isotropic spacetime is a sliced space with $N=1$, $\beta =0$. Following the
Definition (\ref{reg_hyp}) we note that:

\begin{remark}
An isotropic spacetime is regularly sliced iff the scale factor is a bounded
from below function of time.
\end{remark}

We can now reformulate the completeness
theorem \ref{reg_hyp} so as to apply directly to the case of an isotropic
spacetime.
\begin{theorem}[Completeness of RW spacetimes]
\label{frwcomp}
Every globally hyperbolic, regularly sliced RW space
such that for each finite $t_{1}$ the norm of the extrinsic curvature is
integrable on $ [t_{1},+\infty )$, is future timelike and null geodesically
complete.
\end{theorem}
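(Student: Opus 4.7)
The plan is to adapt the geodesic-equation argument behind Theorem \ref{compl_arb} to the RW setting, where the vanishing of the shift and the constancy of the lapse reduce the proof to an essentially one-dimensional ODE analysis. First I would note that with $N\equiv 1$ and $\beta^{i}\equiv 0$ condition C3 of Theorem \ref{compl_arb} becomes vacuous (since $\nabla N\equiv 0$), while C1 and C2 are the standing hypotheses, so that the only content-carrying input is the integrability of $|K|_{g}$. A direct computation of the extrinsic curvature of a slice $\mathcal{M}_{t}$ from the metric (\ref{frw_metric}) gives $K_{ij}=a\dot a\,\tilde g_{ij}$, where $\tilde g_{ij}$ denotes the constant-curvature three-metric, whence
\[
|K|_{g}^{2}=g^{ik}g^{jl}K_{ij}K_{kl}=3H^{2}.
\]
Thus integrability of $|K|_{g}$ on $[t_{1},\infty)$ is the same as integrability of $|H|$ there, and via $\log a(t)-\log a(t_{1})=\int_{t_{1}}^{t}H(s)\,ds$ one obtains a uniform upper bound $a(t)\le a_{M}$ for $t\ge t_{1}$; combined with the lower bound $a(t)\ge a_{m}>0$ coming from the regularly-sliced hypothesis, this yields two-sided positive bounds on the scale factor.

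Next I would write the $t$-component of the geodesic equation for a future-directed causal geodesic $\gamma(\lambda)$ with tangent $u^{\mu}$ and causal normalisation $g_{\mu\nu}u^{\mu}u^{\nu}=-\epsilon$, where $\epsilon\in\{0,1\}$. Using the only relevant Christoffel symbols, namely $\Gamma^{t}_{ij}=a\dot a\,\tilde g_{ij}$, and eliminating the spatial components via the normalisation, the equation collapses to the scalar relation $u^{t}\,du^{t}/dt=-H((u^{t})^{2}-\epsilon)$, which integrates explicitly to the first integral
\[
\left(\frac{dt}{d\lambda}\right)^{2}=\epsilon+\frac{C}{a^{2}(t)},
\]
with a non-negative constant $C$ fixed by the initial spatial momentum. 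Because $a_{m}\le a(t)\le a_{M}$, the function $d\lambda/dt$ is bounded above and below by strictly positive constants for $t\ge t_{1}$, so the affine parameter and the coordinate time are comparable, $\lambda\asymp t$, and in particular $\lambda\to\infty$ whenever $t\to\infty$.

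Global hyperbolicity closes the argument: by the characterisation recalled after Theorem \ref{compl_arb}, each slice $\mathcal{M}_{t}$ is a complete Riemannian manifold and hence a Cauchy surface, met exactly once by every inextendible future-directed causal geodesic. Consequently every such geodesic reaches each $t\in(t_{0},\infty)$ and, by the comparability $\lambda\asymp t$, is therefore defined on an infinite affine-parameter interval, which is the claimed future timelike and null geodesic completeness. The main obstacle I anticipate is verifying, intrinsically and without importing the general framework, that the constant $C$ in the first integral is genuinely conserved along the geodesic. In the flat case ($k=0$) this is immediate from the spatial translation Killing vectors, whereas for $k=\pm 1$ it is cleanest to derive it via the reduced one-dimensional ODE above, which uses only the metric ansatz and the causal normalisation; once this is in hand, the rest of the argument is forced by the one-dimensional character of the dynamics.
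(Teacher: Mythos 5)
Your proof is correct, and its overall architecture matches the paper's: both reduce the problem to the $t$-component of the geodesic equation, establish a uniform bound on $dt/d\lambda$ on $[t_{1},\infty)$, and conclude that the affine parameter grows at least linearly in $t$, which together with global hyperbolicity yields future completeness. Where you genuinely diverge is in how the bound on $dt/d\lambda$ is obtained. The paper sets $y=dt/ds$, integrates $d(\ln y)/dt=-\Gamma^{0}_{ij}v^{i}v^{j}=-K_{ij}v^{i}v^{j}$, and then invokes the inequality (3.16) of \cite{chc02} to get $\ln\bigl(y(t)/y(t_{1})\bigr)<\int_{t_{1}}^{t}|K|_{g}\,dt=\sqrt{3}\int_{t_{1}}^{t}|H|\,dt$; this is a verbatim specialization of the general sliced-space argument. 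You instead use the causal normalisation to eliminate the spatial velocity exactly, obtaining the first integral $(dt/d\lambda)^{2}=\epsilon+C/a^{2}$ with $C\geq 0$, and then convert integrability of $|H|$ into bounds on $a$ via $\ln a(t)-\ln a(t_{1})=\int_{t_{1}}^{t}H\,d\tau$ (note that integrability of $|H|$ already gives the lower bound $a\geq a_{m}>0$ as well as the upper one, so the regular-slicing hypothesis is not actually needed for that step; and only the lower bound on $a$ is needed to bound $dt/d\lambda$ from above, the upper bound merely giving the reverse comparison $t\lesssim\lambda$, which completeness does not require). The two estimates are quantitatively equivalent --- both are exponentials of $\int|H|$ --- but your route buys an exact conserved quantity and dispenses with the imported inequality, at the price of being special to exact isotropy: the first-integral trick would not survive in the general sliced spaces of Theorem \ref{compl_arb}, whereas the paper's inequality-based argument does. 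Your closing step, using the Cauchy-surface property to guarantee that every inextendible causal geodesic reaches every $t$ before converting $t\to\infty$ into $\lambda\to\infty$, is in fact spelled out more carefully than in the paper, which passes directly to the assertion that $\int_{t_{1}}^{\infty}(ds/dt)\,dt$ is infinite.
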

\begin{proof}

Consider a causal geodesic in an isotropic universe. The tangent vector to
these geodesics $u^{a}=dx^{a}/ds$,
where $x^{a}=(x^{0},x^{1},x^{2},x^{3})=(t,r,\theta,\phi)$ and $s$ is the
proper time or an affine parameter measured along a timelike or null geodesic
respectively,
satisfies the geodesic equations
\be
\frac{d^{2}x^{a}}{ds^{2}}+\Gamma_{mn}^{a}\frac{dx^{m}}{ds}\frac{dx^{n}}{ds}=0.
\ee
The $a=0$ component of these equations for this type of universe is
\be
\label{frw_geod}
\frac{d^{2}t}{ds^{2}}+\Gamma_{ij}^{0}\frac{dx^{i}}{ds}
\frac{dx^{j}}{ds}=0,
\ee
where $i,j=1,2,3$ (the rest of the Christoffel symbols $\Gamma_{0i}^{0}$ and
$\Gamma_{00}^{0}$ vanish).

Dividing Eq. (\ref{frw_geod}) by $(dt/ds)^{2}$ we find,
\be
\frac{ds}{dt}\frac{d}{dt}\left(\frac{dt}{ds}\right)+\Gamma_{ij}^{0}
\frac{dx^{i}}{dt}\frac{dx^{j}}{dt}=0.
\ee
Setting $y=dt/ds$ and denoting by $\dot{y}$ the derivative of $y$ with
respect to $t$, the above equation is written in the form
\be
\frac{\dot{y}}{y}=-\Gamma_{ij}^{0}\frac{dx^{i}}{dt}\frac{dx^{j}}{dt}.
\ee
Integrating over the interval $[t_{1},t]$, we find
\be
\label{length}
\ln\frac{y(t)}{y(t_{1})}=
-\int_{t_{1}}^{t}\Gamma_{ij}^{0}\upsilon^{i}\upsilon^{j}dt,
\ee
where we have set $\upsilon^{i}=dx^{i}/dt$.
Since $\Gamma^{0}_{ij}=K_{ij}$
\footnote{\label{footnote}Using Eqs. (\ref{K11})-(\ref{K33}) 
we can verify that $\Gamma_{11}^{0}=\dot{a}a=K_{11},$
$\Gamma_{22}^{0}=\dot{a}af^{2}=K_{22},$
$\Gamma_{33}^{0}=\dot{a}af^{2}\sin^{2}\theta=K_{33}$.}
we can write Eq. (\ref{length}) as follows:
\be
\ln\frac{y(t)}{y(t_{1})}=
-\int_{t_{1}}^{t}K_{ij}^{0}\upsilon^{i}\upsilon^{j}dt.
\ee
Substitution then in Eq. (3.16) of \cite{chc02} gives
\be
\ln\frac{y(t)}{y(t_{1})}
<\int_{t_{1}}^{t}|K|_{g}dt=
\sqrt{3}\int_{t_{1}}^{t}\left|\frac{\dot{a}}{a}\right|dt,
\ee
where in the last equation we have used Eq. (\ref{extr_frw}) from Appendix B.
Using the hypothesis of the integrability of $|K|_{g}$, we see that $y=dt/ds$
is uniformly bounded. This means that $ds/dt$ is bounded away from zero
and hence the length of the geodesic given by
\footnote{As it was shown in \cite{chc02} a future-directed
causal curve $C:s\mapsto C(s)$ can be reparametrised by $t$:
if $n=(-1,0)$ is the timelike normal to $\mathcal{M}_{t}$ then
$g\left(\frac{dC}{ds},n\right)=-\frac{dt}{ds}<0$ and hence $\frac{dt}{ds}>0$.}
\be
\int_{t_{1}}^{\infty}\frac{ds}{dt}dt
\ee
is infinite. This implies that the spacetime is causally geodesically complete.
\end{proof}

Past completeness is derived by exchanging future with past and
$ [t_{1},+\infty )$ with $(-\infty,t_{1}]$.
It follows from the above theorem that a globally hyperbolic and regularly
sliced isotropic spacetime is complete provided that
the spacelike slices $\Sigma_{t}$ do not curve too much as they
are seen by an observer located outside $\Sigma_{t}$ in spacetime, for all
$t\in [t_{1},+\infty )$.

Now since the norm of the extrinsic curvature in an isotropic spacetime is
proportional to the Hubble expansion rate $H(t)$ (see Appendix B, Eq.
(\ref{extr_frw})), we conclude that such a spacetime is complete if it exhibits an
expansion rate that is an integrable function of proper time on $[t_{1},+\infty)$.

In the following Section, we will see that there exist complete FRW models that
have an expansion rate which is non-integrable on $[t_{1},+\infty)$.
This implies that the integrability of the Hubble rate on an infinite proper-time
interval
is a sufficient but \emph{not} a necessary condition for completeness.
In order to prove completeness of these universes we take a different approach in
which we make use of the
Bel-Robinson energy.

As we discuss in Appendix C, the Bel-Robinson energy
is a kind of energy of the gravitational field \emph{projected} in a sense to a
slice in spacetime. In particular, for a RW metric the Bel-Robinson energy at
time t, $\mathcal{B}(t)$, is given by
\be
\label{bre_frw}
\mathcal{B}(t)=\frac{C}{2}\left(|E|^{2}+|D|^{2}\right),
\ee
where $C$ is the constant volume \emph{of} (or \emph{in}, in the case of a
non-compact space) the three dimensional slice at time $t$ and $|E|$, $|D|$ are
the norms of the two electric tensors, namely,
\bq
|E|^{2}&=&3\frac{\ddot{a}^{2}}{a^{2}},\\
|D|^{2}&=&3\left(\frac{\dot{a}^{2}}{a^{2}}+\frac{k}{a^{2}}\right)^{2}.
\eq

It can be proved that a closed or flat, expanding at $t_{\ast}$, FRW universe
that has a bounded Bel-Robinson energy is causally geodesically complete
\cite{skotmg}. Therefore complete FRW models which fail to have a Hubble expansion
rate that is integrable on an infinite proper-time interval may still manage to be
complete provided that they have a bounded Bel-Robinson energy.
In the following Section we will illustrate this result with the analysis of
various models.
\section{Complete FRW cosmologies}
Completeness is not a property only shared by `normal' cosmologies.
Although later in Section 4.3, we meet isotropic universes filled with dark energy
that end their existence in a big-rip singularity, this need not always be the
case. There are many examples of such `exotic' universes containing dark energy
that still manage to evade the big-rip singularity and exist forever.
This is achieved either by assuming that the dark energy component satisfies
simultaneously two different equations of state, or that it
satisfies a generalised equation of state that includes matter ranging from
Chaplygin gasses to perfect fluids.
We shall presently analyse examples of complete universes belonging to both of
these 
categories in turn.

For the first case, consider the flat FRW universe studied in
\cite{sr04} (see also \cite{go_1 04}). In this case the dark energy component
satisfies a `phantom' equation of state of the form
$$p=w\mu, \quad \textrm{with} \quad w=w(t) \quad \textrm{and} \quad w<-1,$$
as well as an equation of state of a \emph{Chaplygin gas}
(cf. \cite{gibbons1}, \cite{gibbons2})
$$p=-\frac{A}{\mu}, \quad \textrm{where} \quad A>0,
\quad \textrm{constant},$$
(similar results will hold when $w\in (-1,-1/3)$ -$k$-essence
models, see, e.g., \cite{go_1 04}). Substituting only the latter equation of
state in the continuity equation, Eq. (\ref{cons_fluid}), and integrating we find
\be
\label{8.4}
\mu^{2}(t)=A+(\mu_{0}^{2}-A)\left(\frac{a_{0}}{a(t)}\right)^{6}.
\ee
We can now take into account the phantom equation of state and reform the above
equation. This is done by calculating, with the use of
both equations of state, the time-dependent parameter $w(t)$,
\be
w(t)=-\frac{A}{\mu^{2}}.
\ee
Substituting this back in Eq. (\ref{8.4}), we find the following form for the
energy density,
\newpage
\be
\mu(t)={\mu_{0}}^{2}{[-w_{0}+(1+w_{0}){(a_{0}/a(t))}^{6}]}^{1/2},
\ee
where $A=-w_{0}{\mu_{0}}^{2}$ and $w_{0}<-1$, which we can now insert in the
Friedmann equation, Eq. (\ref{frw_H}), and derive the following solution
for the scale factor:
\be
a(t)=(Ce^{C_{1}(t-t_{0})})^{1/6},
\ee
where $C$ and $C_{1}$ are positive constants. This model admits
cosmic acceleration, $\ddot{a}>0$, and $a(t)\rightarrow \infty$ as
$t\rightarrow\infty$ and it is geodesically complete. The Hubble parameter takes
the constant value,
\be
H=\frac{C_{1}C}{6}.
\ee
Here also the Bel-Robinson energy remains always finite and hence by the
completeness theorem in \cite{skotmg} we conclude that the model is
geodesically complete.

As we mentioned in the beginning of this section, 
the second technique to trigger the avoidance of a big-rip singularity in models containing dark energy is to
assume that the dark energy satisfies a \emph{generalised} Chaplygin equation
of state. This was done in \cite{go_1 04} for a flat universe. In
particular, the equation of state is here chosen so that
\be
p=-{\mu}^{-\alpha}[C+(\mu^{1+\alpha}-C)^{\alpha/(1+\alpha)}],
\ee
where $C=A/(1+w)-1$, $1+\alpha=1/(1+w)$, and
$\alpha$ is a real parameter. Note that in the limit $w\rightarrow 0$ the gas
reduces to a Chaplygin gas satisfying $p=-A\mu^{-\alpha}$, whereas as
$A\rightarrow 0$ it takes the form of a perfect fluid with equation of state
$p=w\mu$. The scale factor is given by the form
\be
a(t)=(C_{1}e^{-C_{3}(t-t_{0})}+C_{2}e^{C_{3}(t-t_{0})})^{2/3},
\ee
where $C_{1}$, $C_{2}$ and $C_{3}$ are parameters depending on
$a_{0}$ (the initial value of the scale factor), $A$, $w$ and $B$,
with $B$ being a positive integration constant.
At the asymptotic limits $t\rightarrow t_{0}$ and $t\rightarrow \infty$,
we find that $H$ tends to suitable constants, that is it remains finite
on $[t_{0},\infty)$. 
The complete character of this model is 
guaranteed by a clearly bounded Bel-Robinson energy.

Assuming the existence of more conventional matter contents such as radiation
and a cosmological constant 
in a closed FRW universe, we obtain the following solution originally given in
\cite{Harris},
\be
a(t)=a_{i}\left[ 1+\exp(\sqrt{2}t/a_{i})\right]^{1/2},
\ee
where $a_{i}$ is a constant that bounds $a(t)$ from below making this universe
regularly hyperbolic. 
One then finds that $H$ 
becomes zero as $t\rightarrow -\infty$, it is integrable on $(-\infty,0]$
and the Bel-Robinson energy is bounded. Thus
we conclude that this universe is geodesically complete, either using the result
of \cite{skotmg} or alternatively by the completeness theorem \ref{frwcomp}.
\chapter{Classification of singularities}
\section{Introduction}
Up to now we studied situations where geodesic completeness
plays a major role. From now on we shall focus on the other important issue,
namely, that of gravitational singularities.
Spacetime singularities characterise solutions
describing cosmological and gravitational collapse situations in general
relativity and in other metric theories of gravity. Understanding the various
aspects of spacetime singularities constitutes the so-called singularity problem.
Viewed from a mathematical perspective the singularity problem apart from
being appealing because of its fundamental nature, it is also a pressing one,
ever-present
in manifolds and metrics of arbitrary form under general
topological and geometric assumptions
(see Section 3.4). From a physical perspective the singularity problem attracts
attention because of the wide variety of cosmological models in general
relativity and other metric theories of gravity that exhibit singularities that
need to be analysed and understood (see Sections 4.2-4.5 for a detailed analysis
of singular cosmological models in general relativity).

Even though spacetime singularities represent an extensively studied area from
both a mathematical as well as a physical point of view, there are still many
important unresolved issues in this general area. It is essential to
find appropriate tools that will serve to define and analyse the structure and
nature of spacetime singularities. This will hopefully lead to a more
basic understanding of all those qualities that make up the character of
singularities along with a more detailed classification of them.
But there exists a risk in performing such a project: mapping
out only a part of the field comprised of all possible
types of singularities would provide us with misleading conclusions about the
generality of the types found since we would then be shedding light only on
some aspects of the field while others would remain concealed \cite{collins}.
It is thus important to attempt to plot a more or less \emph{complete}
classification scheme. Any such scheme ought to acknowledge both the
mathematical and the physical
qualities of a spacetime singularity. Among the mathematical qualities
the most important one is that of geodesic completeness/incompleteness and it
should therefore play a key role in the planning of a classification.
On the other hand, physical qualities
describe the way matter fields behave on approach to the singularity
and are expected to contribute significantly to our understanding of its nature.

We begin the analysis of singularities in this chapter by focusing on
those types of singularity that may arise in the simplest
class of isotropic spacetimes.
The tools that we use are of a geometric nature. Our geometric approach on one
hand makes the resulting classification scheme easily adjustable
to a further future study of more complicated spacetimes, and on the other 
it helps us find complementary results and generic properties of
the class of spacetimes being analysed.

%
Our attack to a realisation of a complete classification scheme as discussed
above is implemented in two stages. In the \emph{first stage},
we exploit the contrapositive of the completeness theorem Theorem
\ref{compl_arb} in the effort to derive \emph{necessary} conditions for the
existence of singularities in the case of isotropic cosmological models. This
provides us with a classification of singularities that is solely based
on the behaviour of the Hubble expansion rate $H$. 

For various reasons to be given later, the classification depicted in the first
stage does not provide us with all
details necessary for a complete classification scheme referred to above.
We therefore need a complementary tool in order to bring to surface all the other
missing factors and complete our plan.

This leads us to the \emph{second stage} which consists of a classification scheme
based on the introduction of a new invariant, the
Bel-Robinson energy, 
a geometric quantity built out of combinations
of various curvature components that
traces successfully all those required qualities missing from the first
classification.

Apart from being involved in the second stage of our classification scheme,
we show that the use of the Bel-Robinson energy in combination with the notion
of a closed trapped surface induces important information about the dynamical
evolution of isotropic models. In particular, we prove that the existence of a
closed trapped surface can be traced in the dynamical behaviour of the
Bel-Robinson energy. We show by analysing various examples from the literature
that when the existence of a closed trapped surface leads to a singularity the
Bel-Robinson energy also diverges, whereas in the opposite case the Bel-Robinson
energy remains finite and signatures the complete character of the model.

The plan of this chapter is as follows: In Section 3.2, we classify
singularities according to the behaviour of the Hubble rate. In Section 3.3
we analyse the necessity for refining our classification scheme: In
Subsection 3.3.1 we derive a complete classification scheme based
on the possible behaviours of three functions the Hubble rate, the scale factor
and the Bel-Robinson energy.
In Section 3.4 we analyse the interplay of closed trapped surfaces and the
dynamical behaviour of the Bel-Robinson energy:
In the Subsection 3.4.1 we show how the existence of a
closed trapped surface is linked to a possible divergence in the Bel-Robinson
energy and in Subsection 3.4.2 we illustrate this result for various
model universes. Finally, in Section 3.5 we discuss our conclusions.
\section{Hubble parameter singularities}
Theorem \ref{frwcomp} provides us with sufficient conditions for completeness
and we can therefore use its contrapositive to derive necessary conditions for
the existence of singularities. This will also give us an insight into the
character of possible singularities. 
In particular, consider a singular, globally hyperbolic and regularly
sliced
isotropic universe. Then according to Theorem \ref{frwcomp}, there is a finite
time $t_1$ for which $H$ fails to be integrable on the proper time interval
$[t_1,\infty)$. In turn, this non-integrability of the expansion rate $H$ can
be implemented in different ways and we arrive at the following result which
provides first evidence for the nature of finite-time singularities,
based entirely on the behaviour of the Hubble parameter
\cite{ck05}, \cite{iklaoud_2}.
\begin{theorem}\label{sing_H}
Necessary conditions for the existence of finite time singularities in
globally hyperbolic, regularly sliced isotropic universes are:
\begin{description}
\item[$\mathbf{S_{1}}$] For each finite $t$, H is non-integrable on $[t_1,t]$,
or
\item[$\mathbf{S_{2}}$] H blows up in a finite future time, or
\item[$\mathbf{S_{3}}$] H is defined and integrable for only a finite
proper time interval.
\end{description}
\end{theorem}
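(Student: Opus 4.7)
The plan is simply to extract these necessary conditions as the contrapositive of Theorem \ref{frwcomp}. Suppose $(\mathcal{V},g)$ is a globally hyperbolic, regularly sliced RW spacetime that is singular (i.e.\ causally geodesically incomplete). Since the completeness theorem lists three independent hypotheses (global hyperbolicity, regular slicing, integrability of $|K|_g$ on $[t_1,+\infty)$ for each finite $t_1$), and we are assuming the first two already hold, the only way the conclusion can fail is that the integrability condition on $|K|_g$ fails. Because $|K|_g = \sqrt{3}\,|H|$ for a RW slicing (by the appendix formula invoked in the proof of Theorem \ref{frwcomp}), this is equivalent to saying that there exists a finite $t_1$ for which $H$ is not integrable on $[t_1,+\infty)$.

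The main content of the proof is then to enumerate the mutually exhaustive ways in which this non-integrability of $H$ on $[t_1,+\infty)$ can be realised. The natural dichotomy is based on the maximal interval of definition of $H$ to the future of $t_1$. First, if $H$ is defined and finite throughout $[t_1,+\infty)$ but $\int_{t_1}^{\infty}|H|\,dt=+\infty$, we are in case $\mathbf{S_1}$: the integral itself diverges even though $H$ never becomes singular pointwise. Second, if $H$ is defined on some maximal interval $[t_1,t_\ast)$ with $t_\ast<+\infty$ and $|H(t)|\to\infty$ as $t\to t_\ast^-$, then $H$ blows up at a finite future time, which is case $\mathbf{S_2}$. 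Third, if the proper-time interval on which $H$ can be defined at all is only a bounded piece of the real line (for instance because the underlying solution of the Friedmann equations ceases to exist past some finite value, or because the spacetime itself only exists for a finite proper-time span, as in the closed dust universe), then even though $H$ may be integrable on that piece, the integrability hypothesis of Theorem \ref{frwcomp} on a half-line $[t_1,+\infty)$ is vacuously violated; this is case $\mathbf{S_3}$.

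The argument should conclude by observing that these three alternatives are exhaustive: for each finite $t_1$, either $H$ extends to $+\infty$ (cases $\mathbf{S_1}$ vs.\ the integrable case excluded by the completeness theorem) or it does not (cases $\mathbf{S_2}$ and $\mathbf{S_3}$, distinguished by whether the obstruction is a blow-up of $H$ at the right endpoint of its domain or merely the finiteness of that domain). No other possibility is compatible with the assumed singular behaviour.

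The only delicate point, and the main thing to be careful about, is that the three cases really are exhaustive and that one has not secretly invoked an extra hypothesis. In particular, one must distinguish $\mathbf{S_2}$ from $\mathbf{S_3}$ cleanly: $\mathbf{S_2}$ assumes $H$ exists on an interval with a divergence at its right endpoint, while $\mathbf{S_3}$ allows the domain of $H$ itself to be finite without any blow-up. Once this taxonomy is fixed, the theorem follows immediately from the contrapositive of Theorem \ref{frwcomp}; the statement is in fact logically equivalent to saying that integrability of $H$ on every half-line $[t_1,+\infty)$ fails.
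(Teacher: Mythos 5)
Your overall strategy is exactly the paper's: take the contrapositive of Theorem \ref{frwcomp}, use $|K|_g=\sqrt{3}\,|H|$ to translate the hypothesis into integrability of $H$, and then enumerate the ways in which $H$ can fail to be integrable on $[t_1,+\infty)$. The problem is in your enumeration, specifically in what you assign to $\mathbf{S_1}$. As stated, $\mathbf{S_1}$ says that $H$ fails to be integrable on the \emph{compact} intervals $[t_1,t]$ for every finite $t$; this is a local pathology at (or near) $t_1$, typically a non-integrable blow-up of $H$ at $t_1$ itself, i.e.\ the big-bang case --- and this is precisely how the paper glosses it immediately after the theorem. You instead describe $\mathbf{S_1}$ as the situation where $H$ is defined and finite on all of $[t_1,+\infty)$ but $\int_{t_1}^{\infty}|H|\,dt=+\infty$. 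These are different conditions: if $H$ is finite and continuous on $[t_1,+\infty)$ it \emph{is} integrable on every compact $[t_1,t]$, so $\mathbf{S_1}$ fails, and conversely your condition is satisfied by de~Sitter space ($H=\mathrm{const}$), which is geodesically complete. A reading of $\mathbf{S_1}$ that admits de~Sitter cannot be part of a list of necessary conditions for singularities.

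This mislabelling also undermines your exhaustiveness claim. The correct decomposition of ``$H$ not integrable on $[t_1,+\infty)$'' is: either $H$ does not extend to the whole half-line (blow-up at a finite future time, $\mathbf{S_2}$, or a merely finite domain of definition, $\mathbf{S_3}$), or it does extend but is non-integrable already on some compact $[t_1,t]$ ($\mathbf{S_1}$), or it extends, is locally integrable everywhere, and only the improper integral at infinity diverges. This last subcase is covered by none of $\mathbf{S_1}$--$\mathbf{S_3}$; the paper sidesteps it (consistently with its earlier remark that half-line integrability of $H$ is sufficient but not necessary for completeness), whereas you have silently folded it into $\mathbf{S_1}$, where it does not belong. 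Your treatment of $\mathbf{S_2}$ versus $\mathbf{S_3}$ --- blow-up at the right endpoint of the domain versus mere finiteness of the domain --- is fine and matches the paper's reading.
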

Condition $S_{1}$ describes a big bang type of singularity when
$H$ blows up at $t_{1}$ since then it is not integrable on any
interval of the form $[t_{1},t]$, $t>t_{1}$ (the condition of regular slicing
is violated in this case but the scale factor is
bounded from above). However, under $S_{1}$ we can have other types of
singularities: Since $H(\tau)$ is integrable on an interval
$[t_1,t]$, if $H(\tau)$ is defined on $[t_1,t]$, continuous on
$(t_1,t)$ and the limits $\lim_{\tau\rightarrow t_1^+}H(\tau)$ and
$\lim_{\tau\rightarrow t^-}H(\tau)$ exist, the violation of
\emph{any} of these conditions leads to a singularity that is not
of the big bang type discussed previously.

Condition $S_{2}$ describes a future singularity at $t_{s}$ ($t_{s}>t_{1}$)
characterised by the divergence of $H$ at $t_{s}$.
Condition $S_{3}$ may lead to a future singularity at $t_{s}$
with $H$ being finite on $[t_{a},t_{s}]$ where $t_{a}\geq t_{1}$
but for this to be a genuine type of singularity, in the sense of geodesic
incompleteness, one needs to demonstrate that the metric is
non-extendible to a larger interval.

Note that these three conditions are not overlapping:
clearly $S_{1}$ and $S_{2}$ are not overlapping with $S_{3}$ but also
$S_{1}$ is not overlapping with $S_{2}$. Indeed $S_{1}$ is not implied by $S_{2}$
for if $H$ blows up at some finite time $t_s$ after $t_1$, then it may still be
integrable on $[t_1,t]$, $t_1<t<t_s$.

There are many examples of singularities met in the literature that belong to
the types predicted by the above theorem; they are
analyzed in Chapter 4 in detail. They usually describe flat isotropic
universes with various components of matter. For example, the phantom dark
energy models of \cite{gonzales} exhibit a future
singularity which falls in the $S_{2}$ category. The standard big bang
singularities as well as the sudden singularity of \cite{ba04} fall in the
$S_{1}$ category. Other sudden singularities (which do not have a blow up of
$H$ at $t=0$ (cf. \cite{noj1})) and the inflation model of \cite{borde} both
fall in the $S_{3}$ category
(for a detailed analysis of these models see Chapter 4).
\section{The $(S,N)$ pairs}
Although the classification based on the Hubble parameter
is a first broad step to clearly distinguish between the various types
of singularities that can occur in isotropic universes, it does not
bring out some of the essential features of the dynamics that may
differ from singularity to singularity. To see this we note that
$S_{3}$ encloses altogether dissimilar types of singularities
each characterised by a distinct pathology, for example in higher order
derivatives of $H$, and as we have already mentioned we also accommodate distinct
singularity types under $S_{1}$.
In addition, condition $S_{2}$ includes both a collapse singularity, where the
scale factor $a\rightarrow 0$ as $t\rightarrow t_{s}$, \emph{and} a
singularity with $a\rightarrow \infty$ as $t\rightarrow t_{s}$.
This type of degeneration is unwanted since it impedes the construction of
a complete classification of singularities.
It is therefore necessary to extend and refine this classification.

We begin by including in our classification the behavior of the scale factor.
That it is only necessary to include this behaviour is seen most
clearly by noticing that only in this way one may consistently
distinguish between initial and final singularities, or between
past and future ones, as it has been repeatedly emphasised by
Penrose in his Weyl curvature hypothesis (see for example
\cite{pen04} pp. 765-769). According to this physical conjecture, the Weyl
curvature which acts as a distorting source for matter is constrained to be
almost zero at initial singularities. In contrast, at final singularities it
diverges to infinity.

Taking into account the behaviour of the scale factor in our singularity
classification scheme results in describing
the various singularity types by a pair $(S_{i}, N_{j})$, where the first
component reflects the behaviour of the Hubble parameter according to Theorem
\ref{sing_H}, while the second component symbolises one of the following three
possible behaviours of the scale factor on approach to the finite-time singularity
at $t_{s}$ namely,
\begin{description}
\item [$\mathbf{N_{1}}$] $a\rightarrow 0$ -\emph{collapse} singularity or

\item [$\mathbf{N_{2}}$] $a\rightarrow a_{s}\neq 0$ -\emph{sudden} singularity or

\item [$\mathbf{N_{3}}$] $a\rightarrow \infty$ -\emph{rip} singularity.
\end{description}

We thus end up with nine possible singularity types given by the pairs
$(S_{i},N_{j})$ where $i,j=1,2,3$.
We further note that in the $(S_{2}, N_{3})$ category we can accommodate
singularities having a Hubble rate which either diverges more rapidly or more
slowly than the scale factor, as well as singularities with these two functions
diverging almost or exactly at the same rate.
Making such a further or more refined distinction between singularities that
otherwise belong to the same
$(S,N)$ type means that we allow for the relative behaviours of the Hubble rate
and the scale factor to emerge as we approach the singularity. This refinement
can be accomplished by introducing in our analysis the following notion of \emph{asymptotic strength}.

First recall that for the relative asymptotic behaviour of two
functions $f,g$ on approach to a point $t_{\ast}$, we say that:
\begin{enumerate}
\item $f(t)$ is \textit{much smaller} than $g(t)$, $f(t)<<g(t)$, if and only if \\
$\lim_{t\rightarrow t_{\ast}} f(t)/g(t)=0$

\item $f(t)
$ is \textit{similar} to $g(t)$, $f(t) \sim  g(t)$, if and only if
$0\neq\lim_{t\rightarrow t_{\ast}} f(t)/g(t)<\infty$

\item $f(t)$ is \textit{asymptotic} to  $g(t)$, $f(t)  \leftrightarrow g(t)$,
if and only if $\lim_{t\rightarrow t_{\ast}} f(t)/g(t)=1$,
\end{enumerate}

There are three specific $(S,N)$ pairs that can be further analysed
according to the possible relative asymptotic behaviours of $H$ and $a$:
\begin{table}[h!b!p!]
\centering
\begin{tabular}{c|c}
\hline
  \textrm{Singularity type} & \textrm{Relative asymptotic strength}\\
  \hline \hline
  $(S_{1},N_{3})$ & $H>>a$, $H<< a$, $H\leftrightarrow a$, $H\sim a$\\
  $(S_{2},N_{3})$ &  $H>>a$, $H<< a$, $H\leftrightarrow a$, $H\sim a$\\
  $(S_{3},N_{2})$ & $H\leftrightarrow a$, $H\sim a$ \\
  \hline
\end{tabular}
\end{table}

We note that while $(S_{1},N_{3})$ and $(S_{2},N_{3})$ singularities can
accommodate all possible relative behaviours between the Hubble rate and the
scale factor, an $(S_{3},N_{2})$ singularity allows only for an asymptotic or
similar relative behaviour between them.

We have so far managed to derive a new classification that consists of nine types
of singularities described by the pairs $(S_{i},N_{j})$ where the first component
reflects the behaviour of the Hubble rate (according to Theorem \ref{sing_H}),
while the second one traces the behaviour of the scale factor. This new
classification can also be further refined with the use of the notion of
asymptotic strength.
\section{Bel-Robinson energy and singularities}
Even though we have at this point taken care of the two main
functions that describe a singularity, we still have
to face another aspect of the problem of classifying singularities.
We have not yet taken into account the relative behaviour of the various matter
components as we approach the time singularities.
A clear picture of the asymptotic behaviour of the matter fields will provide us
with the physical qualities of the singularity
contributing significantly to the understanding of its
nature. In order to describe the behaviour of the various
matter components and include it in our geometric approach, we now exploit the
use of the Bel-Robinson energy
by combining the behaviour of the electric parts of the Bel-Robinson
energy with that of the Hubble parameter in the
following way: Suppose for example that $S_{2}$ holds. Then
from Eq. (\ref{d_3}) we see that for a flat or closed universe $|D|$ necessarily
becomes divergent whereas for an open universe it can be finite since the term
$k/a^{2}$ may, in the case of some collapse singularities, exactly counter-balance
the divergence of the $H^{2}$ term.
These different cases can in turn be combined with a finite or an infinite $|E|$.
Similarly, we can deduce the possible behaviours of $|E|$ and $|D|$ for a singular
universe satisfying $S_{1}$ or $S_{3}$.
Thus the necessary conditions for singularities based on the behaviour of the
Hubble parameter given by Theorem \ref{sing_H} can now be rephrased in terms of
the electric parts of Bel-Robinson energy, to read:
\begin{theorem}
\label{sing_br}
Necessary conditions for null and timelike geodesically incomplete
globally hyperbolic and regularly sliced isotropic universes are that
at a finite time:
\begin{description}
\item [$\mathbf{B_{1}}$] $|E|\rightarrow \infty$ and $|D|\rightarrow
\infty$

\item [$\mathbf{B_{2}}$] $|E|<\infty$ and $|D|\rightarrow
\infty$

\item [$\mathbf{B_{3}}$] $|E|\rightarrow \infty$ and $|D|<\infty$

\item [$\mathbf{B_{4}}$] $|E|< \infty$ and $|D|<\infty$
\end{description}
\end{theorem}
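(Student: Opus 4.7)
The plan is to translate the three necessary conditions $S_{1}$, $S_{2}$, $S_{3}$ of Theorem \ref{sing_H} (phrased in terms of the Hubble rate) into conditions on the electric parts $|E|$ and $|D|$ of the Bel-Robinson energy, using the explicit FRW formulae already recorded in Section 2.2. First, I would rewrite them as
$$|E|^{2}=3(H^{2}+\dot H)^{2},\qquad |D|^{2}=3\!\left(H^{2}+\frac{k}{a^{2}}\right)^{\!2},$$
so that $|D|$ depends only on $H$, $a$ and $k$, while $|E|$ additionally involves $\dot H$. Under this reformulation the four cases $B_{1},\dots,B_{4}$ enumerate precisely the possible combinations of (in)finiteness of the pair $(|E|,|D|)$, and the theorem reduces to showing that an incomplete spacetime is not \emph{a priori} excluded from any one of them.

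Next, I would invoke Theorem \ref{sing_H}: an incomplete, globally hyperbolic and regularly sliced FRW spacetime must realise $S_{1}$, $S_{2}$ or $S_{3}$ at some finite instant $t_{s}$. The core of the proof then consists in checking, case by case, that no hidden coupling forces $(|E|,|D|)$ into a smaller subset of the four $B_{j}$. Under $S_{2}$, with $H\to\infty$ at $t_{s}$, the term $k/a^{2}$ cannot offset $H^{2}$ when $k=0,1$, so $|D|\to\infty$, whereas in the open case $k=-1$ a fine-tuned cancellation with $-1/a^{2}$ as $a\to 0$ can keep $|D|$ finite while $H$ diverges; independently, $\dot H$ can diverge or remain bounded, yielding either infinite or finite $|E|$. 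Analogous analyses for $S_{1}$ and $S_{3}$, read off the above formulae, show that the two binary dichotomies \emph{$|E|$ finite vs.\ infinite} and \emph{$|D|$ finite vs.\ infinite} are logically uncoupled, so at least one of $B_{1}$--$B_{4}$ must be realised at $t_{s}$.

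The main obstacle lies in the $S_{1}$ branch, where non-integrability of $H$ on a finite interval can manifest in several distinct micro-scenarios (pointwise blow-up at an endpoint, failure of $H$ to be defined or continuous at $t_{1}$, or genuine non-integrability of a bounded function with oscillatory behaviour). For each of these one must verify that the induced behaviour of $\dot H$ and of $k/a^{2}$ does not secretly rule out any of the four $B_{j}$. Once this robustness check is in place the classification is both exhaustive and disjoint in its four patterns, and its non-triviality -- that each $B_{j}$ is individually non-empty -- is established by the explicit models collected in Chapter 4.
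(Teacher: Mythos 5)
Your proposal is correct and follows essentially the same route as the paper: the paper likewise derives Theorem~3.4.2 by translating the conditions $S_{1}$--$S_{3}$ of Theorem~\ref{sing_H} through the explicit FRW expressions for $|E|$ and $|D|$, noting in particular that under $S_{2}$ the term $k/a^{2}$ can counter-balance $H^{2}$ only for an open universe, and that the behaviour of $|E|$ is independent, so the four exhaustive combinations $B_{1}$--$B_{4}$ follow. The only cosmetic difference is that you write $|E|^{2}=3(H^{2}+\dot H)^{2}$ explicitly and stress the logical exhaustiveness of the two binary dichotomies, which the paper leaves implicit.
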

We can now move on to list all possible types of singularities
that are formed in an isotropic universe during its cosmic
evolution and enumerate the possible types that result from the
different combinations of the three main functions in the problem,
namely, the scale factor $a$, the Hubble expansion rate $H$ and
the Bel-Robinson energy $\mathcal{B}.$ These types will by
necessity entail a possible blow up in the functions $|E|$, $|D|$.

If we suppose that a given model has a finite time singularity at
$t=t_s$, then the possible behaviours of the functions in the
triplet $\left(H,a,(|E|,|D|)\right)$ in accordance with Theorems
\ref{sing_H}, \ref{sing_br} are as follows:
\begin{description}
\item [$\mathbf{S_{1}}$] $H$ non-integrable on $[t_{1},t]$ for every $t>t_{1}$

\item [$\mathbf{S_{2}}$] $H\rightarrow\infty$  at $t_{s}>t_{1}$

\item [$\mathbf{S_{3}}$] $H$ otherwise pathological
\end{description}

\begin{description}
\item [$\mathbf{N_{1}}$] $a\rightarrow 0$

\item [$\mathbf{N_{2}}$] $a\rightarrow a_{s}\neq 0$

\item [$\mathbf{N_{3}}$] $a\rightarrow \infty$
\end{description}

\begin{description}
\item [$\mathbf{B_{1}}$] $|E|\rightarrow\infty,\, |D|\rightarrow \infty$

\item [$\mathbf{B_{2}}$] $|E|<\infty,\, |D|\rightarrow \infty $

\item [$\mathbf{B_{3}}$] $|E|\rightarrow\infty,\, |D|< \infty $

\item [$\mathbf{B_{4}}$] $|E|<\infty,\, |D|< \infty $.
\end{description}
The nature of any prescribed singularity is thus described completely by
specifying the components in a triplet of the form \[(S_{i},N_{j},B_{l}),\]
with the indices $i,j,l$ taking their respective values as above.

Note that there are a few types that cannot occur. For instance,
we cannot have an $(S_{2},N_{2},B_{3})$ singularity because that
would imply having $a<\infty$ ($N_{2}$) and $H\rightarrow\infty$
$(S_{2})$, while $3\left(
(\dot{a}/{a})^{2}+k/a^{2}\right)^{2}<\infty$ $(B_{3})$, at $t_{s}$
which is impossible since $|D|^{2}\rightarrow\infty$ at $t_{s}$
($k$ arbitrary).

A complete list of the impossible singularities is model-dependent and
is generally given by triplets  $(S_{i},N_{j},B_{l})$ where the
indices in the case of a flat or a closed universe take the values
$i=1,2$, $j=1,2,3$, $l=3,4$, whereas in the case of an open
universe they take the values $i=1,2$, $j=2,3$, $l=3,4$
(here by $S_{1}$ we denote for simplicity only the big bang case
in the  $S_{1}$ category). We thus see that some singularities
which are impossible for a flat or a closed universe become
possible for an open universe. Consider for example the triplet
$(S_{2},N_{1},B_{3})$ which means having
 $H\rightarrow\infty$, $a\rightarrow 0$ and
$$|D|^2=3\left(\left(\dot{a}/{a}\right)^{2}+k/{a^{2}}\right)^{2}<\infty$$
at $t_{s}$. This behaviour is valid only for some cases of an open universe.
With these exceptions, all other types of finite time singularities can
in principle be formed during the evolution of isotropic matter-filled models
in general relativity or other metric theories of gravity.

This classification allows us to speak consistently of strong, mild or weak
singularities. Note that the standard dust or
radiation-filled big bang singularities fall under the
\emph{strongest} singularity type, namely, the type
$(S_{1},N_{1},B_{1})$, since for example, in a flat universe filled with
dust we have
\bq
a(t)&\propto& t^{2/3}\rightarrow 0,
\quad (N_{1}),\\ H&\propto& t^{-1}\rightarrow\infty, \quad
(S_{1}),\\ |E|^{2}&=&3/4H^{4}\rightarrow\infty,\quad
|D|^{2}=3H^{4}\rightarrow\infty, \quad (B_{1}).
\eq
Our classification scheme is organized in such  a way that the character of the
singularities (i.e., the behaviour of the defining functions)
becomes milder as the indices of $S$, $N$ and $B$ increase. Milder
singularities in isotropic universes are thus expected to occur as
one proceeds down the singularity list.

It is the purpose of this classification to apply both to vacuum as
well as matter dominated models. In fact, the Bel-Robinson energy takes
care in a neat way the matter case. For instance, in fluid-filled
models the various behaviours $B_{i}$, $i=1,2,3,4$, of the Bel-Robinson energy
density can be instead given by asymptotic conditions imposed on the density and
pressure of the cosmological fluid:
\begin{description}
\item [$B_{1}$] $\Leftrightarrow$ $\mu\rightarrow \infty$ and $|\mu+3p|\rightarrow\infty$

\item [$B_{2}$] $\Leftrightarrow$ $\mu\rightarrow \infty$ and $|\mu+3p|<\infty$

\item [$B_{3}$] $\Leftrightarrow$ $\mu<\infty$ and $|\mu+3p|\rightarrow\infty$
$\Leftrightarrow$ $\mu<\infty$ and $|p|\rightarrow\infty$

\item [$B_{4}$] $\Leftrightarrow$ $\mu< \infty$ and $|\mu+3p|<\infty$
$\Leftrightarrow$ $\mu<\infty$ and $|p|<\infty$.
\end{description}
We can translate these conditions to asymptotic behaviours
in terms of $a,H$, depending on the value of $k$, for example,
\begin{enumerate}
\item If $k=0$, $\mu<\infty$ $\Rightarrow$ $H^{2}<\infty$, $a$ arbitrary

\item If $k=1$, $\mu<\infty$ $\Rightarrow$ $H^{2}<\infty$ and $a\neq 0$

\item If $k=-1$, $\mu<\infty$ $\Rightarrow$ $H^{2}-1/a^{2}<\infty$.
\end{enumerate}

It is useful at this point to name five particular members of our classification
of singularities
that are of special interest.
\begin{definition}
\label{sing_names}
Let $(S_{i},N_{j},B_{k})$ be a triplet describing the type of a singularity.\\
We call the triplet:

\item[i)] $(S_{1},N_{1},B_{1})$ ($(S_{2},N_{1},B_{1})$)
a past-collapse (future-collapse) singularity.

\item[ii)] $(S_{2},N_{3},B_{1})$ a big-rip singularity.

\item[ii)] $(S_{3},N_{2},B_{3})$ a sudden singularity.

\item[iv)] $(S_{3},N_{2},B_{1})$ a sudden-rip singularity.
\end{definition}
These types of singularities arise in some of the most popular cosmologies
appearing in the recent literature (a detailed analysis of such cosmologies
is given in Chapter 4).
%
%

In order to gain a deeper insight of
the dynamics on approach to the singularity we can make use of the
notion of asymptotic strength defined in the previous Section.
The use of asymptotic strength apart from offering us a clear
view of the way that the three main functions, designating the nature of a
singularity, compete asymptotically,
and helping us decide about how strong one singularity type is compared to
another, it also provides us with a refinement since it distinguishes
singularities that fall in the same type.
%
%
For example, as we have mentioned already all
standard big-bang type singularities met in the literature belong to our
past-collapse type $(S_{1},N_{1},B_{1})$. Finding and comparing the asymptotic strengths of
the singularities that may be present in dust and radiation filled universes we
find that their precise nature is different since the radiation ones are
characterised by the asymptotic strength $a<<H<<(|E|\leftrightarrow |D|)$,
whereas the dust ones have $a<<H<<(|E| \sim  |D|)$. We shall give many
examples of this later in Chapter 4.

\section{Closed trapped surfaces and Bel-Robinson energy}
In this section, we focus on the possible implications of the existence
of a closed trapped surface in an isotropic universe and show how its existence
can be traced in the behaviour of the Bel-Robinson energy. We
also illustrate through various examples how this association
can be used to shed some light on certain aspects of the dynamical evolution of
isotropic models.

A closed trapped surface is a two dimensional spherical surface $S$
in spacetime with the characteristic property that the
expansion $\theta$ of both the incoming and outgoing bundle of null geodesics
orthogonal to $S$ is negative as evaluated on $S$. That is, the light rays that
are emitted perpendicularly from and towards $S$ converge. This is feasible when
the surface $S$ encloses enough matter to attract all the incoming and outgoing
rays of light. Since nothing can travel faster than light, the matter inside $S$
becomes trapped within the boundary of this surface. If moreover the energy
conditions hold (gravity remains attractive), the radius of $S$ will decrease
gradually 
until a singularity is formed \cite{he73} p. 2-3.

In particular, sufficient conditions for the formation of a singularity are given
by the well-known singularity theorems and include the existence of a closed
trapped surface as well as appropriate energy and
causality conditions imposed on the geometrical and topological structure of the
spacetime (cf. \cite{he73}, p. 263-267): 

\begin{theorem}\label{sing_HE1}
Spacetime  $(\mathcal{V},g)$ cannot be null geodesically complete if:
\begin{description}
\item[(1)] $R_{ab}K^{a}K^{b}\geq 0$ for all null vectors $K^{a}$
\item[(2)] there is a non-compact Cauchy surface in $\mathcal{V}$
\item[(3)] there is a closed trapped surface in $\mathcal{V}$.
\end{description}
\end{theorem}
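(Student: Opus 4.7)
The plan is to argue by contradiction: I assume that $(\mathcal{V},g)$ is null geodesically complete and derive a topological inconsistency with the existence of a non-compact Cauchy surface. The strategy is the classical one of combining a focusing argument for null geodesics orthogonal to the trapped surface with a compactness/openness argument on the Cauchy surface.

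First, I would fix the closed trapped surface $S$ and study the congruences of future-directed null geodesics issuing orthogonally from $S$ in the two inward/outward normal directions. Writing Raychaudhuri's equation for the expansion scalar $\theta$ of such a congruence, together with condition (1), gives the inequality $d\theta/d\lambda\leq -\theta^{2}/2-R_{ab}K^{a}K^{b}\leq -\theta^{2}/2$, so that $\theta^{-1}$ grows at least linearly in the affine parameter $\lambda$. Since $S$ is trapped there is a uniform $\theta_{0}<0$ with $\theta|_{S}\leq \theta_{0}$ in both normal directions. Null completeness allows us to extend each generator along $\lambda$, so $\theta$ must diverge to $-\infty$ within affine parameter at most $2/|\theta_{0}|$, signaling a focal point of $S$ at or before this value.

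Second, I would use the standard fact that past its first focal point a null geodesic orthogonal to $S$ enters the chronological future $I^{+}(S)$ and therefore leaves the achronal boundary $\dot{J}^{+}(S)$. Consequently $\dot{J}^{+}(S)$ is contained in the image of the (compact) normal bundle of $S$ over $S$ under the exponential map, restricted to $\lambda\in[0,2/|\theta_{0}|]$, so $\dot{J}^{+}(S)$ is compact. Now condition (2) together with standard causal-theory results makes $(\mathcal{V},g)$ globally hyperbolic with Cauchy surface $\Sigma$. Choosing a continuous timelike vector field $X$ transverse to $\Sigma$ and flowing points of $\dot{J}^{+}(S)$ back to $\Sigma$ yields a continuous injective map $\pi:\dot{J}^{+}(S)\to \Sigma$, because $\dot{J}^{+}(S)$ is achronal and each integral curve of $X$ meets $\Sigma$ exactly once. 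Since $\dot{J}^{+}(S)$ is a topological $3$-manifold (being a closed achronal imbedded Lipschitz hypersurface), invariance of domain makes $\pi(\dot{J}^{+}(S))$ open in $\Sigma$, while compactness of $\dot{J}^{+}(S)$ makes it closed in $\Sigma$. Connectedness of $\Sigma$ together with the non-emptiness of $\dot{J}^{+}(S)$ then forces $\pi(\dot{J}^{+}(S))=\Sigma$, so $\Sigma$ is compact, contradicting (2).

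The main obstacle I would expect is keeping the causal-theoretic bookkeeping clean: one needs to make sure that the null generators of $\dot{J}^{+}(S)$ really do start orthogonally to $S$ and really do terminate no later than their first focal point, so that the exponential image is genuinely compact. This requires careful use of the properties of achronal boundaries and the result that the generators of $\dot{J}^{+}(S)$ are null geodesics without conjugate points up to their endpoints. Once this technical ingredient is in hand the focusing estimate and the invariance-of-domain step go through routinely, and no integration of matter fields or further structure on $(\mathcal{V},g)$ is needed beyond (1)--(3).
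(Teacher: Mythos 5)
Your proposal is correct in outline: it is the classical Penrose (1965) argument — Raychaudhuri focusing under the null convergence condition to make $\dot{J}^{+}(S)$ compact, followed by the open-and-closed projection onto the non-compact Cauchy surface — which is precisely the proof the paper defers to by citing \cite{he73}, pp.~263--267, rather than reproving. The only points deserving care, which you already flag, are the standard causal-theoretic facts that the generators of $\dot{J}^{+}(S)$ are null geodesics orthogonal to $S$ leaving the boundary no later than their first focal point, and that the achronal boundary is a topological $3$-manifold so that invariance of domain applies.
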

\begin{theorem}\label{sing_HE2}
Spacetime  $(\mathcal{V},g)$ is not timelike and null geodesically complete if:
\begin{description}
\item[(1)] $R_{ab}K^{a}K^{b}\geq 0$ for every non-spacelike vector $K^{a}$
\item[(2)] every non-spacelike geodesic contains a point at which
$K_{[a}R_{b]cd[e}K_{f]}K^{c}K^{d}\neq 0$,
where $K^{a}$ is the tangent vector to the geodesic
\item[(3)] the chronology condition holds on $\mathcal{V}$ (i.e. there are no
closed timelike curves)
\item[(4)] there exists 
a closed trapped surface.
\end{description}
\end{theorem}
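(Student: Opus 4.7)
The plan is to prove the statement by contradiction along the classical Hawking--Penrose template: assume $(\mathcal{V},g)$ is both timelike and null geodesically complete, and derive a contradiction from (1)--(4). The argument splits naturally into a local focusing step that uses (1)--(2), a compactness step that uses (1) and (4), and a global causal-structure step that closes the argument with (3).

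For the focusing step, I would apply Raychaudhuri's equation to an arbitrary hypersurface-orthogonal congruence of causal geodesics,
\begin{equation*}
\frac{d\theta}{ds}=-\frac{\theta^{2}}{n-1}-\sigma_{ab}\sigma^{ab}+\omega_{ab}\omega^{ab}-R_{ab}K^{a}K^{b}.
\end{equation*}
Condition (1) kills the sign of the last term, forcing $d\theta/ds\le -\theta^{2}/(n-1)$ whenever the congruence is vorticity-free; any negative initial expansion then drives $\theta\to-\infty$ within a bounded affine parameter. The generic condition (2) ensures that somewhere along every complete causal geodesic the tidal contribution $K_{[a}R_{b]cd[e}K_{f]}K^{c}K^{d}$ does not vanish, so a suitable Jacobi field can always be tilted to produce such an initial contraction. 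Consequently, \emph{every} complete causal geodesic of $(\mathcal{V},g)$ must contain a pair of conjugate (or focal) points, and by standard Morse-type arguments it cannot remain maximizing past them.

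For the compactness step, I would use the trapped surface $T$ supplied by (4): by definition both null expansions on $T$ are strictly negative, so integrating Raychaudhuri once more under (1) yields a uniform affine-parameter bound before a focal point appears on each null geodesic orthogonal to $T$. The future horismos $E^{+}(T)$ is therefore the image of a compact set under a continuous map and is itself compact. Combining this with the chronology condition (3) permits the construction of a compact Cauchy development $D^{+}(E^{+}(T))$, and the standard limit-curve lemma then extracts a future-inextendible causal geodesic $\gamma$ achieving the supremum of Lorentzian distance between two achronal sections inside this compact region. Maximality of $\gamma$ forbids conjugate points in its interior, contradicting the focusing conclusion of step one.

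The main obstacle will be the global-topological step: showing that $E^{+}(T)$ is compact and that a genuinely \emph{maximal} inextendible timelike geodesic survives the extraction requires carefully assembling achronal-boundary theory, the limit-curve lemma, and the chronology hypothesis in a robust order, since at each stage one has to exclude pathologies such as escape to infinity through a non-globally-hyperbolic region. Once that maximal geodesic is in hand, however, the clash with the forced appearance of conjugate points along every complete causal geodesic is immediate, completing the contradiction.
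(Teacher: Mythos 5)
The paper offers no proof of this statement: it is the classical Hawking--Penrose singularity theorem, quoted from Hawking and Ellis \cite{he73} (pp.~263--267) and used as a black box, so there is no in-paper argument to compare yours against. Your outline does follow the correct classical template --- conjugate points on every complete causal geodesic from (1)--(2), compactness of the future horismos $E^{+}(T)$ from (1) and (4), and a limit-curve contradiction mediated by the causality hypothesis (3) --- which is indeed the structure of the cited proof.

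Two steps in your sketch, however, would fail if made precise. First, $D^{+}(E^{+}(T))$ is \emph{not} compact in general, and the argument does not need it to be: what one actually establishes is the existence of an inextendible timelike curve $\gamma$ in the interior of the domain of dependence of the compact achronal set $E^{+}(T)$, together with the fact that the maximal causal geodesics joining suitably chosen pairs of points receding along $\gamma$ all intersect the \emph{compact} set $E^{+}(T)$ itself; it is this intersection property, not compactness of the development, that lets the limit-curve lemma produce an inextendible causal geodesic maximal between every pair of its points and hence free of conjugate points. Second, hypothesis (3) is only the chronology condition, whereas your extraction of a maximizing geodesic and the exclusion of ``escape to infinity'' require strong causality; the essential intermediate step you omit is to upgrade chronology to strong causality using the fact --- already available from (1), (2) and the assumed completeness --- that every inextendible null geodesic contains a pair of conjugate points, so that a closed or almost-closed causal curve would violate achronality of null geodesics. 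With those two repairs your argument becomes the standard Hawking--Penrose proof; as written, the global-topological step you yourself flag as the main obstacle is exactly where it breaks.
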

We see that the formation
of a closed trapped surface is triggered by the presence
of a strong gravitational field.
A situation where this is possible is during the gravitational
collapse of a star. In particular, when a star of mass twice that of the sun
has burned out all the nuclear fuel which was supporting it against the influence
of its own gravity and starts collapsing, a singularity becomes
inevitable after a closed trapped surface is formed.
A closed trapped surface is believed to have occurred even in the past
of the universe itself. Looking back in time the universe enclosed enough matter
to provoke a past trapped surface that led to a past-collapse singularity in the
beginning of the universe.
\subsection{Existence}
Consider a past-directed bundle of radial null geodesics in an isotropic
universe with metric given by Eqs. (\ref{frw_metric}), (\ref{f(r)}). Then
\be
ds^{2}=\frac{d\theta}{dv}=\frac{d\phi}{dv}=0,
\ee
where $v$ is an affine parameter along these curves.
The geodesic equation
\be
\frac{d^{2}x^{s}}{dv^{2}}+\Gamma_{ab}^{s}\frac{dx^{a}}{dv}\frac{dx^{b}}{dv}=0,
\ee
with $x^{s}=(t,r,\theta,\phi)$, for this type
of geodesics has the following non-zero components:
\bq
\frac{d^{2}t}{dv^{2}}+\Gamma_{rr}^{t}\left(\frac{dr}{dv}\right)^{2}&=&0,\\
\frac{d^{2}r}{dv^{2}}+2\Gamma_{rt}^{r}\frac{dr}{dv}\frac{dt}{dv}&=&0.
\eq
After substitution of the Christoffel symbols we obtain
\bq
\label{nullgeod_t}
\frac{d^{2}t}{dv^{2}}+\dot{a}(t)a(t)\left(\frac{dr}{dv}\right)^{2}&=&0,\\
\label{nullgeod_r}
\frac{d^{2}r}{dv^{2}}+2\frac{\dot{a}(t)}{a(t)}\frac{dr}{dv}\frac{dt}{dv}&=&0.
\eq
Eq. (\ref{nullgeod_r}) can be written in the form
\be
\frac{d^{2}r/dv^{2}}{dr/dv}dv=-2\frac{\dot{a}(t)}{a(t)}dt,
\ee
and integrated out to give
\be
\frac{dr}{dv}=\pm\frac{C}{a^{2}(t)},
\ee
where $C$ is an arbitrary constant and the sign depends on whether the null
geodesics are incoming $(+)$, or outgoing $(-)$.
Inserting this in Eq. (\ref{nullgeod_t}) (with $C=1$),
multiplying by $dt/dv$ and integrating, we find
\be
\frac{dt}{dv}=\pm\frac{1}{a(t)},
\ee
from which we choose the minus sign since we are looking for the
\emph{past directed} family of
null geodesics. The resulting tangent vector field is then
\be
k^{a}=\frac{dx^{a}}{dv}=\left(-\frac{1}{a(t)},\pm \frac{1}{a^{2}(t)},0,0\right),
\ee
which gives
\be
g_{ab}\frac{dx^{a}}{dv}\frac{dx^{b}}{dv}=-\left(\frac{dt}{dv}\right)^{2}+a^{2}(t)
\left(\frac{dr}{dv}\right)^{2},
=0
\ee
as desired. The divergence of the family of geodesics is calculated to be
\be
k^{a}_{;a}=\frac{1}{\sqrt{-g}}\frac{\partial}{\partial x^{a}}(\sqrt{-g}k^{a})=
\frac{1}{a^{3}(t)f^{2}(r)\sin\theta}\left(
\frac{\partial}{\partial t} (-a^{2}(t)f^{2}(r)\sin\theta)+
\frac{\partial}{\partial r} (\pm a(t)f^{2}(r)\sin\theta)\right)
\ee
i.e.,
\be
\label{cts}
k^{a}_{;a}=\frac{2}{a^{2}(t)}\left(-\dot{a}(t)\pm \frac{f'(r)}{f(r)}\right),
\ee
where $f(r)=\sin r,r$, or $\sinh r$ for $k=1,0$, or $-1$ respectively,
\cite{ellis}. A closed trapped surface is then formed if the divergence of
both families of null geodesics becomes negative, that is if
\be
\label{cts}
\dot{a}(t)>\left|\frac{f'(r)}{f(r)}\right|,
\ee
for some values of $r$, $t$.

An interesting fact for our purposes is that the above condition for the
existence of a closed trapped surface has a decisive effect on the electric part $|D|$
of the Bel-Robinson
energy. This is seen clearly if we use Eqs. (\ref{d_2})-(\ref{d_3}) to write
$|D|$ in the form
\be
\label{d_4}
|D|=\frac{\sqrt{3}}{a^{2}(t)}
\left|\left(\dot{a}(t)-\frac{f'(r)}{f(r)}\right)
\left(\dot{a}(t)+\frac{f'(r)}{f(r)}\right)+
\frac{1}{f^{2}(r)}\right|,
\ee
and then use the closed trapped surface condition (\ref{cts}).
In this case, the relation (\ref{cts}) becomes equivalent to the inequality
\be
\label{d_5}
|D|>\frac{\sqrt{3}}{a^{2}(t)f^{2}(r)},
\ee
from which it follows that collapse singularities (as predicted by the existence
of a trapped surface) are characterised by a divergent Bel-Robinson energy.

In the next Subsection we illustrate how some of the results found here help to
determine the evolution of various model universes.
\subsection{Examples}
Our first three examples illustrate the result that in the case when
the CTS condition is combined with the null energy condition in spatially
flat universes, an initial or future singularity is predicted to occur
and this is reflected in the behaviour of the Bel-Robinson energy.
In contrast, our last example \ref{cts_ds} shows that
even though the existence of a closed trapped surface is usually associated with
the formation of a singularity this need not always be the case.
This was recently shown by Ellis in \cite{ellis}
who pointed out that a closed universe can escape the initial singularity
provided that the non-spacelike energy condition does not hold. The null energy
condition jointly with the condition for the existence of a closed trapped surface are not
sufficient to imply a singularity in the case of a closed universe
since Theorem \ref{sing_HE1} applies \emph{only} in the case of an open or a flat
universe.
\newpage
\begin{example}[Graduated inflation]
\end{example}
Consider the flat graduated inflationary model developed in
\cite{ba}, a universe filled with a fluid with an equation of state
$\mu+p=\gamma \mu^{3/4}$ and $\gamma>0$
(for more details on this family of models see Subsection 4.1.3). The null energy
condition\footnote{This condition leads to Eq.
(\ref{nullcondfrw}) which is equivalent to $\mu+p\geq 0$ for the case of a
fluid of density $\mu$ and pressure $p$.}
is then automatically satisfied. The scale factor for this universe is given by
\be
a(t)=\exp\left(-\frac{16}{3^{3/2}\gamma^{2}t}\right),
\ee
(this is derived by setting $\gamma=3/4$ in Eq. (\ref{scale_grad}) of
Subsection 4.1.3), and the inequality Eq. (\ref{cts}) for this case is
equivalent to
\be
\label{gradinfcts}
\dot{a}(t)=\frac{16}{3^{3/2}\gamma^{2}}\frac{1}{t^{2}}
\exp\left(-\frac{16}{3^{3/2}\gamma^{2}t}\right)>\frac{1}{r}.
\ee
Hence we find that the condition for a CTS
becomes valid for large values of $r$ leading to the existence of \emph{past} CTSs.
Therefore the assumptions of the singularity theorem \ref{sing_HE1} are
fulfilled making this model null geodesically incomplete in the past.
\begin{example}[Massless scalar field]
\end{example}
A similar type of singularity is studied in \cite{fo} where we meet a flat model
filled with a massless scalar field (see Subsection 4.1.1). If we consider
a scalar field with $\dot{\phi}^{2}\geq 0$ then the null energy condition
is satisfied.
The inequality (\ref{cts}) for this case reads
\be
\dot{a}(t)=\frac{1}{3t^{2/3}}>\frac{1}{r},
\ee
which is again true for large values of $r$. Then using the singularity
theorem \ref{sing_HE1} we conclude that this model is singular, and by the
Bel-Robinson energy technique we arrive at a
past-collapse singularity of type $(S_{1},N_{1},B_{1})$.
\begin{example}[Sudden singularity]
\end{example}
A different situation is described by Barrow in \cite{ba04} where he considers
a model with scale factor
\be
a(t)=\left(\frac{t}{t_{s}}\right)^{q}(a_{s}-1)+1-
\left(1-\frac{t}{t_{s}}\right)^{n},
\ee
with $1<n<2$, $0<q\leq 1$ and $t_{s}$ arbitrary but fixed from the beginning.
This model experiences a sudden singularity
(see Definition \ref{sing_names} ii) and Section 4.4)
at the future time $t_{s}$ with $a(t)\rightarrow a_{s}$,
$\dot{a}(t)\rightarrow \dot{a}_{s}$ and
$\ddot{a}(t)\rightarrow -\infty$. For this model we find that the non-spacelike
energy condition\footnote{As we saw for a null vector this condition is
equivalent to $\mu+p\geq 0$.
In order to find an analogous inequality for a timelike vector we
multiply the Einstein equations by a unit timelike vector $k^{a}$:
$R_{ab}k^{a}k^{b}=(T_{ab}-(1/2) T g_{ab})k^{a}k^{b}=T_{ab}k^{a}k^{b}+(1/2) T$.
Then $R_{ab}k^{a}k^{b}\geq 0\Leftrightarrow T_{ab}k^{a}k^{b}\geq -1/2 T$.
For a perfect fluid this translates to $\mu+3p=-6\ddot{a}/a\geq 0$. Therefore the
non-spacelike energy condition holds iff $\mu+3p\geq 0$ and $\mu+p\geq 0$.}
is valid because of $\ddot{a}(t)\rightarrow -\infty$,
while for the case of a closed universe there exists a future CTS (the
analogous inequality of (\ref{cts}) is $\dot{a}(t)<-|f'(r)/f(r)|$)
since the relation
\be
\dot{a}(t)=\frac{q}{t_{s}^{q}}{t^{q-1}}(a_{s}-1)+
\frac{n}{t_{s}^{n}}(t_{s}-t)^{n-1}<-|\cot r|,
\ee
is satisfied for $r=\pi/2$, $a_{s}<1$ and $0<t<t_{s}$. The singularity theorem
\ref{sing_HE2} now applies and guarantees that this model is
timelike and null geodesically incomplete in the future, while
$|E|\rightarrow \infty$ and $|D|<\infty$. 
\begin{example}[De Sitter universe]
\label{cts_ds}
\end{example}
The possibility of evading the singularity is feasible in a de Sitter
universe for which $a(t)=A\cosh H t$, $k=1$, $f(r)=\sin r$,
so that the existence condition given by Eq. (\ref{cts}) is satisfied if and
only if
\be
\label{tr.1}
AH\sinh H t>|\cot r|,
\ee
which holds true for any $t>0$ and $r=\pi/2$ \cite{ellis}. This implies the existence of more
than one CTSs: every $2$-sphere ($t,r=$ const.) with
$t>0$ and area greater than
\be
A_{s}=4\pi A^{2}\cosh^{2}Ht\sin^{2}(r_{s}),
\ee
where $\cot r_{s}=AH\sinh H t$, is a CTS. Nevertheless the spacetime is complete.
The timelike energy condition is violated, $(\mu+3p)=-\ddot{a}(t)/a(t)<0$, and so
Theorem \ref{sing_HE2} 
does not apply. However, the null energy
condition (\ref{nullcond})
holds true 
but Theorem \ref{sing_HE1} does not apply because the spatial slices are now
compact. Then we find that since
\be
|E|=\sqrt{3}\left|\frac{\ddot{a}(t)}{a(t)}\right|=\sqrt{3}H^{2},
\ee
and
\be
|D|=\sqrt{3}\left|\frac{\dot{a}^{2}(t)}{a^{2}(t)}+
\frac{1}{a^{2}(t)}\right|=\sqrt{3}
\left|H^{2}\tanh^{2} H t+\frac{1}{A^{2}\cosh^{2} H t}\right|,
\ee
they always remain finite indicating the complete character of the spacetime in
this case (note that the assumptions of the completeness theorem of
\cite{skotmg} which we discussed in the end of Section 2.2 are satisfied).
\section{Discussion}
We started off in this chapter by giving a classification
of the singularities that can occur in an isotropic universe based on the
behaviour of the Hubble parameter. However, such a classification cannot lead
to complete results and we have found it necessary to extend this scheme by
including succeedingly the behaviours of the scale factor and of the Bel-Robinson
energy. In this case, we have found that the resulting
behaviours of the three functions, $H$, $a$ and $\mathcal{B}$
\emph{taken together} exhaust the types of singularities that are
possible to form during the evolution of an isotropic universe.

The resulting classification is described by triplets of the form
$(S_{i},N_{j},B_{l})$ where the $S$ category monitors the
asymptotic behaviour of the expansion rate, closely related to the
extrinsic curvature of the spatial slices, the $N$ category that of the scale
factor, describing in a sense what the whole of space eventually
does, while the $B$ category describes how the various matter fields contribute
to the evolution of the geometry on approach to the singularity. We know
(cf. \cite{chc02,ycb02}) that all these quantities need to be
uniformly bounded to produce geodesically complete universes.
Otherwise, the whole situation can be very complicated and we have
exploited what can happen in such a case in the
relatively simple geometry considered here.

Our scheme not only covers all the recently discovered types
of singularities but it also predicts many possible new ones. For
example, in the case of a flat isotropic universe the
classification of \cite{noj1} provides  us with four main types
of singularities. These four types can be identified with four
particular $(S_{i},N_{j},B_{k})$ triplets of our scheme: Namely the `big
rip' type of \cite{noj1} characterised by $a\rightarrow\infty$,
$\mu\rightarrow\infty$ and $|p|\rightarrow\infty$ at $t_{s}$ is an
$(S_{2},N_{3},B_{1})$ singularity; their `sudden' singularity
described by $a\rightarrow a_{s}<\infty$,
$\mu\rightarrow\mu_{s}<\infty$ and $|p|\rightarrow\infty$ at
$t_{s}$ is an $(S_{3},N_{2},B_{3})$; further, their type
III singularity, namely $a\rightarrow a_{s}<\infty$,
$\mu\rightarrow\infty$ and $|p|\rightarrow\infty$ at $t_{s}$, is
clearly an $(S_{2},N_{2},B_{1})$ type singularity, while type IV
singularities of \cite{noj1} with $a\rightarrow a_{s}<\infty$, $\mu\rightarrow 0$
and $|p|\rightarrow 0$ at $t_{s}$, all belong to the
$(S_{3},N_{2},B_{4})$ category.

Having all possible singularity types in the unified form expounded in
this chapter has the added advantage that we can consistently
compare the various different types as we asymptotically approach the
time singularity. In this case we saw in detail that the relative
\emph{strength} of the
functions describing the singularity type becomes an important
factor in order to distinguish between all possible behaviours.

It is natural to consider the extension of our classification
scheme in the context of more general classes of cosmological spacetimes.
We believe that an analysis of this more complicated case is still
feasible using the techniques of the present work.
In particular, the first step towards that direction would be to examine
the family of homogeneous but anisotropic universes. The nature
of a singularity arising in such a universe would be given by a
triplet analogous to that of the isotropic case but now the number of
possible types of singularities will increase. This will result from the
existence of more than one scale factors describing the model and also from the
fact that the magnetic parts of the Bel-Robinson energy will not be zero as they
were in the isotropic case studied here.

A further step in that direction would be to examine inhomogeneous spacetimes
where the space gradient of the lapse will be a function not only of the
time but also of the \emph{space} variables and hence it will not
vanish automatically. The nature of a singularity would then be given by
a quadruplet with the new component describing the possible behaviours of the
space gradient of the lapse. These behaviours are predicted from the
contrapositive of the completeness theorem \ref{compl_arb} and they include
both a diverging $|\nabla N|_{g}$ in a finite time (giving rise to a
possible \emph{blow-up lapse singularity}) as well as a $|\nabla N|_{g}$
which will be finite only
for a finite interval of proper time corresponding to a
\emph{sudden lapse singularity}. The analysis of all these cases lies beyond the
work in the present thesis.
\chapter{Singularities in isotropic cosmologies}
\section{Introduction}
In the previous chapter we derived a complete classification of spacetime
singularities that are in principle possible to arise in isotropic cosmologies. The purpose
of this chapter is to look for physical circumstances that give rise to four
particularly important members of our classification given by
Definition \ref{sing_names}; ideally we would be content with prototypes of
cosmologies with every possible singularity type that is predicted by our
classification scheme.

The analysis of the various cosmologies appearing in this chapter is organised
as follows. In
Section 4.2 we study collapse singularities that arise in
recently studied cosmologies containing a massless scalar field (Subsection 4.2.1),
a scalar filed with an exponential potential (Subsection 4.2.2), or a graduated
inflationary potential (Subsection 4.2.3). Section 4.3 provides an
analysis of big-rip singularities which are met in models containing dark energy
in the form of phantom fields (Subsection 4.3.1), or Chaplygin gases
(Subsection 4.3.2).
This type of singularity offers a new possibility of a violent end for the
universe towards which every bound existing structure gets `ripped apart'.
In Section 4.4, we explore the character of sudden singularities that may arise in
cosmologies filled with a fluid having either \emph{unconnected} pressure and
density (Subsection 4.4.1), or alternatively, obeying an equation of state
(Subsection 4.4.2).
In Subsection 4.4.3, we analyse models that exhibit milder than sudden
singularities. Finally, in Section 4.5, we discuss sudden-rip type
singularities.

In each one of these sections we prove theorems giving necessary and sufficient
conditions for the occurrence of the various types of singularity that are
analysed.
We expect the proofs of all these theorems to be quite
straightforward for we have now \textit{already} identified  the
type of singularity that we are looking for in accordance to our
classification. Proving such theorems \emph{without} this knowledge
would have been a problem of quite a different order.

The usefulness of these theorems lies in that they
answer the question of whether or not the behaviours met in known
cosmological models described by exact solutions (which as a rule
have a \textit{flat} spatial metric ($k=0$)) continue to be valid
in universes having nonzero values of $k$, or described by solutions
which are more general than being exact in the sense that some or all of the
arbitrary constants present in a solution remain arbitrary.
We will see that whenever the behaviour met in flat models passes over to curved
ones
either the curvature term in the Friedman equation turns out to be
subdominant compared to the density term, or, in any case, cannot significantly
alter the behaviour of the Hubble parameter $H$.
%

It is worth mentioning that all the different example cases of cosmological
models analysed in this Chapter have been a primary source of motivation for
finding more general theoretical criteria which determine their evolution
and mark their basic characteristics.
%
\section{Collapse singularities}
Following the Definition \ref{sing_names} given in the previous chapter,
by past or future collapse singularities we mean those that arise in
solutions with a scale factor tending to zero at an initial time (big-bang time),
or within a finite future
time (big-crunch time) respectively, while the Hubble expansion rate and the
electric parts of the Bel-Robinson energy diverge. Originally, collapse
singularities were discovered in simple FRW universes filled with dust or
radiation (cf. \cite{wald}, pp. 96-101). Here, we will instead focus on \emph{recently} analysed models that
involve the presence of a scalar field (cf. \cite{fo},
\cite{mel}, \cite{ba}, \cite{ba06}).

Scalar fields play a substantial role in cosmology. In the framework
of inflation, scalar fields having a self-interaction potential are
used to model the matter content of the early universe and their dynamical
evolution monitors the mechanism of inflation (cf., e.g., \cite{early},
pp. 261-317). However, the role of scalar fields
is not limited to early universe scenarios. They are also crucial in the
present phase of evolution of the universe since scalar fields
actualise a possible dark-energy component that is believed to drive the
presently observed accelerating
expansion of the universe (cf. \cite{gonzales}). Furthermore, in grand unified theories scalar fields
are met in the spontaneous symmetry breaking (cf. \cite{early}, pp. 195-255,
\cite{coleman}) and other fundamental mechanisms proposed within these theories
(cf. \cite{early}).

In a homogeneous universe, scalar fields are functions of the proper time only.
The energy-momentum tensor of a scalar field is equivalent to that of a perfect
fluid with density $\mu_{\phi}$ and pressure $p_{\phi}$ given by (cf. Appendix D),
\be
\mu_{\phi}=\frac{\dot{\phi}^{2}}{2}+V(\phi),\quad
p_{\phi}=\frac{\dot{\phi}^{2}}{2}-V(\phi).
\ee
The relation between the pressure and the energy density of the scalar field is
therefore 
$$ p_{\phi}=\mu_{\phi}-2V,$$
which is not an equation of state in the usual sense \cite{mad}.

However, in the special case of a massless scalar field, $V(\phi)=0$,
we have an equation of state of the form $p_{\phi}=\mu_{\phi}$, that is
it behaves as stiff matter.
If on the other hand, the energy of the scalar field is purely potential, or
dominates over the kinetic energy, then the scalar field is approximately
described by the equation of state $p_{\phi}=-\mu_{\phi}$, and behaves therefore
like a cosmological constant. Within the framework of inflationary cosmology for
example, the energy of the scalar field becomes
\emph{mainly} potential as soon as the field enters the slow-roll phase. This
means that $\dot{\phi}^{2}\ll V(\phi)$, or equivalently $\mu_{\phi}+3p_{\phi}<0$, and
hence the dominant energy condition is violated in this case.
Therefore the singularity theorem \ref{sing_HE1} does not apply
and the initial singularity is not an inevitable consequence of inflationary
cosmologies.
Nevertheless, there exist interesting, non-trivial scalar field
cosmologies that cannot avoid the initial singularity in this way and in the
next subsections we shall focus on the analysis of such cosmologies.
\subsection{Massless scalar field}
An example of a singular scalar field cosmology that possesses a past-collapse
singularity can be found in \cite{fo}. This model is a flat FRW
universe with a massless scalar field, and
the Friedman and continuity equations read
\be
3H^{2}=\frac{1}{2}\dot{\phi}^{2}, \quad
\ddot{\phi}+3H\dot{\phi}=0,
\ee
from which we readily find:
\be
\label{fo_1}
H=\frac{1}{3t}, \quad
\phi=\pm\sqrt{\frac{2}{3}}\ln\frac{t}{c}.
\ee
Indeed, this universe emerges from a past-collapse singularity
at $t=0$, since
$a\propto t^{1/3}\rightarrow 0$, 
$H\rightarrow \infty$, while 
$|E|^{2}=\dot{\phi}^{4}/3\rightarrow\infty$ and
$|D|^{2}=3H^{4}\rightarrow\infty$. 

In \cite{fo} Foster also found a condition that should be satisfied so that
solution (\ref{fo_1}),
approximates that with an arbitrary potential $V(\phi)$ in the limit
$t\rightarrow 0$. This condition is derived by simply substituting the solution
given by Eq. (\ref{fo_1})
in the Friedmann equation that is satisfied by a scalar field with
potential $V(\phi)$, i.e. $3H^{2}=V(\phi)+\dot{\phi}^{2}/2$, and equating the
resulting terms. One then finds that
\be
\lim_{t\rightarrow 0}t^{2}V\left(\pm\sqrt{\frac{2}{3}}\ln\frac{t}{c}\right)=0,
\ee
or
\be
\lim_{\phi\rightarrow\pm\infty}e^{-\sqrt{6}|\phi|}V(\phi)=0.
\ee
Hence the solution given by Eq. (\ref{fo_1})
approximates the solution for a scalar field with potential $V(\phi)$ when the
potential decreases slower than $e^{-\sqrt{6}|\phi|}$.
The following theorem implies that this type of singularity may also arise
in more general, curved universes \cite{iklaoud_3}, \cite{iklaoud_4}.

\begin{theorem} \label{6}
A necessary and sufficient condition for the occurrence of a past-collapse
singularity of type $(S_{1},N_{1},B_{1})$ at time $t_{1}$ in an isotropic universe with a
massless scalar field is that $\dot{\phi}\rightarrow\infty$ at $t_{1}$.
\end{theorem}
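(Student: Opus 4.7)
The plan is to reduce everything to one elementary conservation law. For a massless scalar field $V \equiv 0$, so $\mu_{\phi} = p_{\phi} = \dot{\phi}^2/2$, and the continuity equation (\ref{cons_fluid}) collapses to $\dot{\mu} = -6H\mu$. This integrates, independently of $k$, to the identity $\mu \propto a^{-6}$, equivalently $\dot{\phi}^2 \propto a^{-6}$. This single identity is what does essentially all the work: it immediately equates the divergence of $\dot{\phi}$ at $t_1$ with $a(t_1) = 0$, i.e.\ with the condition $N_1$.

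Sufficiency then falls out by chasing the identity through the Bel-Robinson quantities. Assuming $\dot{\phi} \to \infty$ at $t_1$, I would first deduce $a(t_1) = 0$ (this is $N_1$). For $B_1$ I would substitute $\mu = \dot{\phi}^2/2$ and $\mu + 3p = 2\dot{\phi}^2$ into the acceleration equation (\ref{accfrw}), obtaining $|E|^2 = 3(\ddot{a}/a)^2 = \dot{\phi}^4/3 \to \infty$, and into the Friedman constraint (\ref{frw_H}) to show $|D|^2$ scales like $\mu^2 \propto \dot{\phi}^4 \to \infty$. For $S_1$ I would avoid any pointwise estimate of $H$ and simply observe $\int_{t_1}^{t} H\, ds = \ln a(t) - \ln a(t_1) = +\infty$ since $a(t_1) = 0$, so $H$ fails to be integrable on every $[t_1,t]$. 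Necessity is the same identity read backwards: given a $(S_1,N_1,B_1)$ singularity, $N_1$ hands me $a \to 0$, and $\dot{\phi}^2 \propto a^{-6}$ forces $\dot{\phi} \to \infty$.

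The only real obstacle is the curved cases $k = \pm 1$: one might worry that a $k/a^2$ contribution could conspire to cancel the divergence inside $|D|^2 = 3(H^2 + k/a^2)^2$ or inside $H^2$ itself, of the sort flagged in Chapter~3 as an $(S_2,N_1,B_3)$-type pathology that is available only to open models. This worry evaporates because $\mu \propto a^{-6}$ grows strictly faster than $|k|/a^{2}$ as $a \to 0$, so the matter term controls the asymptotics in both the Friedman and the Bel-Robinson expressions; the divergences $H \to \infty$, $|E| \to \infty$, $|D| \to \infty$ therefore transfer from the flat model of \cite{fo} to the curved cases verbatim. This is exactly the generic ``matter-density-dominates-curvature-at-the-singularity'' mechanism advertised in Section 4.1, which is what makes the theorem clean and $k$-independent.
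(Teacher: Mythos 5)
Your proof is correct and follows essentially the same route as the paper's: both hinge on the single identity $\dot{\phi}\propto a^{-3}$ (equivalently $\mu\propto a^{-6}$), derived from the conservation law, and then read off $N_{1}$, the divergence of $H$, $|E|^{2}=\dot{\phi}^{4}/3$ and $|D|^{2}=\mu^{2}/3$, with the converse obtained by running the identity backwards. Your version is marginally more careful in two places the paper glosses over -- the explicit verification of $S_{1}$ via $\int_{t_{1}}^{t}H\,d\tau=\ln\bigl(a(t)/a(t_{1})\bigr)=+\infty$, and the remark that $\mu\propto a^{-6}$ dominates the curvature term $k/a^{2}$ so no cancellation can occur for $k=\pm 1$ -- but these are refinements of the same argument, not a different one.
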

\begin{proof}
We first prove necessity.
Integrating the continuity equation
$
\ddot{\phi}+3H\dot{\phi}=0,
$
we find that $\dot{\phi}\propto a^{-3}$. Since
$\dot{\phi}\rightarrow\infty$, we have that $a\rightarrow 0$. Because
\be
H^{2}=\frac{\mu}{3}-\frac{k}{a^{2}}\rightarrow\infty,
\quad \textrm{as} \quad t\rightarrow t_{1},
\ee
$H$ becomes unbounded at $t_{1}$. In addition, since
\be
|D|^{2}=\frac{\mu^{2}}{3}=\frac{\dot{\phi}^{4}}{12}\rightarrow\infty,
\ee and
\be
|E|^{2}=\frac{1}{12}(\mu+3p)^{2}=\frac{\dot{\phi}^{4}}{3}\rightarrow\infty,
\ee
as $t\rightarrow t_{1}$, both $|D|$ and $|E|$ diverge there.
Conversely, assuming a past-collapse singularity of type $(S_{1},N_{1},B_{1})$
at $t_{1}$, we have from condition $B_{1}$
($\Leftrightarrow \mu\rightarrow\infty, \, |\mu+3p|\rightarrow \infty$) that
$\mu\rightarrow\infty$ and so $\dot{\phi}^{2}\rightarrow\infty$ as
$t\rightarrow t_{1}$.
\end{proof}

It follows that the asymptotic strength of the singularity in this model
(described by both the exact solution and the above theorem) is
given by: $a<<H<<(|E| \sim  |D|)$.
\subsection{Multiple exponential potential}
Since there is not as yet any unique well-established fundamental theory
dictating and predicting an
exact form of the potential $V(\phi)$, there are many possible candidate
potentials currently studied depending on the underlying theory they intend to
describe.
For example, in \cite{mel} the authors consider a scalar field with the `multiple'
exponential potential of the form
$$V(\phi)=W_{0}-V_{0}\sinh(\sqrt{3/2}\phi),$$
where $W_{0}$ and $V_{0}$ are arbitrary constants
The scalar field inhabits a flat FRW universe
containing also a separately conserved pressureless fluid. In this context, the
field equations become
\bq
\label{fr_mel}
3H^{2}=\left(\frac{\dot{\phi}}{2}+V(\phi)+\mu\right), \\
\label{ray_mel}
2\dot{H}+3H^{2}=-\left(\frac{\dot{\phi}}{2}-V(\phi)\right),\\
\ddot{\phi}+3H\dot{\phi}+V'(\phi)=0,\\
\dot{\mu}+3H\mu=0.
\eq

Following \cite{mel}, we split the scale factor in a product
form by setting ${a}^{3}=xy$, while the scalar field is
written as $\phi=\sqrt{2/3}\log (y/x)$ with $x$ and $y$ positive.
Setting
\bq
x&=&C[\exp\chi_1\cos\chi_{2}+\exp (-\chi_{1})\cos\chi_{2}], \\
y&=&C[\exp\chi_1\sin\chi_{2}+\exp (-\chi_{1})\sin(-\chi_{2})]
\eq
(here, $\chi_1 =w_{1}(t-t_{0}), \chi_{2}=w_{2}(t-T_{0})$,
$C>0$, $t_{0}$ is an arbitrary constant, $T_{0}$ is the
`initial' time and $w_{1}$, $w_{2}$ positive parameters such that
$w_{1}^{2}-w_{2}^{2}=3/4 W_{0},\,\,
2w_{1}w_{2}=3/4 V_{0}$),
we find that for large and positive values of the time parameter $t$ the scale
factor becomes
\be
a=C^{2/3} \exp (2/3 \chi_{1})\cos ^{1/3}
\chi_{2}\sin^{1/3}\chi_{2},
\ee
which is obviously divergent.

At the finite time value $T_{s}=\pi/(2w_{2})+T_{0}$,
we have that $a\rightarrow 0$, $H\rightarrow w_{2}(\cot\chi_2 -\tan\chi_2 )/3
\rightarrow\infty$,
and consequently we find that
\be
|D|\rightarrow \infty,
\ee
and also
\be
|E|\rightarrow |2w_{2}(2(\cot\chi_{2}-\tan\chi_{2})-
\frac{w_{2}}{9}(\cot^{2}\chi_{2}+\tan^{2}\chi_{2}))|\rightarrow\infty.
\ee
This model therefore serves as a nice example that illustrates the structure of
the future-collapse type singularity predicted earlier in our classification.
\subsection{Graduated inflation}
In the original inflationary scenario the equation of state $p=-\mu$ that
describes the scalar field during the slow-roll period is eventually responsible for
an accelerating exponential expansion. However, milder forms of solutions, for
example a power-law expansion $a\propto t^{p}$ with $p>1$, can still suffice
for inflation.
Different forms of solutions in which the expansion follows a power-law with an
arbitrary in general exponent lead to the so-called \emph{intermediate}
or \emph{graduated-inflationary} universes (cf. \cite{ba} and references therein). These
models were first studied by Barrow who introduced an equation of state of the
form
\be
\label{grad_ifl}
\mu+p=\gamma\mu ^{\lambda},
\ee
where $\lambda$, $\gamma$ are constants and $\gamma\neq 0$,
in a flat model. The set of solutions that Barrow found (which naturally
depended on the $\lambda$ and
$\gamma$ parameters) included not only the traditional power-law and
exponential inflationary solutions but also new ones. In these solutions the
scale factor is an exponential of an arbitrary power of the time and in some
cases describes a universe that begins
with a past-collapse singularity. In the original references, however,
no complete discussion of the structure of singularities predicted in these models is
included and so we give below such an analysis.

To see how a past-collapse singularity develops in the history of these
cosmologies we substitute the equation
of state (\ref{grad_ifl}) in the continuity equation and integrate to find
\be
\mu=\frac{(3\gamma (\lambda -1))^{1/(1-\lambda)}}{(\ln(a/a_{0}))^{1/(\lambda -1)}},
\ee
where $\lambda\neq 1$ and $a_{0}$ is a constant. Using then the Friedmann equation
$3H^{2}=\mu$, we obtain the scale factor in the following explicit form \cite{ba}:
\be
\ln\left(\frac{a}{a_{0}}\right)=
\frac{3^{\lambda/(1-2\lambda)}\gamma^{1/(1-2\lambda)}
(\lambda-1/2)^{2(1-\lambda)/(1-2\lambda)}}{\lambda-1}
t^{2(1-\lambda)/(1-2\lambda)},
\ee
with $\lambda\neq 1,1/2$. Setting
\bq
A&=&\frac{3^{\lambda/(1-2\lambda)}\gamma^{1/(1-2\lambda)}
(\lambda-1/2)^{2(1-\lambda)/(1-2\lambda)}}{\lambda-1}, \\
\label{infparam}
f&=&2(1-\lambda)/(1-2\lambda),
\eq
we can write the scale factor in the form
\be
\label{scale_grad}
a(t)\propto \exp(A t^{f}).
\ee
Considering the case that $1/2<\lambda <1$ and $\gamma>0$, we have that $A<0$ and
$f<0$. This means that a past-collapse singularity
develops as $t\rightarrow 0$, since then $a\rightarrow 0$,
$H=Af t^{f-1}\rightarrow \infty$ and
consequently
\be
|D|^{2}=3H^{4}\rightarrow\infty,
\ee
and also $\mu\rightarrow \infty$,
which in turn implies that
\be
|E|^{2}=\frac{1}{12}(-2\mu+3\gamma\mu^{\lambda})^{2}\rightarrow\infty.
\ee
The asymptotic strength of this singularity is therefore:
$a<<(|E|\leftrightarrow |D|)<<H$.
Actually this type of behaviour is found to occur more generally due to the
following result.

%
\begin{theorem} \label{5}
A necessary and sufficient condition for a past-collapse
singularity \\ $(S_{1},N_{1},B_{1})$ to occur at time $t_{1}$ in an open or flat
universe filled with a fluid with equation of state
$p+\mu=\gamma\mu^{\lambda}$, $\gamma>0$ and $\lambda<1$, is that
$\mu\rightarrow\infty$ as $t\rightarrow t_{1}$.
\end{theorem}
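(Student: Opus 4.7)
The plan is to treat the two directions separately; necessity is essentially a reading-off from the definitions given in Chapter 3, while sufficiency is the substantive content. For necessity, assume the singularity is of type $(S_1,N_1,B_1)$ at $t_1$. By the fluid reformulation of the Bel-Robinson conditions given just after Definition \ref{sing_names}, condition $B_1$ is equivalent to $\mu\to\infty$ and $|\mu+3p|\to\infty$; so $\mu\to\infty$ at $t_1$ is immediate.

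For sufficiency, assume $\mu\to\infty$ as $t\to t_1$. I would first combine the continuity equation (\ref{cons_fluid}) with the prescribed equation of state to get $\dot\mu=-3\gamma H\mu^{\lambda}$, which on separating variables and using $H=\dot a/a$ integrates to $\mu^{1-\lambda}/(1-\lambda)=-3\gamma\ln(a/a_{0})$. Because $\lambda<1$, this identity shows that $\mu\to\infty$ forces $a\to 0$, securing condition $N_1$. Inserting this into the Friedmann equation (\ref{frw_H}) with $k\in\{0,-1\}$ gives $3H^{2}=\mu-3k/a^{2}\geq\mu\to\infty$, so $H\to+\infty$ at $t_1$. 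Non-integrability of $H$ on every interval $[t_1,t]$ (condition $S_1$) then follows from $\int_{t_1}^{t}H\,d\tau=\ln(a(t)/a(t_1))\to+\infty$ as $a(t_1)\to 0$. For condition $B_1$, the Friedmann equation gives $|D|^{2}=3(H^{2}+k/a^{2})^{2}=\mu^{2}/3\to\infty$, and the acceleration equation (\ref{accfrw}) gives $|E|^{2}=3\ddot a^{2}/a^{2}=(\mu+3p)^{2}/12=(-2\mu+3\gamma\mu^{\lambda})^{2}/12$; since $\lambda<1$, the $-2\mu$ term dominates and $|E|\to\infty$.

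The main obstacle, and the real reason for the restriction to open or flat universes, is the compatibility of the hypothesis $\mu\to\infty$ with the Friedmann constraint. In the closed case $k=+1$ one would need $\mu\geq 3/a^{2}$ throughout the approach to the singularity, but the relation $\mu^{1-\lambda}\propto -\ln a$ shows that $\mu$ grows only as $(\ln(1/a))^{1/(1-\lambda)}$, which is negligible compared with $1/a^{2}$ as $a\to 0$. Hence the scenario $\mu\to\infty$ through a collapse $a\to 0$ is incompatible with the closed Friedmann equation, and the restriction $k\in\{0,-1\}$ in the statement is essential rather than cosmetic. Once this point is noted, the rest of the argument is a direct bookkeeping on the explicit formulas above.
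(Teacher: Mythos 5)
Your proof is correct and follows essentially the same route as the paper: integrate the continuity equation to obtain $a$ as a function of $\mu$ (your logarithmic identity is just the paper's Eq.~(\ref{scale}) rewritten), conclude $a\rightarrow 0$, and then read off the divergence of $H$, $|E|$ and $|D|$ from the Friedmann equation and the electric-part formulae, with the converse being immediate from the $B_{1}$ condition. You go slightly beyond the paper's proof in two welcome places --- verifying the $S_{1}$ non-integrability explicitly via $\int_{t_{1}}^{t}H\,d\tau=\ln\bigl(a(t)/a(t_{1})\bigr)$ rather than merely noting that $H$ blows up, and explaining why the growth $\mu\sim(\ln(1/a))^{1/(1-\lambda)}$ is too slow to satisfy the closed ($k=+1$) Friedmann constraint --- but these are refinements of the same argument, not a different one.
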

\begin{proof}
The continuity equation Eq. (\ref{cons_fluid}) gives directly the form of
the scale factor,
\newpage
\be
\label{scale}
a=a_{0}\exp{\left(\frac{\mu^{-\lambda+1}}{3\gamma(\lambda-1)}\right)},
\ee so that $a\rightarrow 0$ as $t\rightarrow t_{1}$. Since

\[ H^{2}=\frac{\mu}{3}-\frac{k}{a^{2}}=\frac{\mu}{3}-
k{a_{0}}^{-2}\exp{\left(\frac{-2\mu^{-\lambda+1}}{3\gamma(\lambda-1)}\right)}>0,
\]
\[ |E|^{2}=\frac{1}{12}(-2\mu+3\gamma\mu^{\lambda})^{2}, \]
\[|D|^{2}=\frac{\mu^{2}}{3},\]
we see that as $t\rightarrow t_{1}$ $H$, $|E|$ and $|D|$ diverge
provided that $k=0$ or $k=-1$. The converse is straightforward.
\end{proof}

As it was pointed out in \cite{ba} the matter content in a graduated
inflationary universe can be reinterpreted as
a scalar field with the following potential:
\be
\label{potential}
V(\phi)=3\left(Af\left(\frac{f-1}{8A(1-f)}\right)^{(f-1)/f}\phi^{2(f-1)/f}\right)^{2}
\\
-Af(1-f)\left(\frac{f-1}{8A(1-f)}\right)^{(f-2)/f}\phi^{2(f-2)/f},
\ee
where
\be
\phi=\left(\frac{8A(1-f)}{f}\right)^{1/2}.
\ee
and where $f$ is given by Eq. (\ref{infparam}).
It is worth noting here that it is only recently that this model has regained
interest since, as it was shown in \cite{ba06}, a specific choice for the
$f$ parameter offers a good fit to the WMAP observations \cite{spergel} within
the framework of the slow-roll approximation.

In order to illustrate this result,
let us briefly review the equations of the slow-roll approximation. For large
$\phi$ the potential (\ref{potential}) becomes
\cite{ba06}
\be
V(\phi)=\frac{48A^{2}}{(\beta+4)^{2}}(2A\beta)^{\beta/2}\phi^{-\beta},
\ee
where $\phi=(2A\beta t^{f})^{1/2}$ and $\beta=4(f^{-1}-1)$.
In the slow-roll approximation it is assumed that $\dot{\phi}^{2}\ll V$ and
$\ddot{\phi}\ll V'$ restricting the field equations to the form
(cf. \cite{liddle}, \cite{lyth0})
\bq
\label{slow_roll1}
3H^{2}&\simeq&V\\
\label{slow_roll2}
3H\dot{\phi}&\simeq&-V'.
\eq
Inflation then occurs if
\be
\label{inflation}
\frac{\ddot{a}}{a}=\dot{H}+H^{2}>0\Leftrightarrow -\frac{\dot{H}}{H^{2}}<1.
\ee
Differentiating Eq. (\ref{slow_roll1}) with respect to $t$ and using Eq.
(\ref{slow_roll2}) we find that
\be
\dot{H}=-\frac{(V')^{2}}{18 H^{2}},
\ee
so that
\be
-\frac{\dot{H}}{H^{2}}=\frac{1}{2}{\left(\frac{V'}{V}\right)} ^{2},
\ee
while condition (\ref{inflation}) holds if and only if the \emph{first slow-roll
parameter $\epsilon$}, defined by \cite{liddle}, \cite{lyth0}
\be
\label{epsilon}
\epsilon\equiv\frac{1}{2}\left(\frac{V'}{V}\right)^{2},
\ee
satisfies the condition
\be
\label{epsilon}
\epsilon \ll  1.
\ee
If we further differentiate Eq. (\ref{slow_roll2}) and use Eq. (\ref{epsilon})
together with $\ddot{\phi}\ll V'$, we conclude that the
\emph{second slow-roll parameter $\eta$}, defined by \cite{liddle},
\cite{lyth0}
\be
\eta\equiv \frac{V''}{V},
\ee
satisfies the condition
\be
|\eta|\ll 1.
\ee
However, as it is stated in \cite{Lyth1}, it is not necessary to demand that
the derivative of an approximation equation is \emph{itself} a valid
approximation equation and so this last condition can be regarded as independent
from the first one given by Eq. (\ref{epsilon}).
In the intermediate-inflationary model of \cite{ba06} these parameters are shown
to be $\epsilon=\beta^{2}/2{\phi}^{2}$ and $\eta=\beta(1+\beta/2)/\phi^{2}$.
As it is further shown in \cite{ba06}, the resulting values of two very important
spectral indices that are closely related to these slow-roll
parameters come in good agreement with their corresponding values measured by
observations, provided that $\beta=2$, i.e, $f=2/3$ and hence
$a(t)=\exp(At^{2/3})$.
A graduated-inflationary model with these parameters and scale factor
offers therefore an appropriate theoretical framework within which the data from
the WMAP observations is explained.
\section{Big-rip singularities}
Let us now move on to the discussion of a less standard type of model,
nevertheless one which is singular in a way predicted by our scheme.
The notion of dark matter was introduced in cosmology in order to explain a series
of observations in the area of galactic dynamics (cf. \cite{peebles} pp. 417-424).
These observations
showed that the velocities of galaxies in clusters are so high
and the luminous masses enclosed within the galaxies so low that these
structures ought to have been ripped apart. A similar situation holds for
the case of spiral galaxies. In particular, it was found in \cite{Albada}
that at a distance $r$
outside the central luminous part of such a galaxy the rotational velocity
$v(r)$ does not follow the anticipated third law of Kepler
$v(r)\propto r^{-1/2}$ but is instead measured to be constant. This leads to
the counter-intuitive fact that the mass at the outskirts of such a galaxy grows
proportionally to distance (cf. \cite{peebles} pp. 46-47). Such a pattern of matter distribution could be
attributed to the existence of a halo surrounding these galaxies and consisting
of low-luminosity matter, thereby called dark matter, that contributes the
missing amount of matter needed to fit the observed profile of these galaxies
(see \cite{peebles} pp. 45-53).

Dark matter is also considered in another context. Observations strongly indicate
that the value of $\Omega$, the ratio of the present density of the universe
to the critical density $\mu_{c}=3H^{2}_{0}/8\pi G$ ($H_{0}$ being the present
value of $H$), is unity. However, the luminous
mass of the universe is currently measured to have a value of $\Omega_{Lum}$
only around $0.04$ \cite{padma}. On the other hand, measurements of the amount
of dark matter that is believed to exist in clustered areas of the universe give
$\Omega_{Dm}\simeq 0.26$ \cite{padma}. This missing part, nearly $70\%$ of the
critical density, is contributed by an unclustered component the dark energy which has a
ratio of pressure to density $w=p/\mu<-0.78$ \cite{padma}.  
Taking this into account we see that dark matter and dark energy may play
an important role in the understanding of the composition of the universe.
Even though the nature of dark energy remains largely a mystery it is an issue
worth examining 
as it offers us the opportunity to discover new and exciting
possible fates of model universes with such forms of material content.

The simplest source of dark energy can be described by simply postulating the
existence of a cosmological constant $\Lambda$ which satisfies a perfect fluid
equation of state, $p=w\mu$ with $w=-1$ \cite{weinberg}. However, the consideration of a
cosmological constant $\Lambda$ in this context leads to the so-called
cosmological constant problem - the \emph{expected} value of the cosmological
constant from quantum field theories is some 120 orders of magnitude larger
than the observed one \cite{cald03}.
Another candidate for dark energy is the \emph{quintessence}, a
scalar field which satisfies the equation of state for $-1<w<-1/3$
(cf. \cite{k1}-\cite{k6}).
Because recent observations indicate that the universe is not
only expanding but is also accelerating a lot of attention has been drawn to
use such matter components (or even more exotic ones,
cf. \cite{ph_1}-\cite{ph_42}) in an effort to explain the
observed acceleration \cite{page}\, \cite{spergel}. Note that in order to obtain
an accelerating FRW universe filled with a perfect
fluid equation of state $p=w\mu$, it is necessary to have a value of the
$w$ parameter that is less than $-1/3$ as this clearly follows from the field
equations (cf. \ref{acc_frw}) below).

Cosmological observations do not exclude the fact that the $w$ parameter be less
than $-1$ \cite{cald03}. An entity satisfying such an exotic equation of state is dubbed
\emph{phantom dark energy}; this is a becoming name for the description of something
that is physically unorthodox but which nevertheless seems to provide an
explanation to the observations \cite{cald99}. To begin with, phantom dark energy immediately violates the dominant
energy condition since $p+\mu=\mu (w+1)<0$. This leaves room for the existence
of wormholes and times machines (see \cite{cald03} for a discussion). Another
interesting albeit unphysical property
of phantom dark energy is that it exhibits a sound speed that it is both
imaginary and superluminal 
\cite{ph_1}.

Despite the strange properties of phantom-dark energy it is still an interesting
type of matter field to study in cosmology
(see Refs. \cite{ph_1}-\cite{ph_42}, \cite{cald99}, for phantom filled cosmological models)
since its presence influences the future of the universe through
the development of a big-rip singularity as is shown in the
next Subsection. In what follows, we will use the term
\emph{phantom cosmology} to refer to any model that includes a phantom-matter
component.
\subsection{Phantom dark energy}

Consider the case of a flat FRW universe with metric
$ds^{2}=-dt^{2}+a^{2}(t)d\bar{x}^{2}$,
filled with a fluid with equation of state $p=w\mu$. The Einstein field equations
for such a universe are
\bq
\label{Frd_Eq_flat}
\mu &=& 3H^{2},\\
\dot{\mu}&+&3H\mu(1+w)=0,\\
\label{acc_frw}
\frac{\ddot{a}}{a}&=&-\frac{1}{6}\mu(1+3w).
\eq
Integrating the continuity equation we have
\be
\mu=Ca^{-3(w+1)},
\ee
where $C$ is the integration constant.
For ordinary matter ($w\geq 0$), or for dark matter with $w>-1$, the energy density
decreases as the universe expands. In contrast, phantom dark energy has a
density that \emph{increases} as the expansion proceeds in an FRW universe
(irrespectively of its spatial geometry). This implies that it will
at first overcome and finally dominate over all components of ordinary matter.

Solving the Friedmann equation (\ref{Frd_Eq_flat}) for $a$, we find
\cite{gonzales}
\be
\label{phantom_scale}
a=\left[{a _{0}}^{3(1+w)/2}+\frac {3(1+w)\sqrt
{A}}{2}(t-t_{0})\right]^{\frac {2}{3(1+w)}},
\ee
where $A=C/3$ and $a _{0}$ is an integration constant. The behaviour
that would have been expected had we
considered only ordinary matter is that $a\rightarrow \infty$ as
$t\rightarrow \infty$, and the Hubble distance $H^{-1}$ to grow
more rapidly than the scale factor (as it follows from Eq. (\ref{phantom_scale})
and Eq. (\ref{extr_ph}) below), which means that more and more galaxies would
become visible asymptotically \cite{gonzales}. In contrast to this situation,
in a phantom dark energy model a big-rip singularity arises within the finite
value of time given by \cite{gonzales}
\be
t_{rip}=t_{0}+\frac{2}{3\sqrt {A}
(|w|-1)a_{0}^{3(|w|-1)/2}},
\ee
and called the \emph{big-rip time}. Note that $t_{rip}$ depends on the initial size of the
universe, $a_{0}$, and the exact value of the $w$ parameter in such a way that
the bigger the values of $a_{0}$ and $w$, the briefer the existence of
the phantom universe.
Indeed, from Eq. (\ref{phantom_scale}) we see that the scale factor becomes
divergent as $t\rightarrow t_{rip}$. Also the Hubble rate and hence the norms the
electric parts of the Bel-Robinson energy become diverge there since we can
obtain the following forms:
\bq
\label{extr_ph}
H^{2}&=&A\left[{a _{0}}^{3(1+w)/2}+\frac {3(1+w)\sqrt
{A}}{2}(t-t_{0})\right]^{-2},\\
|E|^{2}&=&\frac{3}{4}H^{4}(1+3w)^{2},\\
|D|^{2}&=&3H^{4}.
\eq
Eq. (\ref{extr_ph}) implies in particular that the Hubble \emph{distance} $H^{-1}$ now
tends to zero as $t\rightarrow t_{rip}$.
Combining this with the infinite growth of the scale factor
as $t\rightarrow t_{rip}$, we see that this
type of accelerated expansion forces galaxies to disappear beyond the cosmological
horizon rendering the end of the universe dark and rather violent. The latter
attribute follows from the fact that it can be shown (see \cite{cald03}) that as
the big-rip time approaches gravitationally bound structures such as galaxies,
planetary systems or even atoms, become unbound due to the enormous repulsive
pressure.

The following theorem \cite{iklaoud_3}, \cite{iklaoud_4} characterises
the future singularity in phantom cosmologies \textit{irrespective} of the
value of the curvature $k$ and implies that the singularities in such
models can be milder than the standard all-encompassing future-collapse
singularities and have necessarily diverging pressure. This theorem generalises
and unifies previous results (cf. \cite{cald03}, \cite{steph}, \cite{sud_10},
\cite{sud_12}, \cite{gonzales}, \cite{noj1}, \cite{cald99}, \cite{ch04},
\cite{ph_33}).

%
\begin{theorem} \label{2}
Necessary and sufficient conditions for a big-rip
singularity $(S_{2},N_{3},B_{1})$ to occur at the finite future time $t_{s}$
in a universe filled with a fluid with  equation of state
$p=w\mu$, are that $w<-1$ and $|p|\rightarrow\infty$ at $t_{s}$.
\end{theorem}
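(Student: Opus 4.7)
The plan is to couple the conservation law for a linear barotropic fluid with the Friedmann equation and the formulas for the electric parts of the Bel-Robinson energy, then read off the two implications. First I would integrate the continuity equation $\dot\mu+3H\mu(1+w)=0$ to obtain $\mu=C a^{-3(w+1)}$, and record the three relevant formulas
\begin{equation}
H^{2}=\frac{\mu}{3}-\frac{k}{a^{2}},\qquad |D|^{2}=\frac{\mu^{2}}{3},\qquad |E|^{2}=\frac{1}{12}(\mu+3p)^{2}=\frac{1}{12}\mu^{2}(1+3w)^{2},
\end{equation}
which follow from the Einstein-Friedman equations together with the expressions for $|E|$ and $|D|$ in an RW spacetime that were recorded in Section 2.2. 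These four identities will do all the work.

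For the necessity direction I would assume the $(S_{2},N_{3},B_{1})$ triplet holds at $t_{s}$. Condition $B_{1}$ forces $|D|\to\infty$ and hence $\mu\to\infty$ at $t_{s}$. Combined with $N_{3}$ (i.e.\ $a\to\infty$) and the relation $\mu=Ca^{-3(w+1)}$, the only way that $\mu$ can grow while $a$ grows is to have the exponent $-3(w+1)>0$, so $w<-1$. Since $p=w\mu$ with $w$ constant and $\mu\to\infty$, this immediately yields $|p|\to\infty$. The divergence of $|E|$ follows automatically because $1+3w<-2$ is bounded away from zero.

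For sufficiency I would start from $w<-1$ and $|p|\to\infty$ at $t_{s}$. Since $p=w\mu$, $|p|\to\infty$ is equivalent to $\mu\to\infty$, and then $\mu=Ca^{-3(w+1)}$ with $-3(w+1)>0$ forces $a\to\infty$ at $t_{s}$ (condition $N_{3}$). Once $a\to\infty$, the curvature term $k/a^{2}$ in the Friedmann equation becomes negligible relative to $\mu/3$, so $H^{2}\sim\mu/3\to\infty$, giving $S_{2}$; and $|D|^{2}=\mu^{2}/3\to\infty$ together with $|E|^{2}=\tfrac{1}{12}(1+3w)^{2}\mu^{2}\to\infty$ gives $B_{1}$. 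The only step that is not a mere substitution is verifying that all of this happens at a \emph{finite} future time $t_{s}$, so that we get a genuine big-rip and not just asymptotic divergence. The hard part is therefore to show that the improper integral
\begin{equation}
t_{s}-t_{0}=\int_{a_{0}}^{\infty}\frac{da}{a\,H(a)}
\end{equation}
converges. Writing $H(a)\sim\sqrt{C/3}\,a^{-3(w+1)/2}$ for large $a$ (the $k/a^{2}$ correction being subdominant), the integrand behaves like $a^{(w+1/2)\cdot\text{something positive}}$; more precisely, setting $n:=-3(w+1)/2>0$, we get $da/(aH)\sim a^{-1-n}\,da$, whose integral over $[a_{0},\infty)$ converges precisely because $n>0$. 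This completes the proof, and as a by-product recovers the explicit flat-case big-rip time quoted just above the theorem.
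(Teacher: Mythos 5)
Your proof is correct and follows essentially the same route as the paper's: integrate the continuity equation to obtain $\mu\propto a^{-3(w+1)}$ and then read off both implications from the formulas $H^{2}=\mu/3-k/a^{2}$, $|D|^{2}=\mu^{2}/3$ and $|E|^{2}=\frac{1}{12}\mu^{2}(1+3w)^{2}$. The only addition is your closing check that the blow-up time is finite via convergence of $\int da/(aH)$; the paper omits this because the hypothesis that $|p|\rightarrow\infty$ \emph{at} $t_{s}$ already posits a finite $t_{s}$ (the explicit flat-case formula for $t_{rip}$ displayed just before the theorem plays that role), but your extra verification is sound and does no harm.
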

\begin{proof}
Substituting the equation of state $p=w\mu$ in the
continuity equation $\dot{\mu}+3H(\mu+p)=0$, we have
\be
\label{phantom} \mu\propto a^{-3(w+1)}, \ee and so if $w<-1$ and
$p$ blows up at $t_{s}$, $a$ also blows up at $t_{s}$. Since
\be
H^{2}=\frac{\mu}{3}-\frac{k}{a^{2}},\quad
|D|^{2}=\frac{\mu^{2}}{3}, \quad
|E|^{2}=\frac{1}{12}\mu^{2}(1+3w)^{2}, \ee we conclude that at
$t_{s}$, $H$, $a$, $|D|$ and $|E|$ are divergent.
Conversely, assuming a big-rip singularity $(S_{2}, N_{3}, B_{1})$
at $t_{s}$ in an \textsc{FRW} universe with the equation of
state $p=w\mu$, we have from the $(B_{1})$ hypothesis that
$\mu\rightarrow\infty$ at $t_{s}$ and so $p$ also blows up at
$t_{s}$. Since $a$ is divergent as well, we see from
(\ref{phantom}) that $w<-1$.
\end{proof}

It is interesting to note that the relative strength of this big-rip singularity
is strongly influenced by the choice of the $w$ parameter. In particular, there are
three possible behaviours depending on the given ranges of the $w$ parameter:
\begin{itemize}
\item \textrm{if} \quad $-4/3<w<-1$, \quad
\textrm{then} \quad $H<<(|E|\sim |D|)<<a$,

\item \textrm{if} \quad $-5/3<w<-4 /3$,  \quad \textrm{we have} \quad
$H<<a<<(|E| \sim |D|)$, \quad\textrm{whereas}

\item \textrm{if}\quad $w<-5/3$, \quad\textrm{we have}\quad
$a<<H<<(|E|\sim |D|)$.
\end{itemize}

We note that for all three ranges of $w$, $|E|$ and $|D|$ diverge similarly while
$H$ remains always much smaller compared to these two quantities. On the other
hand, the scale factor diverges less strongly relatively to $H$ and $|E|$, $|D|$
as $w$ decreases below $-1$.
\subsection{Chaplygin gas}
As we saw in the previous Section a phantom equation of state drives the universe
towards a big-rip singularity. In this section we will see that
a big-rip singularity
can still arise in models with
forms of dark energy different from phantoms such as the so-called
\emph{generalised} Chaplygin gases;
these fields play a central role in
dark energy cosmology and specifically
in the unification of 
dark energy and dark matter
(see Refs. \cite{ch_1}-\cite{ch_25}).

For example, consider a flat FRW universe
containing a dark-energy component which at late times satisfies a general
equation of state of the form \cite{steph}
\be
p=-\mu_{d}-A\mu_{d}^{\alpha}
\ee
where $A$ and $\alpha$ are real parameters. Integrating the continuity equation
\be
\dot{\mu}_{d}-3HA\mu_{d}^{\alpha}=0,
\ee
we find that the energy density ${\mu}_{d}$ of the dark component behaves as
follows,
\be
\label{mu_d}
\mu_{d}=\mu_{d0}\left(1+3\bar{A}(1-\alpha)\ln\frac{a}{a_{0}}\right)
^{\frac{1}{1-\alpha}},
\ee
where $\bar{A}=A\mu_{d0}^{\alpha-1}>0$ and $\mu_{d0}$ is an integration constant.
Note that
\be
\frac{d\mu_{d}}{da}=\frac{3\bar{A}\mu_{d0}}{a}
\left(1+3\bar{A}(1-\alpha)\ln\frac{a}{a_{0}}\right)
^{\frac{1}{1-\alpha}-1},
\ee
so that the parameter $\bar{A}$ monitors the behaviour
of the dark energy as the universe expands: if $\bar{A}>0$ the density of the
dark energy increases with expansion if $\bar{A}=0$, the density is constant,
whereas if $\bar{A}<0$, the density decreases \cite{steph}. Recall that for the
occurrence of a big-rip singularity the density needs to grow with expansion and
hence for the study of such a singularity we may focus here in the case
$\bar{A}>0$.

Substituting in the Friedmann
equation (\ref{frw_H}) we find the scale factor in the following implicit form
\be
\left(1+3\bar{A}(1-\alpha)\ln \frac{a}{a_{0}}\right)^
{\frac{1-2\alpha}{2(1-\alpha)}}-
\left(1+3\bar{A}(1-\alpha)\ln \frac{a_{d}}{a_{0}}\right)^
{\frac{1-2\alpha}{2(1-\alpha)}}=\frac{3}{2}\bar{A}(1-2\alpha)C(t-t_{d}).
\ee
Here $B=\sqrt{\mu_{d0}/3}$, $t_{d}$ is the time at which the dark energy
becomes dominant and $a_{d}$ is the corresponding value of the scale factor
$a_{d}=a(t_{d})$.
The Hubble expansion rate is then given by
\be
\label{eqH}
H=B{\left({\left(1+3\bar{A}(1-\alpha)\ln{\frac{a_{d}}{a_{0}}}\right)
^{\frac{1-2\alpha}{2(1-\alpha)}}+
\frac{3}{2}\bar{A}(1-2\alpha)C(t-t_{d})}\right)}^{\frac{1}{1-2\alpha}}.
\ee
We now show that the future behaviour of this model is quite sensitive to the
values of the parameter $\alpha$. Although the case with $\alpha\in (1/2,1)$
leads to the development of a big-rip singularity in finite time,
when $\alpha<1/2$ we are led instead a singularity-free universe.
Indeed, taking $\alpha\in (1/2,1)$ we allow for the dark energy density, the
scale factor and the Hubble rate $H$ all to increase with time until they diverge
at the finite time $t_f$ where
\be
t_{f}=t_{d}+\frac{2}{3\bar{A}(2\alpha-1)C}
{\left(1+3\bar{A}(1-\alpha)\ln{\frac{a_{d}}{a_{0}}}\right)}^
{\frac{1-2\alpha}{2(1-\alpha)}}.
\ee
In addition, the Bel-Robinson energy blows up there since the electric parts
take the forms:
\bq
|E|^{2}&=&\frac{1}{12}(2\mu+3A\mu^{a})^{2},\\
|D|^{2}&=&3H^{4}.
\eq
If instead one
assumes that $\alpha<1/2$, we see from Eq. (\ref{eqH}) that $H$ is
always finite (so that the Bel-Robinson energy is also finite) and therefore the
model is geodesically complete. (The case with $\alpha>1$ leads to an
$(S_{2},N_{2},B_{1})$ singularity since as it follows from (\ref{mu_d})
$\mu_{d}$ and consequently
$p_{d}$, $H$, $|E|$ and $|D|$ all diverge at the finite value
$a_{s}=a_{0}e^{1/(3\bar{A}(\alpha-1))}$.)

Summing up the results of this Section, the consideration of these unconventional
dark energy components in the framework of FRW cosmology leads to the possibility
of a big-rip singularity. Even though Theorem \ref{sing_H} predicts such a
singular behaviour in condition $S_{2}$, it is only very recently
that the study of physically motivated models described here has offered
sufficient \emph{physical} grounds for the appearance of such a structure
distinct from the standard future-collapse types considered so far
(cf. \cite{cald03}, \cite{steph}, \cite{sud_10}, \cite{sud_12}, \cite{gonzales},
\cite{noj1}, \cite{cald99}, \cite{ch04}, \cite{ph_33}).
\section{Sudden singularities}
In the previous Section, we saw that the violation of the dominant energy
condition is sufficient for the occurrence of future finite-time
singularities during the expansion of the universe.
But is this also necessary? Can a finite-time singularity develop
during the expansion of the universe \emph{without} violating the
dominant energy condition? If yes, then what are the qualitative differences
between the resulting types of singularities?

Barrow initiated a study of these questions in \cite{ba04}. He found a new type
of singularity that can arise in a universe where matter does not violate the
dominant energy condition. Actually in his example matter
satisfies even the strong-energy conditions $\mu>0$ and $\mu+3p>0$.
This fact implies that the violation of the dominant-energy condition is
sufficient but \emph{not} necessary for the occurrence of a finite-time
singularity during expansion.
Moreover, the nature of the resulting singularity is substantially
different from the big-rip singularities discussed earlier.
In particular, the finite-time singularity of Barrow is a sudden singularity
according to our Definition \ref{sing_names} occurring at an arbitrary
but fixed value of finite time $t_{s}$ at which 
the scale factor, the Hubble rate and consequently $|D|$, all remain finite
but 
$|E|$, or equivalently the acceleration, diverges. This becomes possible
since by the Raychaudhuri equation at $t_{s}$ we have
\be
\label{ray_sud}
\frac{\ddot{a}}{a}=-\frac{p}{2}-\frac{H^{2}_{s}}{2}-\frac{k}{2a^{2}_{s}},
\ee
where $a_{s}\equiv a(t_{s})$, $H_{s}\equiv H(t_{s})>0$, which
implies that a singularity with
$\ddot{a}/a\rightarrow -\infty$ can occur provided the pressure blows up
at $t_{s}$.

In the following Subsections we analyse how a sudden singularity
arises in universes filled with a fluid having $\mu$ and $p$ unconnected
(Subsection 4.4.1), or satisfying an equation of state (Subsection 4.4.2).
In Subsection 4.4.3 we explore model universes that exhibit singularities
that are milder than the sudden ones.
\subsection{Fluids with $p$ and $\mu$ unconnected}
A special exact solution acquiring all the characteristics needed for the
development of a sudden singularity is given in \cite{ba04} and reads
\begin{equation}
a(t)=1+\left(\frac{t}{t_{s}}\right)^{q}
(a_{s}-1)+(1-\frac{t}{t_{s}})^n,
\end{equation}
with $1<n<2, 0<q\leq 1$. For the more detailed analysis of the properties of
this solution given below we follow \cite{ck05}, \cite{iklaoud_2}.
This solution exists smoothly only on the interval
$(0,t_{s})$, with $a_{s}$ and $H_{s}$ and therefore $|D|$ is finite at the
right end. By setting $A=q(a_s-1)/t^q_s, B=n/t_s^n$, we find that
\be
\dot{a}(t)=At^{q-1}+B(t_s-t)^{n-1},
\ee
which means that unless $q=1$, $\dot{a}$ blows up as $t\rightarrow 0$,
making $H$ continuous only on $(0,t_s)$. Also $a(0)$ is finite
and we can extend $H$ and define it to be finite also at $0$, $H(0)\equiv H_0$,
so that $H$ is defined on $[0,t_s]$. However, since $\lim_{t\rightarrow
0^+}H(t)=\pm\infty$, $H$ is non-integrable on $[0,t_s]$, $t_s$ arbitrary and thus
this model universe implements exactly Condition $S1$ of Theorem \ref{sing_H}.

Another basic feature of this solution is that the second derivative of the
scale factor is
\be
\ddot{a}(t)=q(q-1)Bt^{q-2}-\frac{n(n-1)}{t^{2}_{s}(1-t/t_{s})^{2-n}}
\rightarrow-\infty, \quad \textrm{as} \quad t\rightarrow t_{s},
\ee
so that $|E|$, and consequently $p$, diverges at $t_{s}$ as it is expected for a
sudden singularity. Note that $\mu_{s}>0$ for
$q^{2}t_{s}^{-2}(a_{s}-1)^{2}>-k$, and since $(\mu+3p)=-6\ddot{a}/a$
(see Eq. (\ref{accfrw})), we can have $\mu>0$ and $\mu+3p>0$
throughout the evolution of such an expanding universe.

We can further
prove that under the additional assumption $p\geq 0$, a universe
exhibiting a sudden singularity is geodesically incomplete \cite{ck05}.
Firstly, we know (cf. \cite{he73}, pp. 141-2) that
in an FRW universe with $\mu > 0$ and $p\geq 0$, given any vector $X$ at any
point $q$, the geodesic $\gamma (\upsilon)$ which passes through the point
$q=\gamma(0)$ in the direction of $X$ is such that
\begin{itemize}
\item either $\gamma (\upsilon)$ can be extended to arbitrary
values of its affine parameter $\upsilon$, \emph{or}
\item there is some value $\upsilon_{0}>0$
such that the scalar invariant $(R_{\a\b}-1/2
Rg_{\a\b})(R^{\a\b}-1/2Rg^{\a\b})$ is unbounded on $\gamma
([0,\upsilon])$.
\end{itemize}
For a RW spacetime, the invariant
$(R_{\a\b}-1/2 Rg_{\a\b})(R^{\a\b}-1/2Rg^{\a\b})$ is found to be
\begin{equation}
\frac{12k}{{a}^{4}}+\frac{24\dot{a}^{2}k}{a^{4}}+\frac{12\dot{a}^{4}}{a^{4}}+\frac{12\ddot
a}{a ^{3}}+\frac{12\dot{a} ^{2}\ddot{a}}{a^{3}}+\frac{12\ddot{a}^{2}}{a^{2}},
\end{equation}
and since $a\rightarrow a (t_{s})$, $ H(t)\rightarrow H_{s}$,
$p(t)\rightarrow \infty $, $\ddot{a}/ a \rightarrow -\infty$ as
$t\rightarrow t_{s}$, we see that $(R_{\a\b}-1/2
Rg_{\a\b})(R^{\a\b}-1/2Rg^{\a\b})$ is unbounded at $t_s$. Hence, we
find that such a spacetime is geodesically incomplete. Note that this holds true
irrespectively of the exact form of the scale factor, we only use the
qualitative properties that define a sudden singularity
(Definition \ref{sing_names}).

Further, the relative asymptotic strength of a sudden singularity
is immediately implied by the definition of this type of singularity and it is
necessarily of the form $(H \sim  |D| \sim  a)<<|E|$.
\subsection{Fluids with $p$ and $\mu$ connected}
Although Barrow in \cite{ba04} kept the pressure and the density unconnected in his
example of a sudden singularity, Nojiri et al. in
\cite{noj1} focused on finding specific equations of state appropriate for the
emergence of such a structure in a flat universe. An example of such an
equation of state is given in \cite{noj1},
\be
p+\mu=-C(\mu_{0}-\mu)^{-\gamma},
\ee
where $\gamma$ is a positive constant. We are interested below in the asymptotic
behaviour of the resulting solution of the cosmological equations in the limit of
$\mu\rightarrow \mu_{0}$. Integrating the
continuity equation, we find that the scale factor, given by
\be
a=a_{0}\exp \left(-\frac{(\mu_{0}-\mu)^{\gamma +1}}{3C(\gamma +1)}\right)
\ee
($a_{0}$ is an integration constant), remains finite as
$\mu\rightarrow\mu_{0}$. From the Friedmann equation $3H^{2}=\mu$, it follows
that $H$ and therefore $|D|$ must also remain finite. However, in the limit
$\mu\rightarrow\mu_{0}$, the pressure, and hence $|E|$ blow up. Solving the
Friedmann equation for $H$, substituting in the continuity
equation $\dot{\mu}=3HC(\mu_{0}-\mu)^{-\gamma}$, and integrating we find
\be
t=t_{0}+\int\frac{d\mu}{\sqrt{3\mu}C(\mu_{0}-\mu)^{-\gamma}},
\ee
which remains finite (since the integrand decreases to zero) even when
$\mu\rightarrow\mu_{0}$ (recall that $\gamma>0$). This fact leads to
the conclusion that singular behaviour is met within a finite interval of proper
time $t$ and, as
%
the following theorem shows, this type of behaviour continues to be valid
also in curved models.
\begin{theorem} \label{4}
A necessary and sufficient condition for a sudden singularity of type
$(S_{3},N_{2},B_{3})$ to occur at $t_{0}$ in a universe filled with a
fluid with equation of state $p+\mu=-C(\mu_{0}-\mu)^{-\gamma}$,
$\gamma>0$, is that $\mu\rightarrow\mu_{0}$ at $t_{0}$.
\end{theorem}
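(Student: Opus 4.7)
The plan is to extend the flat-case asymptotic analysis given just before the statement to arbitrary spatial curvature $k$, treating the $k/a^{2}$ term in the Friedmann equation as a bounded correction; this will be legitimate precisely because the $N_{2}$ component of the singularity forces $a$ to remain finite and nonzero at $t_{0}$, so that the curvature never becomes dominant as one approaches the singularity.

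For \emph{sufficiency}, assume $\mu\to\mu_{0}$ at $t_{0}$. Substituting the equation of state into the continuity equation (\ref{cons_fluid}) gives $\dot\mu=3HC(\mu_{0}-\mu)^{-\gamma}$; writing $H=\dot a/a$ and integrating once yields
\[
a=a_{0}\exp\!\left(-\frac{(\mu_{0}-\mu)^{\gamma+1}}{3C(\gamma+1)}\right),
\]
so $a\to a_{0}\neq 0$ as $\mu\to\mu_{0}$, which gives $N_{2}$ independently of $k$. The Friedmann equation (\ref{frw_H}) then shows that $H^{2}=\mu/3-k/a^{2}$ stays bounded, and by Eq.~(\ref{d_3}) the same is true for $|D|^{2}=3(H^{2}+k/a^{2})^{2}=\mu^{2}/3$. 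On the other hand, the Raychaudhuri equation (\ref{accfrw}) combined with the equation of state produces
\[
\frac{\ddot a}{a}=-\frac{1}{6}(\mu+3p)=\frac{1}{6}\bigl[2\mu+3C(\mu_{0}-\mu)^{-\gamma}\bigr],
\]
whose magnitude diverges like $(\mu_{0}-\mu)^{-\gamma}$, and consequently $|E|^{2}=3(\ddot a/a)^{2}\to\infty$. These three limits together realise $B_{3}$. To confirm the $S_{3}$ component I would change variable in the continuity equation to express the proper time as
\[
t_{0}-t_{*}=\int_{\mu_{*}}^{\mu_{0}}\frac{d\mu}{3HC(\mu_{0}-\mu)^{-\gamma}},
\]
and observe that $H$ tends to a finite positive limit while the integrand behaves as $(\mu_{0}-\mu)^{\gamma}$ near $\mu_{0}$; since $\gamma>0$, the integral converges and $t_{0}$ is reached in finite proper time, while $H$ remains defined and finite but pathological through its higher derivatives, placing the singularity in $S_{3}$.

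For \emph{necessity}, assume a singularity of type $(S_{3},N_{2},B_{3})$ at $t_{0}$. The fluid reformulation of $B_{3}$ given immediately after Definition \ref{sing_names} reads $\mu<\infty$ and $|p|\to\infty$; but the equation of state gives $|\mu+p|=C(\mu_{0}-\mu)^{-\gamma}$, and this can diverge with $\mu$ bounded only if $\mu\to\mu_{0}$, which is the claim.

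The main obstacle I anticipate is the curvature-dependent step in the sufficiency direction. For $k=-1$ one must verify that $H^{2}=\mu/3-k/a^{2}$ stays non-negative all the way up to $t_{0}$, which amounts to a consistency condition on the integration constants $a_{0}$ and $\mu_{0}$; the same condition is needed to ensure that the finite-time integral above does not pick up an extraneous singularity from $H$ vanishing on the way. I expect these to be genuine restrictions on the allowed initial data rather than logical gaps, handled exactly as the analogous curvature terms are subordinated in the proof of Theorem \ref{5} once $a$ is bounded away from zero.
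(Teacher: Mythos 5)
Your proof is correct and follows essentially the same route as the paper's: integrate the continuity equation to obtain the scale factor, read off the boundedness of $H$ and $|D|$ and the divergence of $|E|$ from the field equations, and extract the converse from the fluid form of the $B_{3}$ condition (the paper simply writes ``the converse is immediate''), with your finite-proper-time integral for $S_{3}$ matching the discussion the paper gives just before the theorem for the flat case. One small slip in your closing caveat: for $k=-1$ the term $-k/a^{2}=+1/a^{2}$ is positive, so $H^{2}=\mu/3+1/a^{2}>0$ holds automatically; the positivity constraint you describe actually concerns the closed case $k=+1$.
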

\begin{proof}
Integrating the continuity equation $\dot{\mu}-3CH(\mu_{0}-\mu)^{-\gamma}=0$,
we find
\be
\label{scale2}
a\propto\exp\left\{{-\frac{{(\mu_{0}-\mu})^{\gamma+1}}{3C(\gamma+1)}}\right\},
\ee
which is finite as $t\rightarrow t_{0}$. Also since
\be
H^{2}=\frac{\mu}{3}-\frac{k}{a^{2}},
\ee
\newpage
\bq
|E|^{2}&=&3\frac{\ddot{a}^{2}}{a^{2}}=
\frac{1}{12}(\mu+3p)^{2}=\frac{1}{12}(2\mu+3C(\mu_{0}-\mu)^{-\gamma})^2, \\
|D|^{2}&=&3\left(H^{2}+\frac{k}{a^{2}}\right)^{2}=\frac{\mu^{2}}{3},
\eq
we see that as $t\rightarrow t_{0}$, $H$ and $|D|$ remain finite
whereas $|E|$ diverges, leading to a sudden singularity.
The converse is immediate.
\end{proof}
\subsection{Milder-than-sudden singularities}
A sudden singularity is a very mild type of singularity in the sense that the only
diverging physical quantity that characterises it is the $|E|$ component of the
Bel-Robinson energy. The only other milder type of singularity that could occur
would be one with a bounded Bel-Robinson energy as this is described by the
$B_{4}$ case in our classification scheme. This can indeed happen.
Consider a flat universe filled with a fluid with equation of state
(cf. \cite{noj1})
\be
p+\mu=-\frac{AB\mu^{2\beta-1}}{A\mu^{\beta-1}+B},\quad
0<\beta<1/2.
\ee
Such a fluid, admits the exact solution
$a=a_{0} e^{\tau^{8/3}}$ when $\beta=1/5$. Then $H=(8/3)\tau^{5/3}$,
$\dot{H}=(40/9)\tau^{2/3}$ and $\ddot{H}=(80/27)\tau^{-1/3}$, and thus
as $\tau\rightarrow 0$, $a$, $\dot{a}$, $\ddot{a}$, $H$, $\dot{H}$
all remain finite whereas $\ddot{H}$ becomes divergent. We can see
that in this model the Bel Robinson energy at
the initial time, $\mathcal{B}(0)$, is finite whereas \textit{its
time derivative} is given by
\be
\dot{\mathcal{B}}(\tau)=3\left[2\frac{\ddot{a}}{a}(\ddot{H}+2H\dot{H})+
4\left(\frac{k}{a^{2}}+H^{2}\right) \left(-\frac{k
H}{a^{2}}+H\dot{H}\right)\right],
\ee
hence $\dot{\mathcal{B}}(\tau)\rightarrow\infty$ at $\tau\rightarrow 0$.
Since the derivative of the Bel-Robinson energy diverges we may interpret
this singularity geometrically as one \emph{in the velocities of
the Bianchi (frame) field}. At $t_s$, the Bianchi field encounters
a cusp and its velocity diverges there \cite{iklaoud_4}. We believe this to be
the mildest type of singularity known in the literature to date.

Another example of a very mild singularity
is given by Borde \emph{et al} in \cite{borde}. These authors
define an averaged-out Hubble function,
$H_{av}$ (given by Eq. (\ref{H_av}) below), along a null or timelike geodesic
and showed that every past-directed null or timelike geodesic satisfying
$H_{av}>0$ throughout the past is not past complete.
To see this consider first
a null geodesic with affine parameter $\lambda$.
Then using Eq. (\ref{frw_geod}) and 
substituting the Christoffel symbols (see 
footnote (\ref{footnote}) in p. 14), we
find that the geodesic equation is given by 
\be
\frac{d^{2}t}{d\lambda ^{2}}+\frac{\dot{a}}{a}
\left(a^{2}\left(\frac{dr}{d\lambda}\right)^{2}+f^{2}a^{2}
\left(\frac{d\theta}{d\lambda}\right)^{2}+a^{2}f^{2}\sin^{2}\theta
\left(\frac{d\phi}{d\lambda}\right)^{2}\right)=0.
\ee
Using the fact that the geodesic is null ($ds^{2}=0$) we have that
\be
\frac{d^{2}t}{d\lambda ^{2}}+
\frac{\dot{a}}{a}\left(\frac{dt}{d\lambda}\right)^{2}=0
\ee
and by integration we find that
\be
d\lambda \propto a(t) dt.
\ee
Setting $d\lambda=(a(t)/a(t_{s}))dt$, the affine parameter can be normalised
so that $d\lambda/dt=1$ for $t=t_{s}$.
The averaged-out Hubble function is defined by
\begin{equation}
\label{H_av}
H_{av}=\frac{1}{\lambda(t_{s})-\lambda(t_{i})}
\int_{\lambda(t_{i})}^{\lambda(t_{s})}H(\lambda)d\lambda.
\end{equation}
Now by considering those null geodesics that satisfy $H_{av}>0$, we find
\begin{equation}
\label{H_avineq}
0<H_{av}=\frac{1}{\lambda(t_{s})-\lambda(t_{i})}
\int_{\lambda(t_{i})}^{\lambda(t_{s})}H(\lambda)d\lambda
=\frac{1}{\lambda(t_{s})-\lambda(t_{i})}
\int_{a(t_{i})}^{a(t_{s})}\frac{da}{a(t_{s})}<\frac{1}
{\lambda(t_{s})-\lambda(t_{i})}.
\end{equation}
This shows that the affine parameter of \emph{every} past-directed null geodesic
having $H_{av}>0$ must take values only in a finite interval,
which implies geodesic incompleteness. A similar proof is obtained for the case
of a timelike geodesic. Hence, all causal past-directed geodesics
obeying the condition $H_{av}>0$ are past incomplete.

Note that for null geodesics, Eq. (\ref{H_avineq}) implies that $H$ is finite but
\emph{only} in a finite time interval (an analogous conclusion holds for a
timelike geodesic) and this is no other than the $S_{3}$ (necessary) condition
for a singularity of Theorem \ref{sing_H}. We therefore conclude that under the
assumptions of \cite{borde}, privileged observers cannot exist in such expanding
universes for an infinite proper time
because had they existed Theorem \ref{frwcomp} would imply that
spacetime ought to be timelike and null geodesically complete.
\section{Sudden-rip singularities}
A sudden-rip singularity arises once we allow not only $|E|$ but also $|D|$
to diverge at a finite time, while the scale factor
remains finite there. This of course implies that $H$, $p$ and $\mu$
become also divergent.
This possibility was explored
in some examples in \cite{noj1}. An appropriate equation of state to
illustrate this type of asymptotic behaviour is to take \cite{noj1}
\be
\label{eos_noj}
p+\mu=-B\mu^{\beta}, \quad \beta>1.
\ee
Integration of the continuity equation $\dot{\mu}-3BH\mu^{\beta}=0$ then gives
directly \cite{noj1}
\be
a=a_{0}\exp\left(\frac{\mu^{1-\beta}}{3(1-\beta)B}\right),
\ee
which is finite everywhere since we have chosen $\beta>1$.
We can find the relation between the proper time and the density
by first substituting Eq. (\ref{eos_noj}) in the continuity equation and then
using the Friedmann equation $3H^{2}=\mu$. We obtain
\be
dt=\int\frac{d\mu}{\sqrt{3\mu}B\mu^{\beta}},
\ee
or equivalently \cite{noj1}
\be
t\sim
t_{0}+\frac{2}{\sqrt{3} B} \frac{\mu^{-\beta+1/2}}{1-2\beta}.
\ee
Therefore, when $\mu\rightarrow \infty$, $t\rightarrow t_{0}$ and so
$|p|$, $H$, $|D|$ and $|E|$ all diverge within the finite time $t_{0}$, meaning
that this is an $(S_{2},N_{2},B_{1})$ singularity. The asymptotic strength of
this singularity is found to be: $a<<H<<|D|<<|E|$.

This behavior is a special case of the
following general result which holds also in curved models.

\begin{theorem} \label{3}
A necessary and sufficient condition for a sudden-rip singularity
(of type $(S_{2},N_{2},B_{1})$) to occur at $t_{s}$ in an isotropic universe
filled with a fluid with equation of state $p+\mu=-B\mu^{\beta}$, $\beta>1$, is
that $\mu\rightarrow\infty$ as $t\rightarrow t_{s}$.
\end{theorem}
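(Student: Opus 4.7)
The plan is to mirror the pattern of the earlier necessity/sufficiency theorems (Theorems 4.3.1, 4.4.1) and exploit the fact that the example worked out immediately before the theorem already carries out all the essential computations for the flat case; the task is just to check that none of the conclusions are spoilt by a nonzero $k$.

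For the necessity direction, I would begin by substituting the equation of state $p+\mu = -B\mu^{\beta}$ into the continuity equation to obtain $\dot{\mu} = 3BH\mu^{\beta}$, and then integrate this (using $H=\dot a/a$) to get the explicit relation
\be
a = a_{0}\exp\!\left(\frac{\mu^{1-\beta}}{3(1-\beta)B}\right).
\ee
Because $\beta>1$, letting $\mu\to\infty$ as $t\to t_{s}$ forces $\mu^{1-\beta}\to 0$, so $a\to a_{0}\neq 0$; this is the $N_{2}$ property and in particular $a$ is bounded away from zero on a neighbourhood of $t_{s}$. Plugging this into the Friedmann equation $H^{2}=\mu/3-k/a^{2}$, the divergence of $\mu$ dominates the bounded curvature term $k/a^{2}$ (for all three values of $k$, just as in Theorems \ref{5} and \ref{4}), so $H\to\infty$, giving $S_{2}$.

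To obtain $B_{1}$, I would use the expressions
\be
|D|^{2}=\frac{\mu^{2}}{3},\qquad
|E|^{2}=\frac{1}{12}(\mu+3p)^{2}=\frac{1}{12}\bigl(2\mu+3B\mu^{\beta}\bigr)^{2},
\ee
both of which clearly blow up at $t_{s}$ (the $|E|$ expression even more strongly than $|D|$, since $\beta>1$). Together, $a\to a_{s}\neq 0$, $H\to\infty$, and $|E|,|D|\to\infty$ are precisely the $(S_{2},N_{2},B_{1})$ signature.

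The converse is essentially automatic from the definitions: assuming $(S_{2},N_{2},B_{1})$, the $B_{1}$ hypothesis together with $|D|^{2}=\mu^{2}/3$ immediately gives $\mu\to\infty$ at $t_{s}$. The only step that requires any care is ruling out a pathological cancellation between $\mu/3$ and $k/a^{2}$ in the Friedmann equation for the open case $k=-1$; but since $N_{2}$ keeps $a$ bounded away from $0$, $k/a^{2}$ stays bounded and cannot cancel the divergent $\mu/3$, so the argument goes through uniformly in $k$. I expect this curvature-domination check to be the only mildly nontrivial point, exactly as in the analogous verifications for Theorems \ref{2}, \ref{4}, \ref{5}.
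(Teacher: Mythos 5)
Your proposal is correct and follows essentially the same route as the paper: integrate the continuity equation to get $a=a_{0}\exp\bigl(\mu^{1-\beta}/(3B(1-\beta))\bigr)$, deduce $a\to a_{0}\neq 0$ from $\beta>1$, and then read off the divergence of $H$, $|E|$, $|D|$ from the Friedmann equation and the electric-part formulae, with the converse following from the $B_{1}$ hypothesis via $|D|^{2}=\mu^{2}/3$. The only difference is that you make explicit the (correct) observation that $N_{2}$ keeps $k/a^{2}$ bounded so no cancellation can occur in the open case, a point the paper leaves implicit.
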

\begin{proof}
From  the continuity equation we have
\be
\label{scale1}
a=a_{0}\exp{\left(\frac{\mu^{1-\beta}}{3B(1-\beta)}\right)}, \ee
and so $a\rightarrow a_{0}$ as $t\rightarrow t_{s}$. Since
\be
H^{2}=\frac{\mu}{3}-\frac{k}{a^{2}},\quad
|E|^{2}=\frac{1}{12}(2\mu+3B\mu^{\beta})^{2},\quad
|D|^{2}=\frac{\mu^{2}}{3}, \ee we see that as $t\rightarrow
t_{s}$, $H$, $|E|$ and $|D|$ diverge leading precisely to a
sudden-rip singularity.
The converse is obvious.
\end{proof}

%
\section{Discussion}

We have illustrated in this chapter that our methods of classifying
singularities through the behaviour of the Hubble parameter the scale factor
and the Bel-Robinson energy provide a rigorous basis for a useful
description of the dynamics of a variety of cosmological
models that exhibit finite-time singularities. Even though these
model universes come from completely
different motivations yet they reveal their tendencies to have finite-time
singularities in the sense of geodesic incompleteness for primarily
one reason, the non-integrability of $H$. Finite-time singularities such
as collapse, big-rip or sudden singularities occur because $H$ fails to be
integrable for an infinite time interval.
In such universes, privileged observers cannot exist
for an infinite proper time. We know from
Theorem \ref{frwcomp} that if such observers
existed for an infinite proper time, then the universe would be
timelike and null geodesically complete.
Indeed, this is the case in various families of complete models ranging from
dark energy, to inflationary to braneworlds that we have already studied.

Naturally, the fact that the non-integrability condition for $H$ can be sustained
in different models stems from the various ways that these models are constructed
and the specific features they share. This non-integrability may be supported
either by the presence of phantom-dark energy driving the observed cosmic
acceleration and leading to a divergence of both $|E|$ and $|D|$, or by a fluid
with only a diverging $|E|$ 
as in \cite{ba04}, or in the inflationary character of the specific model as
in \cite{borde}, etc. All these different particularities are
acknowledged and brought to light with the use of the Bel-Robinson energy.
We have shown that with the combination of our classifying methods we have
succeeded to obtain a complete picture of the reasons underlying
a wide variety of possible dynamical behaviours of the universe on approach
to a spacetime singularity.

\chapter{Braneworld singularities}
\section{Introduction}
In previous chapters we studied singularities that arise within
the framework of a four-dimensional RW-spacetime
that models the whole entity of the observed universe.
In this chapter we study possible singularities that develop when we consider
the different case in which the observed universe is
modeled by a four-dimensional hypersurface situated at a fixed position of an extra spatial
dimension. The whole spacetime is five dimensional: there are four
dimensions of space out of which only three are spanned by the hypersurface and
one dimension of time; such a hypersurface is called a \emph{3-brane} while the
full higher dimensional space is called the \emph{bulk}.

Although a brane can trap particles and forces making it
impossible for them to escape, it \emph{does} interact with the matter and forces in the
bulk. This happens because gravity naturally extends over all dimensions of
space and time and it can therefore influence the fields that are constrained to
live within the boundaries of the brane \cite{liza}. Other matter fields in the bulk can also
interact with the fields on the brane, and the strength of this interaction
is controlled by a coupling function that is model-dependent \cite{liza}.

Branes were discovered within the framework of string theory as locations
where open strings are constrained to end. Their importance was reinforced
when it was realised that they can play a catalytic role in proving duality
between the different versions of string theory. Driven by these new discoveries,
a great deal of recent research focuses on building model universes that include
branes (see Refs. \cite{br_1}-\cite{br_59} for a partial list).

All these different model universes that flourish in the recent
literature are termed \emph{braneworlds} 
and in many cases provided satisfactory explanations
to some of the most challenging unresolved issues of cosmology and particle
physics such as the hierarchy problem and the cosmological constant problem.
Below, we refer briefly to the ideas behind three of the most recognised
braneworld cosmological scenarios and to the implications they offer
for the singularity problem.
The first scenario is the one introduced
by Nima Arkani-Hamed, Savas Dimopoulos and
Gia Dvali in \cite{ADD} dubbed as the ADD scenario and the two scenarios
that follow are those introduced by Lisa Randall and Raman Sundrum in \cite{RS1}
and \cite{RS2} known as RS$^{1}$ and RS$^{2}$.

According to the ADD scenario all standard particles and
forces except gravity are confined on a brane while the extra dimension is
as big as a tenth of a millimeter; this is surprising since until then it was
believed that extra dimensions were significantly smaller. In this scenario
the hierarchy problem, that poses the question of why gravity at small distances
appears to be so much weaker than the other forces, \emph{is} solved:
the effect of gravity on particles could fade away because of the large extra
dimension so that gravity would appear much weaker than the brane-restricted
electromagnetic strong and weak forces whose strength
remains unaffected exactly because they cannot experience the extra
dimension.

A different approach to the hierarchy problem is provided in \cite{RS1} and
\cite{RS2}. In the RS$^{1}$ scenario the extra dimension is considered to be
small in size and bounded between two branes called the Gravitybrane and the
Weakbrane. In this model the authors make essential use of a warped geometry
where the \emph{warp factor} 
that defines how space, time, energies and masses
change with distance (measured along the fifth extra dimension)
decreases exponentially along the fifth dimension.
A consequence of this configuration is that the energy of the bulk and the
branes warps the spacetime in such a way that gravity is felt strong on the
Gravitybrane and weak on the Weakbrane which accommodates all standard particles
and forces. In contrast, in the RS$^{2}$ scenario where there is only one brane,
the extra dimension is infinite in size but it can nonetheless remain invisible
provided that the spacetime is suitably warped.

In this chapter we use the concept of a warped geometry in our braneworld
models but we will not focus so much on the exploration of the
phenomenological implications that these models offer,
rather on the interesting features of finite-distance singularities that
they exhibit as well as on very basic questions (to which we refer below)
that the development of such singularities erases.

In particular, we will focus on the analysis of a new version of a braneworld
model that was originally proposed by Arkani-Hamed \emph{et al.} in
\cite{nima}. In the original version \cite{nima} the model consists of a flat
three-brane embedded in a five dimensional bulk space that is filled with a
scalar field minimally coupled to the bulk and conformally coupled to the fields
on the brane.
The metric describing the five dimensional spacetime is a warped product
as in the RS scenarios. A basic feature of this model is the development of a
singularity at a finite distance from the
brane around which the warp factor, its derivative and the energy density of the
scalar field all diverge. This singularity was used in \cite{nima} in order to
provide an explanation to the cosmological constant problem while preserving four
dimensional flatness.
In particular, the authors in \cite{nima} argue that this singularity
on the one hand successfully cuts off the size of the extra dimension
giving rise to the observed four dimensional flatness, and on the other hand it
acts as a reservoir through which all of the vacuum energy is emptied so that
its value is brought to agrement with the observed one.

Our interest in this particular braneworld model springs precisely from the fact
that it exhibits a finite-distance singularity which bares resemblance to
the finite-time singularities that are the core of our research. Our quest is
to provide answers to the following questions that naturally arise: How general is the
behaviour found in \cite{nima}? Is the finite-distance singularity the
only possible outcome when we pass on to a curved instead of a flat brane?
How strongly are the dynamical features of the model affected by a different
form of bulk matter such as a perfect fluid? 
Is the type of singularity found in \cite{nima} the only possible one in
these new braneworld models, and if not, what other types of finite-distance
singularities can arise?

In this chapter we show that in a curved brane the finite-distance
singularity found in \cite{nima} shifts to an infinite distance away from the
brane, while with a perfect fluid 
in the bulk new types of finite-distance singularities emerge. On approach to
these singularities the warp factor can do only two things: It will either tend
to zero (collapse-brane I or II), or diverge
(big rip-brane singularities\footnote{See Def. \ref{brane_sing_names}})
while the energy density of the
bulk-matter components usually diverges. Our analysis implies that the effect
of an infinite energy leak to the extra dimension
is indeed generic.
For a qualitative analysis of such models the method of asymptotic splittings,
expounded in Appendix E, is a most useful tool.

The plan of this chapter is as follows: In Section 5.2, we review the basic
aspects of the embedding of a three-brane in a five dimensional bulk space. 
In Section 5.3, we give the field equations that describe the dynamics of a
three-brane embedded in a five dimensional bulk with a scalar field; in
Subsection 5.3.1, we analyse the possible
asymptotic behaviours of that model using the method of asymptotic splittings.
In Section 5.4, we explore the consequences of altering the matter in the
bulk. In particular, the case where the bulk is filled with a
perfect fluid has a dramatic effect on the global dynamics.
We no longer encounter just the singularities of the
type found in \cite{nima} but now we have dynamics characterised by the
finite-distance singularities referred to in the previous paragraph.
In Section 5.6, we conclude and discuss our results.
\section{Brane geometry}
Consider a four-dimensional hypersurface 
$\Sigma$ that is situated at the arbitrary but fixed position $Y_{\ast}$
where $Y$ is the coordinate along the fifth dimension of the embedding space.
The brane has negligible thickness compared to the extra
dimension (and also the other dimensions) extending over the higher dimensional
bulk space and can be viewed as a thin surface (surface layer).
The brane therefore naturally divides the bulk in two regions
$V^{+}$ and $V^{-}$ each containing one of the boundary surfaces
$\Sigma ^{+}$ and $\Sigma ^{-}$ that confine the brane. The two
regions $V^{+}$ and $V^{-}$ are expected to look identical from $\Sigma ^{+}$ and
$\Sigma ^{-}$, the only difference being that their normal vector should be
reversed; in other words our braneworld  admits a $Y\rightarrow -Y$ symmetry. The extrinsic
curvatures of the two embeddings, $K_{ab}^{+}$ and
$K_{ab}^{-}$, therefore satisfy
\be
\label{jump_extr}
K_{ab}^{+}=-K_{ab}^{-}, \quad \textrm{everywhere on}\quad \Sigma.
\ee

Using the Gauss-Codazzi formalism, analysed in Appendix B.2, we can write the
Ricci $5$-tensor in a form analogous to (\ref{ricciepsilon})
(note that $\epsilon (n)=1$ ), i.e.,
\be
R_{ab}=
^{(4)}R_{ab}-K_{ab}K+2K_{an}K^{n}_{b}-\frac{\partial K_{ab}}{\partial Y},
\ee
or
\be
\label{ricci5}
R_{ab}=-\frac{\partial K_{ab}}{\partial Y}+Z_{ab},
\ee
where
\be
Z_{ab}=^{(4)}R_{ab}-K_{ab}K+2K_{an}K^{n}_{b}.
\ee
Tracing the Einstein equations
\be
\label{einst_5d}
R_{ab}-\frac{1}{2}g_{ab}R=\kappa_{5}^{2}T_{ab},
\ee
where $\kappa^{2}_{5}=8\pi G_{5}=M_{5}^{-3}$ with $G_{5}$ being the five
dimensional Newton's constant and $M_{5}$ the five
dimensional Planck mass, we find that the scalar curvature is given by
\be
R=-\frac{2}{3}\kappa_{5}^{2}T.
\ee
Substituting back in (\ref{einst_5d}) leads to the following equivalent form
of the Einstein equations:
\newpage
\be
\label{einstricci}
R_{ab}=\kappa_{5}^{2} \left(T_{ab}-\frac{1}{3}g_{ab}T\right).
\ee
Inserting Eq. (\ref{ricci5}) in Eq. (\ref{einstricci}) and integrating over the
layer defined by the thickness of the brane we obtain \cite{israel1},
\cite{israel2}
\be
-K_{ab}^{+}+K_{ab}^{-}+
\int_{Y_{\ast}-\epsilon}^{Y_{\ast}+\epsilon}Z_{ab}dY=
\kappa^{2}_{5} \left(T_{ab}-\frac{1}{3}g_{ab}T\right).
\ee

In the passage to the limit $\epsilon\rightarrow 0$ and
for fixed $-K_{ab}^{+}$, $K_{ab}^{-}$ we naturally expect that
$K_{ab}$ remains bounded inside the layer and hence
the integral of $Z_{ab}$ over the layer should vanish \cite{israel1},
\cite{israel2}.
Taking thus the limit of the above equation for
$\epsilon\rightarrow 0$ and using
Eq. (\ref{jump_extr}), we find the following \emph{junction condition}
\cite{israel1}, \cite{israel2}  (see also \cite{mtw} pp. 551-556)
\be
\label{junction_cond}
[K_{ab}]\equiv K_{ab}^{+}-K_{ab}^{-}=2K_{ab}^{+}=-
\kappa^{2}_{5}\left(S_{ab}-\frac{1}{3}g_{ab}S\right),
\ee
that relates the jump of the extrinsic curvature $[K_{ab}]$
across the brane to the surface-energy momentum tensor $S_{ab}$
(this follows from the discussion below) defined by
\be
S_{ab}\equiv\lim_{\epsilon\rightarrow 0}
\int_{Y_{\ast}-\epsilon}^{Y_{\ast}+\epsilon} T_{ab} dY,
\ee
with $S= g^{ab}S_{ab}$ being the trace of $S_{ab}$. It can be proved
(cf. \cite{israel1}, \cite{israel2}, and \cite{mtw} pp. 551-556)
that integration of the rest of the components of the Einstein equations gives
$S_{55}$ and $S_{5a}$ with $a=1,2,3,4$ both zero and
and hence, we conclude that there is no flow of energy from the fields
restricted on $\Sigma$ into the bulk. It is
thus implied 
that the quantity $S_{ab}$ acts as a delta function source of matter existing
only on $\Sigma$ and vanishing off $\Sigma$. This last property suggests that
$S_{ab}$ plays the role of a surface-energy momentum tensor. Combining this with
Eq. (\ref{junction_cond}), it follows that the
extrinsic curvature of the brane is completely determined by the matter confined
inside the brane \cite{binetruy}.
\newpage
\section{Asymptotics of scalar field-brane configuration}
Consider the braneworld model consisting of the 
three-brane configuration
analysed in the previous Section and filled with
a scalar field minimally coupled to the
five-dimensional bulk and conformally
coupled to any fields restricted on the brane. The total action
$S_{total}$ then splits in two parts, namely, the bulk action
$S_{bulk}$ and the brane action $S_{brane}$, i.e.,
\be
\label{s_tot}
S_{total}=S_{bulk}+S_{brane},
\ee
with
\bq
S_{bulk}&=&\int d^{4}x
dY\sqrt{g_{5}}\left(\frac{R}{2\kappa^{2}_{5}}-
\frac{\lambda}{2}(\nabla\phi)^{2}\right),\\
S_{brane}&=&-\int d^{4}x\sqrt{g_{4}}f(\phi),\,\, \textrm{at}\,\,Y=Y_{\ast},
\eq
where
$\lambda$ is a parameter defining the type of scalar field
$\phi$, $\kappa^{2}_{5}=M_{5}^{-3}$, $M_{5}$ is the five
dimensional Planck mass, and $f(\phi)$ denotes the tension of the brane
depending on the scalar field. The tension of the brane can be understood
physically as the resistance that one encounters when trying to distort
the brane: zero tension means that the brane has no resistance so that any
distortion would influence its form dramatically, while infinite tension
would deprive the brane's liberty to evolve dynamically \cite{liza}.

Varying the total action (\ref{s_tot}) with respect to $g^{AB}$,
we find the five-dimensional Einstein field equations in the form \cite{nima}
\be
\label{einst5d}
R_{AB}-\frac{1}{2}g_{AB}R=\lambda \kappa_{5}^{2}
\left(\nabla _{A}\phi\nabla_{B}\phi-\frac{1}{2}g_{AB}(\nabla\phi)^{2}\right)
-\frac{2\kappa_{5}^{2}}{\sqrt{g_{5}}}\frac{\delta S_{brane}}
{\delta g^{\a\b}}\delta_{\a}^{A}\delta_{\b}^{B}\delta(Y),
\footnote{Note here that by working in a similar way as in Appendix D,
we may interpret the energy-momentum tensor of the scalar field
as that of a perfect fluid in the bulk having density $\rho=\rho(Y)$ and
pressure $P=P(Y)$, 
i.e., $T_{AB}=(\rho+P)u_{A}u_{B}-Pg_{AB}$, where $u_{A}=(0,0,0,0,1)$, so that
$P=1/2(\nabla\phi)^{2}=\rho$.}
\ee
while the scalar field equation is obtained by variation of the action
(\ref{s_tot}) with respect to $\phi$ \cite{nima}
\footnote{The left-hand side of this
equation is the analogous left-hand side of the equation of motion
(\ref{emtscalar}) of a scalar field in a four-dimensional spacetime with
$V(\phi)=0$ (see appendix D).}
\newpage
\be
\label{scalarbr}
\lambda \Box _{5}\phi=-\frac{1}{\sqrt{g_{4}}}\frac{\delta S_{brane}}
{\delta\phi}\delta(Y),
\ee
where $A,B=1,2,3,4,5$ and $\a,\b=1,2,3,4$ while
$\delta (Y)=1$ at $Y=Y_{\ast}$ and vanishing everywhere else.

We assume a bulk metric of the form
\be
\label{warpmetric}
ds^{2}=a^{2}(Y)d\tilde{s}^{2}+dY^{2},
\ee
where $d\tilde{s}^{2}$ is the four dimensional flat, de Sitter or anti de Sitter
metric, i.e.,
\be
\label{branemetrics}
d\tilde{s}^{2}=
\left\{
         \begin{array}{l}
           -dt^{2}+dr^{2}+r^{2}d\sigma^{2} \\
           -dt^{2}+\frac{\cosh^{2}(H t)}{H^{2}}
           (dr^{2}+\sin^{2} r d\sigma^{2})\\
           -dt^{2}+ \frac{\cos^{2} (H t)}{H^{2}}
           (dr^{2}+\sinh^{2} r d\sigma^{2}),
         \end{array}
       \right.
\ee
where $d\sigma^{2}=d\theta^{2}+\sin^{2}\theta d\varphi^{2}$.
The field equations then take the form
\bq
\label{feq1}
\frac{a'^{2}}{a^{2}}&=&\frac{\lambda\kappa^{2}_{5}\phi'^{2}}{12}+\frac{k
H^{2}}{a^{2}} \\
\label{feq2}
\frac{a''}{a}&=&-\frac{\lambda\kappa^{2}_{5}\phi'^{2}}{4} \\
\label{feq3}
\phi''+4\frac{a'}{a}\phi'&=&0,
\eq
where $k=0,\pm 1$ and $H^{-1}$ is the de Sitter curvature radius and
the prime denotes differentiation with respect to $Y$.
The variables to be determined are $a$, $a'$ and $\phi$.
These three equations are not independent since Eq. (\ref{feq2}) was
derived after substitution of Eq. (\ref{feq1}) in the field equation
$G_{\a\a}=\kappa_{5}^{2}T_{\a\a}$, $\a=1,2,3,4$,
\be
\frac{a''}{a}+\frac{a'^{2}}{a^{2}}-\frac{kH^{2}}{a^{2}}
=-\lambda\kappa_{5}^{2}\frac{\phi'^{2}}{6}.
\ee
In our analysis we use the independent equations (\ref{feq2}) and
(\ref{feq3}) to determine the unknown variables $a$, $a'$ and $\phi'$
and the Eq. (\ref{feq1}) will then play the role of a constraint equation
for our system.

The extrinsic curvature of the three-brane is calculated to be
\be
\label{extrbr}
K_{\alpha\beta}=\frac{1}{2}\frac{\partial}{\partial Y} g_{\a\b}=a'(Y)a(Y),
\ee
while the surface energy-momentum tensor (defined only on the brane and
vanishing off the brane) is now given by
\be
\label{s_ij}
S_{\a\b}=-\frac{2\kappa^{2}_{5}}{\sqrt{g_{5}}}
\frac{\delta S_{brane}}{\delta g^{\alpha\beta}}
\delta^{A}_{\a}\delta^{B}_{\b}\delta(Y)=-a^{2}(Y_{\ast})f(\phi(Y_{\ast})).
\ee
Solving Eq. (\ref{einst5d}) on the brane is equivalent to
substituting Eqs. (\ref{extrbr}) and (\ref{s_ij}) in condition
(\ref{junction_cond}) which leads to the following relation to be satisfied by
$a'(Y_{\ast})$:
\be
\label{bound1}
a'(Y_{\ast})=-\frac{\kappa_{5}^{2}}{6}f(\phi(Y_{\ast}))a(Y_{\ast}).
\ee
On the other hand, Eq. (\ref{scalarbr}) gives directly
\be
\frac{\lambda}{\sqrt{g_{5}}}\frac{\partial}{\partial Y}(\sqrt{g_{5}}\phi')=
\frac{1}{\sqrt{g_{4}}}\int dx^{4}\sqrt{g_{4}}f'(\phi)\delta (Y).
\ee
Integrating over the interval $[Y_{\ast}-\epsilon,Y_{\ast}+\epsilon]$
and taking the limit as $\epsilon\rightarrow 0$, we finally find
\be
\label{bound2}
\phi'(Y_{\ast})=\frac{f'(\phi(Y_{\ast}))}{2\lambda},
\ee
where we have used the $Y\rightarrow -Y$ symmetry, i.e.,
$\phi'^{+}=\phi'(Y_{\ast}+\epsilon)=-\phi'(Y_{\ast}-\epsilon)=-\phi'^{-}$.

Our purpose now is to find all possible asymptotic behaviours around the assumed
position of a singularity, denoted by $Y_{s}$, emerging from general or particular
solutions of the system of equations (\ref{feq1})-(\ref{feq3}) describing our
braneworld cosmology. As we mentioned in the Introduction a most useful
tool for this analysis is the method of asymptotic splittings
(\cite{skot}, see also Appendix E).
According to this method, we start by setting
\be
x=a, \quad y=a', \quad z=\phi',
\ee
where the differentiation is considered with
respect to $\Upsilon=Y-Y_{s}$. 
The field equations (\ref{feq2}) and (\ref{feq3}) become the following
system of ordinary differential equations:
\bq
\label{syst1_1}
x'&=&y \\
\label{syst1_2}
y'&=&-\lambda Az^{2}x \\
\label{syst1_3}
z'&=&-4y\frac{z}{x},
\eq
where $A=\kappa^{2}_{5}/4$. Hence, we have the vector field
\be
\mathbf{f}=\left(y,-\lambda Az^{2}x,-4y\frac{z}{x}\right)^{\intercal}.
\ee
Equation (\ref{feq1}) does not include any terms containing derivatives
with respect to $\Upsilon$; it is to be regarded as a constraint equation
for the above system (\ref{syst1_1})-(\ref{syst1_3}). In terms of the new
variables the constraint is written in the form
\be
\label{constraint1}
\frac{y^{2}}{x^{2}}=\frac{A\lambda}{3} z^{2}+\frac{k H^{2}}{x^{2}}.
\ee
Substituting the forms $$(x,y,z)=(\alpha\Upsilon^{p},\beta
\Upsilon^{q},\delta\Upsilon^{r}),$$ where
\be
(p,q,r)\in\mathbb{Q}^{3} \quad \textrm{and} \quad
(\alpha,\beta,\delta)\in \mathbb{C}^{3}\smallsetminus\{\mathbf{0}\},
\ee
in the dynamical system (\ref{syst1_1})-(\ref{syst1_3}), we seek to determine the
possible \emph{dominant balances} in the neighborhood of the singularity, that is
pairs of the form
\be
\mathcal{B}=\{\mathbf{a},\mathbf{p}\}, \quad \textrm{where} \quad
\mathbf{a}=(\alpha,\beta,\delta), \quad \mathbf{p}=(p,q,r).
\ee
For our system (\ref{syst1_1})-(\ref{syst1_3}) we find$^{\dag}$:
\bq
\label{sing}
\mathcal{B}_{1}&=&\{(\alpha,\alpha/4,\sqrt{3}/(4\sqrt{A\lambda})),(1/4,-3/4,-1)\}\\
\label{nonsing} \mathcal{B}_{2}&=&\{(\alpha,\alpha,0),(1,0,-1)\}
\eq
\footnotetext{$^{\dag}$ We have ignored here a third balance
$\mathcal{B}_{3}=\{(\alpha,0,0),(0,-1,-1)\}$ since
it does not give the necessary $-1$ as one of the Kowalevskaya exponents
(see Appendix E for more details).}

Since (\ref{syst1_1})-(\ref{syst1_3}) is a weight-homogeneous system, the scale
invariant solutions given by the above balances are exact solutions of the
system. The balance  $\mathcal{B}_{1}$ satisfies the constraint
Eq. (\ref{constraint1}) only for $k=0$ corresponding thus to a general
(as we will see in the next paragraphs)
solution that allows \emph{only} for a flat spatial brane geometry. In contrast,
$\mathcal{B}_{2}$ corresponds to a particular solution for a \emph{curved brane}
since it satisfies Eq. (\ref{constraint1}) for $k\neq 0$ and
$\alpha^{2}=k H^{2}$ (here we have to sacrifice one arbitrary constant
by setting it equal to $kH^{2}$). 

Let us now focus on each of the two possible balances separately
and build series expansions in the neighborhood of the
singularity. We start by calculating the Kowalevskaya
exponents, eigenvalues of the matrix
$\mathcal{K}=D\mathbf{f}(\mathbf{a})-\textrm{diag}(\mathbf{p}),$
where $D\mathbf{f}(\mathbf{a})$ is the Jacobian matrix of $f$,
\be
D\mathbf{f}(x,y,z)=\left(
                     \begin{array}{ccc}
                       0 & 1             & 0 \\
               -A\lambda z^{2} & 0             & -2A\lambda x z \\
       \dfrac{4y z}{x^{2}} & -\dfrac{4z}{x} & -\dfrac{4y}{x} \\
                     \end{array}
                   \right),
\ee
evaluated on $\mathbf{a}$.
For the first balance we have that
$\mathbf{a}=(\alpha,\alpha/4,\sqrt{3}/(4\sqrt{A\lambda}))$,
and $\mathbf{p}=(1/4,-3/4,-1)$, thus
\be
\mathcal{K}=\left(
                     \begin{array}{ccc}
                       -\dfrac{1}{4} & 1                                   & 0 \\
                      -\dfrac{3}{16} & \dfrac{3}{4}             & -\dfrac{\sqrt{3A\lambda}\a}{2} \\
\dfrac{\sqrt{3}}{4\a\sqrt{A\lambda}} & -\dfrac{\sqrt{3}}{\a\sqrt{A\lambda}} & 0 \\
                     \end{array}
                   \right).
\ee
The $\mathcal{K}$-exponents are then given by
\bq
\textrm{spec}(\mathcal{K})=\{-1,0,3/2\}.
\eq
As it follows from the analysis in Appendix E, these exponents correspond to the
indices of the series coefficients where arbitrary constants first appear.
The $-1$ exponent signals the arbitrary position of the singularity, $Y_{s}$.
This first balance has two non-negative rational eigenvalues
which means that it describes a general solution, i.e., a solution
having the full number of arbitrary constants. In order to construct
an asymptotic expansion of this solution, we substitute in the system
(\ref{syst1_1})-(\ref{syst1_3}) the series expansions
$$\mathbf{x}=\Upsilon^{\mathbf{p}}(\mathbf{a}+
\Sigma_{j=1}^{\infty}\mathbf{c}_{j}\Upsilon^{j/s}),$$ where
$\mathbf{x}=(x,y,z)$, $\mathbf{c}_{j}=(c_{j1},c_{j2},c_{j3})$, and $s$ is
the least common multiple of the denominators of the positive
eigenvalues; here $s=2$, and the corresponding series expansions are
given by,
\be
x=\Sigma_{j=0}^{\infty}c_{j1}\Upsilon^{j/2+1/4},\quad
y=\Sigma_{j=0}^{\infty}c_{j2}\Upsilon^{j/2-3/4},\quad
z=\Sigma_{j=0}^{\infty}c_{j3}\Upsilon^{j/2-1}.
\ee
Following this, we arrive at the asymptotic solution around the singularity
in the form
\bq
\label{Puis_1x}
x&=&\alpha\Upsilon^{1/4}+\frac{4}{7}c_{32}\Upsilon^{7/4}+\cdots \\
y&=&\frac{\alpha}{4}\Upsilon^{-3/4}+c_{32}\Upsilon^{3/4}+\cdots\\
\label{Puis_1z}
z&=&\frac{\sqrt{3}}{4\sqrt{A\lambda}}\Upsilon^{-1}-
\frac{4\sqrt{3}}{7\alpha\sqrt{A\lambda}}c_{32}\Upsilon^{1/2}+\cdots.
\eq

The last step is to check if for each $j$ satisfying
$j/2=\rho$ with $\rho$ a positive eigenvalue, the corresponding
eigenvector $v$ of the $\mathcal{K}$ matrix is such that the \emph{compatibility
conditions} hold, namely,
\be
v^{\top}\cdot P_{j}=0,
\ee
where $P_{j}$ are polynomials in
$\mathbf{c}_{i},\ldots, \mathbf{c}_{j-1}$ given by
\be
\mathcal{K}\mathbf{c}_{j}-(j/s)\mathbf{c}_{j}=P_{j}.
\ee
Here the corresponding relation $j/2=3/2$, is valid only for $j=3$
and the associated eigenvector is
\be
\upsilon=\left(-\frac{\a\sqrt{A\lambda}}{\sqrt{3}},
-\frac{7\a\sqrt{A\lambda}}{4\sqrt{3}},1\right).
\ee
The compatibility condition,
$\upsilon\cdot (\mathcal{K}-(3/2)\mathcal{I}_{3})\mathbf{c}_{3}=0,$ therefore
indeed holds since,
\be
(\mathcal{K}-(3/2)\mathcal{I}_{3})\mathbf{c}_{3}=c_{32}
\left(
  \begin{array}{ccc}
    -\dfrac{7}{4} & 1 & 0 \\ \\
    -\dfrac{3}{16} & -\dfrac{3}{4} & -\dfrac{\a\sqrt{3A\lambda}}{2} \\ \\
    \dfrac{\sqrt{3}}{4\a\sqrt{A\lambda}} & -\dfrac{\sqrt{3}}{\a\sqrt{A\lambda}} & -\dfrac{3}{2} \\
  \end{array}
\right)
\left(
  \begin{array}{c}
    \dfrac{4}{7} \\ \\
    1 \\ \\
     -\dfrac{4\sqrt{3}}{7\a\sqrt{A\lambda}}\\
  \end{array}
\right)=\left(
          \begin{array}{c}
            0 \\ \\
            0 \\ \\
            0 \\
          \end{array}
        \right).
\ee
This shows that a representation of the solution asymptotically with a
Puiseux series as this is given by Eqs. (\ref{Puis_1x})-(\ref{Puis_1z}) is valid.
We thus conclude that near the singularity at finite distance $Y_{s}$
from the brane the asymptotic forms of the variables are
\be
\label{behscI}
a\rightarrow 0, \quad a'\rightarrow\infty, \quad \phi'\rightarrow \infty.
\ee
This is exactly the asymptotic behaviour of the solution found previously by
Arkani-Hammed \emph{et al} in \cite{nima}.

However, the previous behaviour is not the only possible one;
the analysis below shows that a very different outcome results by the
asymptotic analysis of the second balance.
The $\mathcal{K}$-matrix of the second balance is
\newpage
\be
\mathcal{K}=D\mathbf{f}((\a,\a,0))-\textrm{diag}(1,0,-1)=\left(
  \begin{array}{ccc}
    -1 & 1 &  0 \\
    0 & 0 &  0 \\
    0 & 0 & -3 \\
  \end{array}
\right),
\ee
with eigenvalues
\be
\textrm{spec}(\mathcal{K})=\{-1,0,-3\}.
\ee
For the balance $\mathcal{B}_{2}$ we find two distinct negative integer
$\mathcal{K}$-exponents and and hence we expect to find an infinite negative
expansion of a \emph{particular} solution (recall that we had to sacrifice one
arbitrary constant) around the presumed singularity at $Y_{s}$
with the negative $\mathcal{K}$-exponents signaling the positions where the
arbitrary constants first appear \cite{fordy}.
We therefore expand the variables in series with descending powers of $\Upsilon$
in order to meet the two arbitrary constants occurring for $j=-1$ and $j=-3$, i.e.,
\be
x=\Sigma_{j=0}^{-\infty}c_{j1}\Upsilon^{j+1}, \quad
y=\Sigma_{j=0}^{-\infty}c_{j2}\Upsilon^{j}, \quad
z=\Sigma_{j=0}^{-\infty}c_{j3}\Upsilon^{j-1}.
\ee
Substituting these series expansions back in the system
(\ref{syst1_1})-(\ref{syst1_3}),
we find the following asymptotic behaviour
\bq
\label{Puis_2x}
x&=&\alpha\Upsilon+c_{-1\,1}+\cdots \\
y&=&\alpha+\cdots \\
\label{Puis_2z}
z&=&c_{-3\,3}\Upsilon^{-4}+\cdots .
\eq
We now check the compatibility condition for $j=-3$
(the associated eigenvector here is $\upsilon=(0,0,1)$). We find that
\be
P_{-3}=(\mathcal{K}+3\mathcal{I}_{3})\mathbf{c}_{-3}=\left(
                                                \begin{array}{ccc}
                                                  2 & 1 & 0 \\
                                                  0 & 3 & 0 \\
                                                  0 & 0 & 0 \\
                                                \end{array}
                                              \right)
\left(
  \begin{array}{c}
    0 \\
    0 \\
    c_{33} \\
  \end{array}
\right)=\left(
          \begin{array}{c}
            0 \\
            0 \\
            0 \\
          \end{array}
        \right),
\ee
so that the compatibility condition is indeed satisfied.
The expansions given by Eqs. (\ref{Puis_2x})-(\ref{Puis_2z}) are therefore
valid and we can say that as $\Upsilon\rightarrow 0$, or equivalently as
$S=1/\Upsilon\rightarrow \infty$, we have that
\be
\label{behscII}
a\rightarrow \infty, \quad a'\rightarrow \infty, \quad \phi'\rightarrow \infty.
\ee
We thus conclude that there exist two possible outcomes given by
Eqs. (\ref{behscI}) and (\ref{behscII})
for these braneworld models, the dynamical behaviours of which strongly depend
on the spatial geometry of the brane. For a flat brane the model
experiences a finite-distance singularity through which all the
vacuum energy decays (since $\phi'\rightarrow\infty$, as $Y\rightarrow Y_{s}$),
whereas for a de Sitter or anti-de Sitter
brane the singularity is now located at an infinite distance.
Flat solutions correspond to the particular coupling
used by Arkani-Hammed \emph{et al} in \cite{nima}, the only
choice to make this possible. This easily follows by using equations
(\ref{bound1}) and (\ref{bound2}) and solving the FRW equation
(\ref{feq1}) on the brane for $kH^{2}$:
$$kH^{2}=\frac{a^{2}(Y_{\ast})\kappa^{2}_{5}}{12}
\left(\frac{\kappa_{5}^{2}}{3}f^{2}(\phi(Y_{\ast}))-
\frac{f'^{2}(\phi(Y_{\ast}))}{4\lambda}\right).$$ Clearly, $k$ is
identically zero if and only if:
$$\frac{f'(\phi)}{f(\phi)}=2\sqrt{\frac{\lambda}{3}}\kappa_{5},$$ or equivalently, if and
only if $f(\phi)\propto e^{2\sqrt{\lambda/3}\kappa_{5}\phi}$ (Arkani-Hammed
\emph{et al} in \cite{nima} have $\lambda=3$, and hence the appropriate choice for
the brane tension is $f(\phi)\propto e^{2\kappa_{5}\phi}$).

By working with other couplings we can allow for non-flat, maximally symmetric
solutions to exist and avoid having the singularity at a finite
distance away from the position of the brane.

In the following Section we replace the scalar field in the bulk with a perfect
fluid and explore the outcome of the resulting dynamical system.
\section{Perfect fluid in the bulk}
We consider here the case that the bulk matter takes the form of a perfect fluid
with equation of state $P=\gamma \rho$, where $P$ and $\rho$ are functions of $Y$ only.
This fluid can be viewed as an entity in its own right but we can also use it
to represent an average contribution of the various scalar fields living in
the bulk.

The field equations describing the braneworld model in this case are derived
by replacing the energy-momentum tensor in the first term of the
RHS of Eq. (\ref{einst5d}) by that of a perfect fluid,
$T_{AB}=(\rho+P)u_{A}u_{B}-Pg_{AB}$, $u_{A}=(0,0,0,0,1)$. They read:
\bq
\label{syst2i}
\frac{a''}{a}&=&-\kappa^{2}_{5}\frac{(1+2\gamma)}{6}\rho, \\
\label{syst2ii}
\rho'&+&4(1+\gamma)H\rho=0,\\
\label{syst2iii}
\frac{a'^{2}}{a^{2}}&=&\frac{\kappa^{2}_{5}}{6} A \rho+\frac{k H^{2}}{a^{2}}.
\eq
Introducing the new variables
\be
x=a, \quad y=a', \quad w=\rho,
\ee
Eqs. (\ref{syst2i})-(\ref{syst2iii}) can be written in the form
\bq
\label{syst2a}
x'&=&y, \\
y'&=&-2A\frac{(1+2\gamma)}{3}w x, \\
\label{syst2c}
w'&=&-4(1+\gamma)\frac{y}{x}w,
\eq
with the constraint equation given by
\be
\label{constraint3}
\frac{y^{2}}{x^{2}}=\frac{2}{3} A w+\frac{k H^{2}}{x^{2}}, \quad
A=\kappa_{5}^{2}/4.
\ee
The system (\ref{syst2a})-(\ref{syst2c}) is weight homogeneous with vector
field
\newpage
$$\mathbf{f}=\left(y,-2A\frac{(1+2\gamma)}{3}w x,-4(1+\gamma)
\frac{y}{x}w\right)^{\intercal},$$
and by setting $(x,y,w)=(\alpha\Upsilon^{p},\beta \Upsilon^{q},\delta
\Upsilon ^{s})$, we find the following set of five possible balances:
\bq
_{\gamma}\mathcal{B}_{1}&=&
\left\{\left(\alpha,\alpha p,\frac{3}{2A}p^{2}\right),(p,p-1,-2)\right\},
\quad p=\frac{1}{2(\gamma+1)}, \, \gamma \neq -1/2,-1,\\
_{\gamma}\mathcal{B}_{2}&=&\{(\alpha,\alpha,0),(1,0,-2)\},\quad \gamma \neq -1/2,\\
_{-1/2}\mathcal{B}_{3}&=&\{(\alpha,\alpha,0),(1,0,r)\},
\\
_{-1/2}\mathcal{B}_{4}
&=&\{(\alpha,\alpha,\delta),(1,0,-2)\},\\ 
_{-1/2}\mathcal{B}_{5}&=&\{(\a,0,0), (0,-1,r)\},
\eq
where $_{-1/2}\mathcal{B}_{i}\equiv_{\gamma=-1/2}\mathcal{B}_{i}$.
These balances are \emph{exact} solutions of the system and they should
therefore also satisfy the constraint equation (\ref{constraint3}).
This fact alters the generality of the solution represented by each balance
and determines the type of spatial geometry that it admits:
The balances $_{\gamma}\mathcal{B}_{1}$ and $_{-1/2}\mathcal{B}_{5}$ describe
a potentially general solution corresponding to a flat brane, while the balances
$_{\gamma}\mathcal{B}_{2}$ and $_{-1/2}\mathcal{B}_{3}$ describe
particular solutions of curved branes since we already have to sacrifice the
arbitrary constant $\a$ by accepting that $\a^{2}=kH^{2}$. The balance
$_{-1/2}\mathcal{B}_{4}$ on the other hand
describes a particular solution of a curved or flat brane since we have to set
$\d=(3/(2A))(1-kH^{2}/\a^{2})$ in order to satisfy Eq.(\ref{constraint3}).

Before we proceed to the analysis of each of these balances in the next
Subsections, we find it useful to introduce the following definition of
finite-distance singularities:
\begin{definition}
\label{brane_sing_names}
Let $a$ be the warp factor in the metric (\ref{warpmetric}), $a'$
its derivative and $\varrho$ the energy density of the matter component
in the bulk. We call a finite-distance brane-singularity occurring with:

\item[i)] $a\rightarrow 0$, $a'\rightarrow \infty$ and $\varrho\rightarrow \infty$:
a collapse-brane I singularity.

\item[ii)] $a\rightarrow 0$, $a'\rightarrow c$,
where $c$ is an arbitrary constant, and $\varrho\rightarrow\infty$:
a collapse-brane II singularity.

\item[iii)] $a\rightarrow \infty$, $a'\rightarrow \infty$ and
$\varrho\rightarrow\infty$: a big rip-brane singularity.
\end{definition}
Note that with this terminology, the finite-distance singularity we met in the
previous Section is a collapse-brane I singularity.
\subsection{The balance $_{\gamma}\mathcal{B}_{1}$}
We begin our analysis by calculating the Kowalevskaya-matrix
for the first balance. The Jacobian matrix of $\mathbf{f}$ is given by,
\be
D\mathbf{f}(x,y,z)=\left(
                     \begin{array}{ccc}
                       0 & 1             & 0 \\ \\
               -\dfrac{2}{3}(1+2\gamma)A w & 0             & -\dfrac{2}{3}(1+2\gamma)A x  \\ \\
       4(1+\gamma)\dfrac{y w}{x^{2}} & -4(1+\gamma)\dfrac{w}{x} & -4(1+\gamma)\dfrac{y}{x} \\
                     \end{array}
                   \right)
\ee
and 
\beq
\quad\quad\quad_{\gamma}\mathcal{K}_{1}&=&D\mathbf{f}\left(\a,\a p,\frac{3}{2A}p^{2}\right)-\textrm{diag}(p,p-1,-2)\\
&=& D\mathbf{f}\left(a,\frac{a}{2(1+\gamma)},\frac{3}{8A(1+\gamma)^{2}}\right)
-\textrm{diag}\left(\frac{1}{2(1+\gamma)},-\frac{1+2\gamma}{2(1+\gamma)},-2\right)
\eeq
\be
=\left(
                     \begin{array}{ccc}
                       -\dfrac{1}{2(1+\gamma)} & 1             & 0 \\ \\
               -\dfrac{1+2\gamma}{4(1+\gamma)^{2}} & \dfrac{1+2\gamma}{2(1+\gamma)}           & -\dfrac{2}{3}(1+2\gamma)A \a  \\ \\
       \dfrac{3}{4(1+\gamma)^{2}A \a} & -\dfrac{3}{2(1+\gamma)A \a} & 0 \\
                     \end{array}
                   \right).
\ee
The eigenvalues of this matrix are
\be
\textrm{spec}(_{\gamma}\mathcal{K}_{1})=
\left\{-1,0,\frac{1+2\gamma}{1+\gamma}\right\}.
\ee
The last eigenvalue is a function of the $\gamma$ parameter and it is positive
when either $\gamma<-1$ or $\gamma>-1/2$. We are going to focus on each of these
two ranges of $\gamma$ and explore the possible outcome of the system and the
character of the finite-distance singularities that occur each time.

Consider first the case with $\gamma<-1$. For purposes of illustration, we take
$\gamma=-2$. Then the
balance and the $_{-2}\mathcal{K}_{1}$-exponents read respectively,
\bq
_{-2}\mathcal{B}_{1}&=&\{(\a,-\a/2,3/(8A)),(-1/2,-3/2,-2)\},\\
\textrm{spec}(_{-2}\mathcal{K}_{1})&=&\{-1,0,3\}.
\eq
Substituting the value $\gamma=-2$ in the system given by
Eqs. (\ref{syst2a})-(\ref{syst2c}) and also the forms
\be
x=\Sigma_{j=0}^{\infty}c_{j1}\Upsilon^{j-1/2},\quad
y=\Sigma_{j=0}^{\infty}c_{j2}\Upsilon^{j-3/2},\quad
z=\Sigma_{j=0}^{\infty}c_{j3}\Upsilon^{j-2},
\ee
we expect to meet the third arbitrary constant at $j=3$. Indeed we find,
\bq
\label{ph_B1x}
x&=& \a\Upsilon^{-1/2}+\frac{2}{3}A\a c_{3\,3}\Upsilon^{5/2}+\cdots,\\
y&=& -\frac{\a}{2}\Upsilon^{-3/2}+\frac{5}{3}A \a c_{3\,3}\Upsilon^{3/2}+\cdots,\\
\label{ph_B1w}
w&=& \frac{3}{8A}\Upsilon^{-2}+c_{3\,3}\Upsilon+\cdots.
\eq
The compatibility condition is trivially satisfied for $j=3$ since
\be
(\mathcal{K}-3\mathcal{I}_{3})\mathbf{c}_{3}=
\left(
  \begin{array}{ccc}
    -\dfrac{5}{2} & 1 & 0 \\ \\
    \dfrac{3}{4} & -\dfrac{3}{2} & 2A\a \\ \\
    \dfrac{3}{4A\a} & \dfrac{3}{2A\a} & -3 \\
  \end{array}
\right)c_{3\,3}\left(
                 \begin{array}{c}
                   \dfrac{2}{3}A\a \\ \\
                   \dfrac{5}{3}A\a  \\ \\
                   1 \\
                 \end{array}
               \right)=\left(
                         \begin{array}{c}
                           0 \\
                           0 \\
                           0 \\
                         \end{array}
                       \right).
\ee
The series expansions given by Eqs. (\ref{ph_B1x})-(\ref{ph_B1w}) are therefore
valid asymptotically for $\Upsilon\rightarrow 0$ so that
\be
a\rightarrow \infty,\quad a'\rightarrow -\infty, \quad\rho\rightarrow \infty.
\ee
We therefore conclude that the balance $_{\gamma}\mathcal{B}_{1}$ 
describes a general solution (corresponding to a flat brane) around a big
rip-brane singularity when the perfect fluid
satisfies a phantom-like equation of state, i.e., $\gamma<-1$.
Note that using the analogy between the warp factor and the scale factor we
can say that this singularity bares many similarities to the
analogous one arising in the classical theory (see Section 4.3) since it is
characterized by all quantities $a$, $\rho$ and consequently $P$ and $H$ becoming
divergent asymptotically.

In order to illustrate the second range of values of $\gamma$, $\gamma>-1/2$,
we assume that $\gamma=-1/4$. Then
\bq
_{-1/4}\mathcal{B}_{1}&=&\{(\a,-2\a/3,2/(3A)),(2/3,-1/3,-2)\},\\
\textrm{spec}(_{-1/4}\mathcal{K}_{1})&=&\{-1,0,2/3\}.
\eq
Substituting in the system (\ref{syst2a})-(\ref{syst2c}) the particular value
$\gamma=-1/4$ and the forms
\be
x=\Sigma_{j=0}^{\infty}c_{j1}\Upsilon^{j/3+2/3}, \quad
y=\Sigma_{j=0}^{\infty}c_{j2}\Upsilon^{j/3-1/3}, \quad
z=\Sigma_{j=0}^{\infty}c_{j3}\Upsilon^{j/3-2},
\ee
we arrive at the following asymptotic expansions:
\bq
\label{g_B1x}
x&=& \a\Upsilon^{2/3}-\frac{A\a}{2}c_{2\,3}\Upsilon^{4/3}+\cdots,\\
y&=& \frac{2}{3}\a\Upsilon^{-1/3}-\frac{2}{3}A \a c_{2\,3}\Upsilon^{1/3}+\cdots,\\
\label{g_B1w}
w&=& \frac{2}{3A}\Upsilon^{-2}+c_{2\,3}\Upsilon^{-4/3}+\cdots.
\eq
Again, the compatibility condition is bound to be satisfied since in this case
\newpage
\be
(\mathcal{K}-(2/3)\mathcal{I}_{3})\mathbf{c}_{2}=
\left(
  \begin{array}{ccc}
    -\dfrac{4}{3} & 1 & 0 \\ \\
    -\dfrac{2}{9} & -\dfrac{1}{3} & -\dfrac{A\a}{3} \\ \\
    \dfrac{4}{3A\a} & -\dfrac{2}{A\a} & -\dfrac{2}{3} \\
  \end{array}
\right)c_{2\,3}\left(
                 \begin{array}{c}
                   -\dfrac{A\a}{2} \\ \\
                   -\dfrac{2A\a}{3}  \\ \\
                   1 \\
                 \end{array}
               \right)=\left(
                         \begin{array}{c}
                           0 \\ \\
                           0 \\ \\
                           0 \\
                         \end{array}
                       \right).
\ee
Eqs. (\ref{g_B1x})-(\ref{g_B1w}) then imply that as $\Upsilon\rightarrow 0$,
\be
a\rightarrow 0, \quad a'\rightarrow \infty, \quad \rho\rightarrow \infty.
\ee
This second asymptotic behaviour corresponds to a general solution of a flat
brane that is valid around a collapse-brane I singularity.
We thus 
regain a behaviour similar to that of the balance in (\ref{sing}).
However, an important outcome of our analysis is that there is now in
addition to that previously met, 
an equally general distinct \emph{big rip-brane singularity} that
may also arise during the evolution of a flat brane.
\subsection{The balance $_{\gamma}\mathcal{B}_{2}$}
For the second balance $_{\gamma}\mathcal{B}_{2}$, we find that
\be
_{\gamma}\mathcal{K}_{2}=D\mathbf{f}\left(\a,\a,0\right)
-\textrm{diag}\left(1,0,-2\right)=
\left(
  \begin{array}{ccc}
    -1 & 1 & 0 \\
    0 & 0 & -\dfrac{2}{3}A\a (1+2\gamma)\\
    0 & 0 & -2(1+2\gamma) \\
  \end{array}
\right),
\ee
and hence its spectrum is
\be
\textrm{spec}(_{\gamma}\mathcal{K}_{2})=\{-1,0,-2(1+2\gamma)\}.
\ee

We note here that the third arbitrary constant will appear at
$j=-2(1+2\gamma)$. After substituting the forms
\be
x=\Sigma_{j=0}^{\infty}c_{j1}\Upsilon^{j+1},\quad
y=\Sigma_{j=0}^{\infty}c_{j2}\Upsilon^{j},\quad
z=\Sigma_{j=0}^{\infty}c_{j3}\Upsilon^{j-2},
\ee
in the system (\ref{syst2a})-(\ref{syst2c}),
we may try giving different values to $\gamma$:
inserting $\gamma =-3/4$ in the system we meet the third arbitrary constant
at $j=1$ ($\textrm{spec}(_{-3/4}\mathcal{K}_{2})=\{-1,0,1\}$). The
variables $(x,y,w)$ then expand in the following way:
\bq
\label{-3/4_B2x}
x&=& \a\Upsilon+\frac{A\a}{6}c_{1\,3}\Upsilon^{2}+\cdots,\\
y&=& \a+\frac{A\a}{3} c_{1\,3}\Upsilon+\cdots,\\
\label{-3/4_B2w}
w&=&  c_{1\,3}\Upsilon^{-1}+\cdots.
\eq
We ought to check the validity of the compatibility condition for $j=1$. But this
is again trivially satisfied since
\be
(_{-3/4}\mathcal{K}_{2}-\mathcal{I}_{3})\mathbf{c}_{1}=
\left(
  \begin{array}{ccc}
    -2 & 1  & 0 \\
     0 & -1 & A\a/3 \\
     0 & 0  & 0 \\
  \end{array}
\right)c_{1\,3}\left(
                 \begin{array}{c}
                   A\a/6 \\
                   A\a/3  \\
                   1 \\
                 \end{array}
               \right)=\left(
                         \begin{array}{c}
                           0 \\
                           0 \\
                           0 \\
                         \end{array}
                       \right).
\ee
The series expansions Eqs.
(\ref{-3/4_B2x})-(\ref{-3/4_B2w}) are therefore valid ones, and we conclude
that as $\Upsilon\rightarrow 0$,
\be
\label{-3/4B2}
a\rightarrow 0, \quad a'\rightarrow\a, \quad \rho\rightarrow \infty,
\quad \alpha\neq 0.
\ee
It will follow from the analysis below that the behaviour of $\rho$ may vary
according to our choice of $\gamma$.

Indeed for $\gamma=-1$ ($\textrm{spec}(_{-1}\mathcal{K}_{2})=\{-1,0,2\}$)
we find that
\bq
\label{-1_B2x}
x&=& \a\Upsilon+\frac{A\a}{9}c_{2\,3}\Upsilon^{3}+\cdots,\\
y&=& \a+\frac{A\a}{3} c_{2\,3}\Upsilon^{2}+\cdots,\\
\label{-1_B2w}
w&=&  c_{2\,3}+\cdots.
\eq
Note that the compatibility condition is satisfied also here:
\be
(_{-1}\mathcal{K}_{2}-2\mathcal{I}_{3})\mathbf{c}_{2}=
\left(
  \begin{array}{ccc}
    -3 & 1  & 0 \\
     0 & -2 & 2A\a/3 \\
     0 & 0  & 0 \\
  \end{array}
\right)c_{2\,3}\left(
                 \begin{array}{c}
                   A\a/9 \\
                   A\a/3 \\
                   1 \\
                 \end{array}
               \right)=\left(
                         \begin{array}{c}
                           0 \\
                           0 \\
                           0 \\
                         \end{array}
                       \right).
\ee
We see then that as $\Upsilon\rightarrow 0$,
\be
a\rightarrow 0, \quad a'\rightarrow\a, \quad \rho\rightarrow c_{2\,3},
\quad \a\neq 0,
\ee
which shows that distinctly different behaviour than (\ref{-3/4B2}).

A yet different behaviour is met if we choose $\gamma=-5/4$: The
$\mathcal{K}$-exponents are in this case given by
$\textrm{spec}(_{-5/4}\mathcal{K}_{2})=\{-1,0,3\}$, and the series
expansions read:
\bq
\label{-5/4_B2x}
x&=& \a\Upsilon+\frac{A\a}{12}c_{3\,3}\Upsilon^{4}+\cdots,\\
y&=& \a+\frac{A\a}{3} c_{3\,3}\Upsilon^{3}+\cdots,\\
\label{-5/4_B2w}
w&=&  c_{3\,3}\Upsilon+\cdots.
\eq
These expansions are valid locally around the singularity since the
compatibility condition holds true,
\be
(_{-5/4}\mathcal{K}_{2}-3\mathcal{I}_{3})\mathbf{c}_{1}=
\left(
  \begin{array}{ccc}
    -4 & 1  & 0 \\
     0 & -3 & A\a \\
     0 & 0  & 0 \\
  \end{array}
\right)c_{3\,2}\left(
                 \begin{array}{c}
                   A\a/12 \\
                   A\a/3  \\
                   1 \\
                 \end{array}
               \right)=\left(
                         \begin{array}{c}
                           0 \\
                           0 \\
                           0 \\
                         \end{array}
                       \right).
\ee
For $\Upsilon\rightarrow 0$ we have that,
\be
a\rightarrow 0, \quad a'\rightarrow\a, \quad \rho\rightarrow 0,
\quad \a\neq 0.
\ee

This balance describes therefore the asymptotic behaviour of a particular
solution of a curved brane around a finite-distance singularity
where $a\rightarrow 0$ and $a'\rightarrow \alpha$, i.e., $a$ and $a'$ behave
exactly as in the case of a collapse-brane II singularity but
now the behaviour of the energy density of the perfect fluid varies
dramatically from zero to infinity ($\rho\rightarrow 0, \rho_{s},\infty$)
depending on the value of $\gamma$. 
This allows for new dynamical outcomes of the system
(\ref{syst2a})-(\ref{constraint3}):
we can have a constant or even no flow of energy through the
finite-distance singularity and into the extra dimension.

A totally different result is obtained when we consider a family of fluids having
a $\gamma$ greater than $-1/2$.
For $\gamma=0$, we have that $\textrm{spec}(_{0}\mathcal{K}_{2})=\{-1,0,-2\}$ and
hence we may expand $(x,y,w)$ in descending powers in order to meet the arbitrary
constant appearing at $j=-1$ and $j=-2$, i.e.,
\be
x=\Sigma_{j=0}^{-\infty}c_{j1}\Upsilon^{j+1},\quad
y=\Sigma_{j=0}^{-\infty}c_{j2}\Upsilon^{j},\quad
z=\Sigma_{j=0}^{-\infty}c_{j3}\Upsilon^{j-2}.
\ee
We find:
\bq
\label{0_B2x}
x&=& \a\Upsilon+c_{-1\,1}-\frac{A\a}{3}c_{-2\,3}\Upsilon^{-1}+\cdots,\\
y&=& \a+\frac{A\a}{3} c_{-2\,3}\Upsilon^{-2}+\cdots,\\
\label{0_B2w}
w&=&  c_{-2\,3}\Upsilon^{-4}+\cdots.
\eq
The compatibility condition is satisfied since
\be
(_{0}\mathcal{K}_{2}+2\mathcal{I}_{3})\mathbf{c}_{-2}=
\left(
  \begin{array}{ccc}
    1 & 1  & 0 \\
     0 & 2 & -2A\a/3 \\
     0 & 0  & 0 \\
  \end{array}
\right)c_{-2\,3}\left(
                 \begin{array}{c}
                   -A\a /3\\
                   A\a/3  \\
                   1 \\
                 \end{array}
               \right)=\left(
                         \begin{array}{c}
                           0 \\
                           0 \\
                           0 \\
                         \end{array}
                       \right).
\ee
As $S=1/\Upsilon\rightarrow\infty$ we conclude that
\be
a\rightarrow \infty, \quad a'\rightarrow\infty, \quad \rho\rightarrow \infty,
\ee
and we can therefore avoid the finite-distance singularity in this case.
\newpage
\subsection{The balances $_{-1/2}\mathcal{B}_{3}$, $_{-1/2}\mathcal{B}_{4}$
and $_{-1/2}\mathcal{B}_{5}$}

In this Section, we analyse the behaviours that are described by the three
balances which are valid once we set $\gamma=-1/2$ in the system
(\ref{syst2a})-(\ref{syst2c}). We show that the balance
$_{-1/2}\mathcal{B}_{3}$, for $r<-2$,
and the balance $_{-1/2}\mathcal{B}_{4}$ imply the existence of a
collapse-brane II singularity.
However, setting $r>-2$ in the balance $_{-1/2}\mathcal{B}_{3}$ leads to the
avoidance of the finite-distance singularity. The balance
$_{-1/2}\mathcal{B}_{5}$ on the other hand, describes a smooth behaviour since
the warp factor, its derivative and the energy
density of the fluid, remain finite asymptotically.

We begin with the analysis of the balance $_{-1/2}\mathcal{B}_{3}$.
The $\mathcal{K}$-matrix for this balance is
\be
_{-1/2}\mathcal{K}_{3}=D\mathbf{f}\left(\a,\a,0\right)
-\textrm{diag}(1,0,r)=
\left(
  \begin{array}{ccc}
    -1 & 1 & 0 \\
    0 & 0 & 0\\
    0 & 0 & -2-r \\
  \end{array}
\right),
\ee
and hence,
\be
\textrm{spec}(_{-1/2}\mathcal{K}_{3})=\{-1,0,-2-r\}.
\ee
Taking $-2-r>0$, we have two non-negative $\mathcal{K}$-exponents. For $r=-3$
we substitute the forms
\be
x=\Sigma_{j=0}^{\infty}c_{j1}\Upsilon^{j+1},\quad
y=\Sigma_{j=0}^{\infty}c_{j2}\Upsilon^{j},\quad
z=\Sigma_{j=0}^{\infty}c_{j3}\Upsilon^{j-3},
\ee
and arrive at the expansions
\bq
\label{-1/2_B3x}
x&=& \a\Upsilon+\cdots,\\
y&=& \a+\cdots,\\
\label{-1/2_B3w}
w&=&  c_{1\,3}\Upsilon^{-2}+\cdots.
\eq
The compatibility condition is satisfied since
\newpage
\be
(_{-1/2}\mathcal{K}_{3}-\mathcal{I}_{3})\mathbf{c}_{1}=
\left(
  \begin{array}{ccc}
    -2 & 1  & 0 \\
     0 & -1 & 0 \\
     0 & 0  & 0 \\
  \end{array}
\right)c_{1\,3}\left(
                 \begin{array}{c}
                   0 \\
                   0  \\
                   1 \\
                 \end{array}
               \right)=\left(
                         \begin{array}{c}
                           0 \\
                           0 \\
                           0 \\
                         \end{array}
                       \right).
\ee
The expansions (\ref{-1/2_B3x})-(\ref{-1/2_B3w}) are therefore valid ones in the
vicinity of the singularity 
and the general behaviour of the solution is characterised by
\be
a\rightarrow 0, \quad a'\rightarrow \a, \quad\rho \rightarrow\infty,
\quad \a\neq 0.
\ee
The balance $_{-1/2}\mathcal{B}_{3}$ for $r<-2$ implies therefore the
existence of a collapse-brane II singularity.

If on the other hand we take $r>-2$, we have two negative
$\mathcal{K}$-exponents.
If we choose the value $r=0$, then the spectrum is found to be
\be
\textrm{spec}(_{-1/2}\mathcal{K}_{3})=\{-1,0,-2\},
\ee
and so inserting the forms
\be
x=\Sigma_{j=0}^{-\infty}c_{j1}\Upsilon^{j+1},\quad
y=\Sigma_{j=0}^{-\infty}c_{j2}\Upsilon^{j},\quad
z=\Sigma_{j=0}^{-\infty}c_{j3}\Upsilon^{j},
\ee
we obtain
\bq
\label{-1/2_B3x}
x&=& \a\Upsilon+c_{-1\,1},\\
y&=& \a,\\
\label{-1/2_B3w}
w&=&  c_{-2\,3}\Upsilon^{-2}+\cdots,
\eq
which validates the compatibility condition at $j=-2$ since
\newpage
\be
(_{-1/2}\mathcal{K}_{3}+2\mathcal{I}_{3})\mathbf{c}_{-2}=
\left(
  \begin{array}{ccc}
    1 & 1  & 0 \\
     0 & 2  & 0 \\
     0 & 0  & 0 \\
  \end{array}
\right)c_{-2\,3}\left(
                 \begin{array}{c}
                   0 \\
                   0 \\
                   1 \\
                 \end{array}
               \right)=\left(
                         \begin{array}{c}
                           0 \\
                           0 \\
                           0 \\
                         \end{array}
                       \right).
\ee
We see that as $S=1/\Upsilon\rightarrow\infty$,
\be
a\rightarrow c_{-1\,1}, \quad a'\rightarrow\a, \quad \rho\rightarrow \infty,
\quad \a\neq 0,
\ee
so that the balance $_{-1/2}\mathcal{B}_{3}$ for $r>-2$ offers the possibility of
avoiding the finite-distance singularity met previously.

The balance $_{-1/2}\mathcal{B}_{4}$ is one with
\be
_{-1/2}\mathcal{K}_{4}=D\mathbf{f}\left(\a,\a,0\right)
-\textrm{diag}(1,0,-2)=
\left(
  \begin{array}{ccc}
    -1 & 1 & 0 \\
    0 & 0 & 0\\
    \dfrac{2\d}{\a} & -\dfrac{2\d}{\a} & 0 \\
  \end{array}
\right),
\ee
and
\be
\textrm{spec}(_{-1/2}\mathcal{K}_{4})=\{-1,0,0\}.
\ee
We note that the double multiplicity of the zero eigenvalue reflects the fact that
there were already two arbitrary constants,
$\a$ and $\d$ in this balance ($\d$ will have to be later sacrificed in order
for that balance to satisfy the constraint Eq. (\ref{constraint3})). We can thus
write
\bq
\label{-1/2_B4x}
x&=& \a\Upsilon+\cdots,\\
y&=& \a+\cdots,\\
\label{-1/2_B4w}
w&=&  \d\Upsilon^{-2}+\cdots,
\eq
so that as $\Upsilon\rightarrow 0$ a collapse-brane II singularity arises, i.e.,
\be
a\rightarrow 0,\quad a'\rightarrow \a, \quad\rho\rightarrow\infty,
\quad \alpha\neq 0.
\ee

Finally we come to the balance $_{-1/2}\mathcal{B}_{5}=\{(\a,0,0), (0,-1,r)\}$.
This has
\be
_{-1/2}\mathcal{K}_{5}=D\mathbf{f}\left(\a,0,0\right)
-\textrm{diag}(0,-1,r)=
\left(
  \begin{array}{ccc}
    0 & 1 & 0 \\
    0 & 1 & 0\\
    0 & 0 & -r \\
  \end{array}
\right),
\ee
and
\be
\textrm{spec}(_{-1/2}\mathcal{K}_{5})=\{1,0,-r\},
\ee
so we shall have to set $r=1$ in order to have the $-1$ eigenvalue corresponding
to the arbitrary position of the singularity, $t_{s}$. After substitution of
the forms
\be
x=\Sigma_{j=0}^{-\infty}c_{j1}\Upsilon^{j},\quad
y=\Sigma_{j=0}^{-\infty}c_{j2}\Upsilon^{j-1},\quad
z=\Sigma_{j=0}^{-\infty}c_{j3}\Upsilon^{j+1},
\ee
we find that
\bq
\label{-1/2_B4x}
x&=& \a+c_{1\,1}+\cdots,\\
y&=& c_{1\,1}+\cdots,\\
\label{-1/2_B4w}
w&=&  0+\cdots.
\eq
The compatibility condition is satisfied since
\be
(_{-1/2}\mathcal{K}_{5}-\mathcal{I}_{3})\mathbf{c}_{1}=
\left(
  \begin{array}{ccc}
    -1 & 1  & 0 \\
     0 & 0  & 0 \\
     0 & 0  & 0 \\
  \end{array}
\right)c_{1\,1}\left(
                 \begin{array}{c}
                   1 \\
                   1 \\
                   0 \\
                 \end{array}
               \right)=\left(
                         \begin{array}{c}
                           0 \\
                           0 \\
                           0 \\
                         \end{array}
                       \right),
\ee
and we see that as $\Upsilon\rightarrow 0$,
\be
x\rightarrow \a, \quad y\rightarrow c_{1\,1}, \quad w\rightarrow 0,
\quad \alpha\neq 0.
\ee
\section{Conclusions}
The following two tables summarise the overall features of a braneworld
consisting of a flat, or curved three-brane embedded in a five-dimensional bulk
space with a scalar field (Table \ref{scalarfield}),
or a perfect fluid (Table \ref{perfectfluid}). 
In what follows, we denote 
by $f_{s}$ the non-zero constant that is the limit of a
function $f$ on approach to the singularity at $Y_{s}$, and by FDS, IDS
a finite-distance and an infinite-distance singularity respectively.
\begin{table}[h]
\centering
  \caption{Singularities in the case of a scalar field in the bulk}
\label{scalarfield}
\begin{tabular}{l|l|l}
  \hline
  Balance & $k$ & Type of singularity \\ \hline \hline
  $\mathcal{B}_{1}$ & $k=0$ & FDS: $a\rightarrow 0,\,a'\rightarrow \infty,\,\phi\rightarrow\infty$ \\ \hline
  $\mathcal{B}_{2}$ & $k\neq 0$ & IDS: $a\rightarrow \infty,\,a'\rightarrow \infty,\,\phi\rightarrow\infty$ \\
  \hline
\end{tabular}
\end{table}
\begin{table}[h]
\centering
\caption{Singularities in the case of a perfect fluid in the bulk}
\label{perfectfluid}
\begin{tabular}{l|l|l|l|ll|l|l|l}
  \hline
  Balance                    & $k$         & $\gamma$      & $r$    & Type of singularity\\ \hline\hline
  $_{\gamma}\mathcal{B}_{1}$ & $k=0$       & $\gamma>-1/2$ &-       & FDS: $a\rightarrow 0,\,a'\rightarrow\infty,\,\rho\rightarrow \infty$ \\\hline
                             &             & $\gamma<-1$   &-       & FDS: $a\rightarrow \infty,\,a'\rightarrow\infty,\,\rho\rightarrow\infty$ \\\hline
  $_{\gamma}\mathcal{B}_{2}$ & $k\neq 0$   & $\gamma<-1/2$ &-       & FDS: $a\rightarrow 0,\,a'\rightarrow\alpha,\,\rho\rightarrow 0,\rho_{s},\infty$ \\\hline
                             &             & $\gamma>-1/2$ &-       & IDS: $a\rightarrow \infty,\,a'\rightarrow\infty,\,\rho\rightarrow\infty$ \\\hline
  $_{-1/2}\mathcal{B}_{3}$ & $k\neq 0$   & $\gamma=-1/2$ & $r<-2$ & FDS: $a\rightarrow 0,\,a'\rightarrow\alpha,\,\rho\rightarrow 
\infty$ \\
                             &             &               & $r>-2$ & IDS: $a\rightarrow \infty,\,a'\rightarrow\infty,\,\rho\rightarrow\infty$ \\\hline
 $_{-1/2}\mathcal{B}_{4}$ & $k=0,\pm 1$ & $\gamma=-1/2$ &-        & FDS: $a\rightarrow 0,\,a'\rightarrow\alpha,\,\rho\rightarrow\infty$ \\\hline
\end{tabular}
\end{table}

For a flat brane and for both forms of bulk matter we find that there exists a
collapse-brane I singularity as this is implied by the balances $\mathcal{B}_{1}$
and $_{\gamma}\mathcal{B}_{1}$ with $\gamma>-1/2$. When we consider a scalar field this is the
\emph{only} possible type of FDS that may arise. In contrast, when a perfect
fluid is present in the bulk there are in addition two new types of FDS:
The first one is a very distinct big rip-brane singularity
that appears only when a phantom equation of
state is considered ($\gamma<-1$). This new possibility is offered by
the balance $_{\gamma}\mathcal{B}_{1}$ and corresponds to a general solution.
The second new type of FDS is that of a collapse-brane II singularity that is
implied by the balance $_{-1/2}\mathcal{B}_{4}$.
However this feature is less
general than the ones described by the balance $_{\gamma}\mathcal{B}_{1}$. This
follows not only because we had to fix the value of $\gamma$ equal to
$-1/2$ in the balance $_{-1/2}\mathcal{B}_{4}$, but also because we had to
sacrifice one arbitrary constant by setting $\delta=3/(2A)$ in order to satisfy
the constraint Eq. (\ref{constraint3}). This implies that $_{-1/2}\mathcal{B}_{4}$
describes a particular solution with two arbitrary constants.

For a non-flat brane on the other hand,
the dynamical evolution of the braneworlds under consideration depends strongly
on the form of their bulk matter:
although FDSs \emph{cannot} arise when a scalar field fills the bulk, they
\emph{do} become possible once this field is replaced with a perfect fluid.
In particular, both $_{-1/2}\mathcal{B}_{4}$
and $_{-1/2}\mathcal{B}_{3}$, with $r<-2$
offer the possibility of a
collapse-brane II singularity, while
the balance $_{\gamma}\mathcal{B}_{2}$ with $\gamma<-1/2$
implies the existence of a FDS
around which $a\rightarrow 0$ and $a'\rightarrow \alpha$, i.e.,
$a$ and $a'$ behave exactly as in the case of a collapse-brane II singularity but
the energy leak into the extra dimension
is monitored each time by the $\gamma$
parameter that defines the type of fluid (or the dominant exponent).
However, for a non-flat brane there still exists the possibility of escaping the
FDSs and this is suggested by the balances $_{\gamma}\mathcal{B}_{2}$ with
$\gamma>-1/2$ and $_{-1/2}\mathcal{B}_{3}$ with $r>-2$.
(Recall that all possible balances in the case of a non-flat brane correspond
to particular solutions with two arbitrary constants.)

It would be interesting to further investigate if the FDSs found here continue
to emerge when we consider the case in which the scalar field
\emph{coexists} with the
perfect fluid in the bulk. The analysis of this more involved case may also
shed light to the factors that control how these two bulk matter
components compete on approach to the singularity or even predict
new types of FDSs that might then become feasible. 
\chapter{Conclusions and future work}
Two main themes throughout this thesis were completeness and singularities.
In particular, in Chapter 2 we stated and proved the general completeness theorem of
\cite{chc02} for the case of an isotropic spacetime. This theorem indicates
the underlying reasons that lead to completeness in the category of isotropic
models, some examples of which we analysed later on in that Chapter.

We continued in Chapter 3 by formulating necessary conditions for the
appearance of finite-time singularities in isotropic universes based entirely
on the behaviour of the Hubble parameter. These conditions
provided us with a first classification of such singularities, a
scheme which was then further refined and expanded with the use
of the Bel-Robinson energy. 
This energy represents the energy of the gravitational field projected in a sense to a slice
in spacetime and it has been previously used for the proof of global-in-time
results (see \cite{cbm03}, \cite{chm}, \cite{ck93}).
Combining the quantities that constitute the Bel Robinson energy with the
pressure and density of the matter fields through the field equations, we were
thus led to a clear picture of how the matter fields influence the nature of the
singularities. The new classification scheme is based on the
behaviour of the Hubble parameter, the Bel-Robinson energy and the scale factor.
As it was shown later on in Chapter 4, an examination of models from the recent
literature illustrates that this classification not only includes all known
types of singularities but it also predicts new ones. We believe that this
classification is complete in the isotropic category in the sense that it
depicts all the different
qualities in which singularities may differ. Chapter 3 ended with the formulation
of a relation that holds between the existence of a closed trapped surface and
the divergence of the Bel-Robinson energy; this relation provided us with
the information needed to decide on the fate of isotropic cosmologies and
we illustrated this important result with various examples.

In Chapter 4, we analysed under the prism of our classification a wide variety
of isotropic cosmologies that accommodate during their existence one or even more
of the four highlighted singularity types of our classification
scheme, namely a collapse, big-rip, sudden, or a sudden-rip singularity. There
are many more types predicted by our classification
and the search for adequate physical circumstances giving rise to these
types of singularities is a project of future work.

Incorporating these new singularities in the evolution of
particular cosmologies will lead to a precise understanding of the role
that these singularities may play in determining the overall destiny of
such cosmologies. Recall from Section 4.3 that it was only after observations
indicated that dark energy may be described by the phantom equation of state
that the possibility of a big-rip singularity was considered seriously for the
first time. This led to the discovery that our universe may face a violent end
on approach to which every bound structure such as galaxies, stars, planetary
systems or even atoms would be ripped apart.

A similar situation may arise in the consideration of models accommodating
the not-so-far studied singularity types of our classification: in the case that
some model comes in agreement with observations we may be able to make
predictions about the future evolution of the observed universe.

Another project of future work is to extend the present classification by
first considering anisotropic and then inhomogeneous spacetimes within the
framework of general relativity; we have already discussed in Chapter 3
(see Section 3.5) the way that the classification scheme presented in this
work is expected to be enriched with singularities that are by necessity
impossible to occur in isotropic models.

A step towards another direction is to examine how our classification does change
when we abandon general relativity and consider other metric theories of gravity
such as $f(R)$ theories or scalar-tensor gravity. With this examination we will be able to
recognise the way that these different theories of gravity affect the
number and types of singularities that are possible to occur in a given class of
spacetimes.

The main body of this thesis ended with Chapter 5 at which point we took a turn
from the classical case of a four-dimensional spacetime to explore the character
of singularities that arise in the different cosmological landscape of
braneworlds. Our research focused on the asymptotic analysis of singularities that
may arise during the dynamical evolution of a three-brane embedded in a
five-dimensional bulk space that is inhabited by a scalar field or a perfect
fluid. Our analysis, which was performed with the method of asymptotic splittings
of \cite{skot}, shows that there are three main possible types of
finite-distance singularities which we call collapse-brane I, II and big
rip-brane singularities.
A common characteristic of these singularities is that they act as a reservoir
through which all the energy density of the bulk-matter components may be emptied.

This phenomenon of an infinite leak of energy into the extra dimension was
previously used in \cite{nima} as an explanation to the cosmological constant
problem. Their model consisted of a flat three-brane embedded in a five
dimensional bulk with a scalar field and exhibited a collapse-brane I
singularity that successfully cut off the extra dimension and swallowed the
vacuum energy, offering thus an explanation for the observed four-dimensional
flatness and the vanishing value of the cosmological constant.
We showed that this singularity is removed to an infinite distance from the brane
when instead of a flat brane we considered a curved one while by altering the
matter in the bulk we allowed for the new distinct types of finite-distance
singularities mentioned above to emerge.

Future work in the field of braneworld singularities includes the construction of new
versions of our model with the consideration of other forms of bulk
matter such as the combination of a scalar field and a perfect fluid, or a fluid
with unconnected pressure and density. 
Studying different forms of bulk matter
may lead to the discovery of new finite-distance singularities which will
help us to draw conclusions about the possible variety of singularities that
these braneworlds can accommodate and hence help to determine their true range
of validity.

\appendix

\chapter{Notation and conventions}
In this thesis we use the signature $(-,+,+,+)$
for the metric tensor $g_{ab}$. The Riemann tensor is defined by
\be
R_{bcd}^{a}=\partial_{c} \Gamma_{bd}^{a}-\partial_{d} \Gamma_{bc}^{a}
+\Gamma_{cf}^{a}\Gamma_{bd}^{f}-\Gamma^{a}_{fd}\Gamma^{f}_{bc},
\ee
where $\partial_{a}$ stands for the partial derivative $\partial/\partial x^{a}$
and
\be
\Gamma_{bd}^{a}=\frac{1}{2}g^{ac}(\partial_{b}
g_{cd}+\partial_{d}g_{cb}-\partial_{c}g_{bd}).
\ee
are the Christoffel symbols. The Ricci tensor is defined by contracting the first
and the third index of the Riemann tensor, i.e.,
\be
R_{ab}=R_{acb}^{c}.
\ee
The Einstein tensor is given by
\be
G_{ab}=R_{ab}-\frac{1}{2}g_{ab}R,
\ee
where $R$ is the scalar curvature, $R=g^{ab}R_{ab}$.

In Chapters 2,3,4 we use units so that $8\pi G=c=1$. The Einstein
equations then read
\be
G_{ab}=+T_{ab}
\ee
where $T_{ab}$ is the energy-momentum tensor.

The symbol $\nabla_{a}$ stands for the covariant derivative.

We also use the following abbreviations:
RW for Robertson-Walker, FRW for Friedmann-Robertson-Walker, CTS for closed
trapped surface, FDS for finite-distance singularity and IDS for
infinite-distance singularity.

\chapter{Some submanifold geometry formulae}
In this appendix we review some useful formulae from submanifold geometry.
In Section B.1, we calculate the extrinsic curvature of the spacelike slices
of an arbitrary globally hyperbolic spacetime. In Subsection B.1.1, we show
that the extrinsic curvature of an RW space is proportional to the Hubble rate.
In Section B.2, we derive the basic Gauss-Codazzi relations that associate the
Riemann and the Ricci tensor of the spacelike slices
with the Riemann and the Ricci tensor of the spacetime.
\section{Extrinsic curvature}
Consider a globally hyperbolic spacetime $(\mathcal{V},g)$, that is a spacetime
which can be foliated by Cauchy surfaces $\Sigma_{t}\equiv\Sigma\times\{t\}$ parametrised by a global
time function $t$, and a smooth congruence of timelike geodesics in
$(\mathcal{V},g)$ orthogonal to the spacelike hypersurfaces $\Sigma_{t}$. If
$n^{a}$ is the unit tangent vector field ($n^{a}n_{a}=-1$) to these
geodesics, then $n^{a}$ is also the unit normal vector field to the
hypersurfaces $\Sigma_{t}$. The spacetime metric $g_{ab}$ induces the spatial
metric $h_{ab}$ on $\Sigma_{t}$ given by \cite{wald}
\be
\label{projop}
h_{ab}=g_{ab}+n_{a}n_{b}.
\ee
It then follows that $h_{ab}$ satisfies
\newpage
\be
h_{ab}n^{a}=0,\quad h_{ab}h^{b}_{c}=h_{ac}. 
\ee

To understand the role that $h_{ab}$ plays consider a spacetime vector $w^{a}$.
We can split $w^{a}$ into components tangent and normal to $\Sigma_{t}$ so that
\be
w^{a}=w_{\bot}n^{a}+w^{a}_{\|},
\ee
where $w^{a}_{\|}n_{a}=0$. If $w^{a}$ is a tangential vector, then
$w^{a}=w^{a}_{\|}$ and $w_{\bot}=0$, or equivalently
\be
w^{a}=h^{a}_{b}w^{b}.
\ee
This condition can be generalised to apply to an arbitrary tensor in the
following way: we say that the tensor $T^{a_{1}\ldots a_{k}}_{b_{1}\ldots b_{l}}$
is tangential to $\Sigma_{t}$ if
\be
T^{a_{1}\ldots a_{k}}_{b_{1}\ldots b_{l}}=
h^{a_{1}}_{c_{1}}\cdots h^{a_{k}}_{c_{k}}h_{b_{1}}^{d_{1}}\cdots
h_{b_{l}}^{d_{l}} T^{c_{1}\ldots c_{k}}_{d_{1}\ldots d_{l}}.
\ee
This means that $h^{a}_{b}$ is the projection operator onto the
tangential plane perpendicular to $n^{a}$.

The extrinsic curvature $K_{ab}$ of $\Sigma_{t}$ is defined as
\be
\label{def_extr}
K_{ab}=\nabla_{a}n_{b}.
\ee
Therefore the extrinsic curvature describes the
\emph{bending} of $\Sigma_{t}$ in the spacetime. To see this more clearly,
consider the orthogonal deviation vector $\xi^{a}$ representing the
spatial infinitesimal displacement between two neighboring geodesics of the
congruence. Then $\mathcal{L}_{n}\xi^{a}=0$, which means that
$n^{b}\nabla_{b}\xi^{a}=\xi^{b}\nabla_{b}n^{a}=K_{b}^{a}\xi^{b}$. The last
equality implies that the extrinsic curvature measures the failure of $n^{a}$
to be parallelly
transported along a spacelike geodesic with tangent vector $\xi^{a}$. In the
same time, the first equality says that this failure is
equal to the failure of $\xi ^{a}$ to be parallelly transported along the
timelike geodesic with tangent $n^{a}$. This means that an observer on one of
the geodesics would see a nearby geodesic being stretched and
twisted \cite{ck_pr}.

Note that $K_{ab}$ is purely spatial because $K_{ab}n^{b}=0$. Since we have
considered a hypersurface-orthogonal congruence of geodesics, it can be proved
(cf. \cite{wald}, p. 217)
that the rotation $\omega _{ab}$, defined by
$\omega_{ab}=\frac{1}{2}(\nabla_{b}n_{a}-\nabla_{a}n_{b})=K_{[ba]}$, is zero
and hence the extrinsic curvature tensor is symmetric
$K_{ab}=K_{ba}$. We can therefore write
\be
2K_{ab}=\nabla_{a}n_{b}+\nabla_{b}n_{a}.
\ee
The right hand side of the above equation equals the Lie derivative of $g_{ab}$
with respect to $n^{a}$, hence
\be
\label{lie}
K_{ab}=\frac{1}{2}\mathcal{L}_{n}g_{ab}=
\frac{1}{2}\mathcal{L}_{n}(h_{ab}-n_{a}n_{b})=
\frac{1}{2}\mathcal{L}_{n}h_{ab}-\frac{1}{2}\mathcal{L}_{n}(n_{a}n_{b}).
\ee
But
\be
\mathcal{L}_{n}(n_{a})n_{b}=
(n^{c}\nabla_{c}n_{a}+n_{c}\nabla_{a}n^{c})n_{b}
=\left(g_{ad}n^{c}\nabla_{c}n^{d}+\frac{1}{2}\nabla_{a}(n_{c}n^{c})
\right)n_{b}=0,
\ee
because of the geodesic equation $n^{c}\nabla_{c}n^{d}=0$, and the
fact that $n^{c}n_{c}=-1$.
Using this and the Leibnitz rule in Eq. (\ref{lie}), we find the following
formula for the extrinsic curvature:
\be
K_{ab}=\frac{1}{2}\mathcal{L}_{n}h_{ab}.
\ee
Since a Gaussian coordinate system is adapted to $n^{a}$,
we finally get
\be
\label{curv_extr}
K_{ab}=\frac{1}{2}\frac{\partial h_{ab}}{\partial t}.
\ee
This relation states that the extrinsic curvature is the rate of
change of the spatial metric $h_{ab}$ along the congruence.
%
\newpage
\subsection{Extrinsic curvature for a RW space}
Consider now a $\textsc{RW}$ metric, i.e., one with
\be
d{s}^{2}=-dt^{2}+a^{2}(t)\left(dr^{2}+f^{2}(r)(d\theta^{2}+
  \sin^{2}\theta d\phi^{2})\right),
\ee
where
\be
f(r)=\left\{\begin{array}{ll}
  r, & k=0 \\
  \sin r, & k=1  \\
   \sinh r,& k=-1.
\end{array}
\right.
\ee
Then by the use of Eq. (\ref{curv_extr}), we find
\beq
\label{K11}
K_{11}&=&a(t)\dot{a}(t),\\
K_{22}&=&f^{2}(r)a(t)\dot{a}(t),\\
\label{K33}
K_{33}&=&f^{2}(r)\sin^{2}\theta\,a(t)\dot{a}(t).
\eeq
Now, since the spatial norm of a 2-covariant tensor $X_{ij}$ is given by
\be
\label{norm_2}
|X|^{2}=g^{ij}g^{kl}X_{ik}X_{jl}\quad
i,j,k,l=1,2,3,
\ee
we find that the norm of the extrinsic curvature is given by
\be
\label{extr_frw}
|K|^{2}=g^{ij}g^{rs}K_{ir}K_{js}=(g^{11})^{2}K_{11}^{2}+
(g^{22})^{2}K_{22}^{2}+(g^{33})^{2}K_{33}^{2}=3\frac{\dot{a}^{2}(t)}{a^{2}(t)}=
3 H^{2}(t),
\ee
where $H(t)$ is the Hubble expansion rate, $H=\dot{a}/a$.
\section{Gauss-Codazzi formalism} 
As we saw in the previous Section, the metric $g_{ab}$ induces a Riemannian metric
$h_{ab}$ on each hypersurface $\Sigma_{t}$. We know that there exists a unique
derivative operator associated with this metric, which we denote by $D_{c}$. It
can be proved that $D_{c}$ is given by \cite{wald}
\be
\label{spatder}
D_{c}T^{a_{1}\ldots a_{k}}_{b_{1}\ldots b_{l}}=h^{a_{1}}_{d_{1}}\ldots
h_{b_{l}}^{e_{1}}h_{c}^{f}\nabla_{f}T^{d_{1}\ldots d_{k}}_{e_{1}\ldots e_{l}}.
\ee
In turn $D_{c}$ determines a curvature tensor $^{(3)}R_{abc}^{d}$ on
$\Sigma_{t}$. This tensor is defined in analogy with the four dimensional case
by the relation
\be
\label{3Riem}
D_{a}D_{b}u_{c}-D_{b}D_{a}u_{c}=^{(3)}R_{abc}^{d}u_{d}.
\ee

Our goal is to find the relation between the Riemann tensor of the hypersurface
$^{(3)}R_{abc}^{d}$ and the Riemann tensor of the whole spacetime $R_{abc}^{d}$
(Eq. \ref{Riem3} below).
We start by calculating $D_{a}D_{b}u_{c}$:
\bq
D_{a}D_{b}u_{c}&=&D_{a}(h_{c}^{e}h_{b}^{d}\nabla_{d}u_{e})=
h_{b}^{g}h_{c}^{k}h_{a}^{f}\nabla_{f}(h_{k}^{e}h_{g}^{d}\nabla_{d}u_{e})
\nonumber \\
&=& h_{b}^{g}h_{c}^{k}h_{k}^{e}h_{g}^{d}h_{a}^{f}(\nabla_{f}\nabla_{d}u_{e})+
h_{g}^{d}h_{b}^{g}h_{c}^{k}h_{a}^{f}(\nabla_{d}u_{e})\nabla_{f}h_{k}^{e}
\nonumber \\
\label{DaDb}
&+& h_{b}^{g}h_{c}^{k}h_{a}^{f}h_{k}^{e}(\nabla_{d}u_{e})(\nabla_{f}h_{g}^{d}).
\eq
But by the use of
\bq
h_{a}^{f}h_{c}^{k}\nabla_{f}h_{k}^{e}&=&
h_{a}^{f}h_{c}^{k}\nabla_{f}(g_{k}^{e}+n_{k}n^{e})\nonumber \\
&=&h_{a}^{f}h_{c}^{k}(\nabla_{f}n_{k})n^{e}+
h_{a}^{f}h_{c}^{k}n_{k}\nabla_{f}n^{e}
+h_{a}^{f}h_{c}^{k}\nabla_{f}g_{k}^{e}\nonumber \\
&=& h_{c}^{k}K_{ak}n^{e}=K_{ac}n^{e},
\eq
Eq. (\ref{DaDb}) becomes,
\be
D_{a}D_{b}u_{c}=
h_{b}^{g}h_{g}^{d}h_{c}^{k}h_{k}^{e}h_{a}^{f}(\nabla_{f}\nabla_{d}u_{e})+
K_{ac}n^{e}h_{b}^{d}(\nabla_{d}u_{e})+h_{c}^{e}K_{ab}n^{d}(\nabla_{d}u_{e}).
\ee
Now since
\be
h_{b}^{d}(\nabla_{d}u_{e})n^{e}=h_{b}^{d}\nabla_{d}(u_{e}n^{e})-
h_{b}^{d}(\nabla_{d}n^{e})u_{e}=-K_{b}^{e}u_{e},
\ee
\newpage
we find for $D_{a}D_{b}u_{c}$,
\be
D_{a}D_{b}u_{c}=
h_{a}^{f}h_{b}^{d}h_{c}^{e}\nabla_{f}\nabla_{d}u_{e}+
h_{c}^{e}K_{ab}n^{d}\nabla_{d}u_{e}-K_{b}^{e}u_{e}K_{ac}.
\ee
Similarly, we get an analogous expression for $D_{b}D_{a}u_{c}$,
\be
\label{DbDa}
D_{b}D_{a}u_{c}=
h_{b}^{d}h_{a}^{f}h_{c}^{e}\nabla_{d}\nabla_{f}u_{e}+
h_{c}^{e}K_{ba}n^{d}\nabla_{d}u_{e}-K_{a}^{e}u_{e}K_{bc}.
\ee
Subtracting Eqs. (\ref{DaDb}) and (\ref{DbDa}) we have that
\bq
D_{a}D_{b}u_{c}-D_{b}D_{a}u_{c}&=&-K_{ac}K_{b}^{e}u_{e}+K_{bc}K_{a}^{e}u_{e}+
h_{a}^{f}h_{b}^{d}h_{c}^{e}(\nabla_{f}\nabla_{d}u_{e}-\nabla_{d}\nabla_{f}u_{e})
\nonumber\\
&=&-K_{ac}K_{b}^{e}h_{e}^{d}u_{d}+K_{bc}K_{a}^{e}h_{e}^{d}u_{d}+
h_{a}^{f}h_{b}^{d}h_{c}^{e}R_{fde}^{j}u_{j} \nonumber\\
\label{G-C_rhs}
&=&-K_{ac}K_{b}^{d}u_{d}+K_{bc}K_{a}^{d}u_{d}+
h_{a}^{f}h_{b}^{d}h_{c}^{e}h_{j}^{d}R_{fde}^{j}u_{d}.
\eq
Then from Eq. (\ref{G-C_rhs}) and Eq. (\ref{3Riem}) follows the first
Gauss-Codazzi relation:
\be
\label{g-c1}
^{(3)}R_{abc}^{d}=h_{a}^{f}h_{b}^{d}h_{c}^{e}h_{j}^{d}R_{fde}^{j}
-K_{ac}K_{b}^{d}+K_{bc}K_{a}^{d}.
\ee

In Gaussian normal coordinates it holds that
\be
h_{a}^{f}=h^{f\lambda}h_{\lambda a}=\delta_{a}^{f},
\ee
so that Eq. (\ref{g-c1}) becomes
\be
\label{Riem3}
^{(3)}R_{abc}^{d}=
R_{abc}^{d}-K_{ac}K_{b}^{d}+K_{ab}K_{c}^{d},
\ee
where we have used the fact that
$K_{bc}K_{a}^{d}=g_{lc}g^{la}K_{ab}K^{d}_{a}=K_{ab}K_{c}^{d}$.

Now we can use the Eq. (\ref{Riem3}) in order to deduce the analogous
relation that holds between the Ricci tensor of the hypersurface $^{(3)}R_{ab}$
and the Ricci tensor of the spacetime, $R_{ab}$.
For that purpose, we start by calculating the component 
$R^{n}_{anb}$ (cf. \cite{stephani}, pp. 163-164):
\bq
R^{d}_{arb}n_{d}n^{r}h_{q}^{b}h_{m}^{a}&=&
(\nabla_{a}\nabla_{r}n_{b}-\nabla_{r}\nabla_{a}n_{b})n^{r}h_{q}^{b}h_{m}^{a}
=(\nabla_{a}K_{rb})n^{r}h_{q}^{b}h_{m}^{a}-
(\nabla_{r}K_{ab})n^{r}h_{q}^{b}h_{m}^{a} \nonumber \\ &=&
\nabla_{a}(K_{rb}n^{r}h_{q}^{b}h_{m}^{a})-
K_{rb}(\nabla_{a}n^{r})h_{q}^{b}h_{m}^{a}-
K_{rb}n^{r}\nabla_{a}(h_{q}^{b}h_{m}^{a})\nonumber \\
&-&(\pounds_{n}K_{ab}-
K_{rb}\nabla_{a}n^{r}-K_{ar}\nabla_{b}n^{r})h_{q}^{b}h_{m}^{a}
\nonumber \\ &=& (K_{ar}K_{b}^{r}-\pounds_{n}K_{ab})h_{q}^{b}h_{m}^{a},
\eq
so that
\be
\label{Riem4}
R^{d}_{arb}n_{d}n^{r}=K_{ar}K^{r}_{b}-\pounds_{n}K_{ab}.
\ee

Introducing Gaussian-normal coordinates and contracting the first and the third
index, we have that
\be
R^{n}_{anb}=-K_{ar}K^{r}_{b}+\frac{\partial K_{ab}}{\partial n}.
\ee
The Ricci tensor is then given by
\be
\label{ricci_gc}
R_{ab}=
R^{m}_{amb}+ 
R^{n}_{anb}=
^{(3)}R_{ab}+K_{ab}K-2K_{an}K^{n}_{b}+\frac{\partial K_{ab}}{\partial n}.
\ee

We can work in a similar way for the case of a timelike hypersurface with
normal vector $n^{a}$ (now $n^{a}n_{a}$=1).
Both cases of a spacelike and timelike normal vector can be treated at
the same time throughout the previous calculations by using the symbol
$\epsilon (n)=n^{a}n_{a}=\pm 1$. In particular, with the use of $\epsilon$,
the key equations (\ref{projop}), (\ref{Riem3}), (\ref{Riem4}) and
(\ref{ricci_gc}) can be written in the following form:
\bq
h_{ab}&=&g_{ab}-\epsilon(n)n_{a}n_{b},\\
^{(3)}R_{abc}^{d}&=&
R_{abc}^{d}
+\epsilon(n)(K_{ac}K_{b}^{d}-K_{ab}K_{c}^{d}),\\
R^{n}_{anb}\epsilon(n)&=&K_{ar}K^{r}_{b}-\pounds_{n}K_{ab},\\
\label{ricciepsilon}
R_{ab}&=&^{(3)}R_{ab}+
\epsilon(n)\left(-K_{ab}K+2K_{an}K^{n}_{b}-
\frac{\partial K_{ab}}{\partial n}\right).
\eq
\chapter{Bel-Robinson energy}
This appendix deals with the notion of the Bel-Robinson energy.
In Section C.1, we give the definition of the Bel-Robinson energy
for the general form of a sliced space and in Subsection C.1.1, we
calculate the Bel-Robinson energy of a RW space. In the end of the
Subsection C.1.1, we show that the null energy condition in closed or flat
accelerating universes translates into an inequality satisfied by the electric
parts of the Bel-Robinson energy.
\section{Definition}
The Bel-Robinson energy and related quantities play an important role in proving
global properties of spacetimes. For example, they were
used in \cite{ck93}, \cite{chm} to prove global existence results in
the case of asymptotically flat and cosmological spacetimes
respectively. In this thesis, the Bel-Robinson energy is
used for the purposes of classification of spacetime singularities.

The Bel-Robinson energy is a kind of energy of the gravitational
field \emph{projected} in a sense to a slice in spacetime.
Consider the sliced spacetime with metric
\begin{equation}
^{(3+1)}g\equiv -N^{2}(\theta ^{0})^{2}+g_{ij}\;\theta ^{i}\theta
^{j},\quad \theta ^{0}=dt,\quad \theta ^{i}\equiv dx^{i}+\beta
^{i}dt,  \label{2.1}
\end{equation}
where $N=N(t,x^{i})$ is the lapse function and $\beta^{i}(t,x^{j})$
is the shift function. The $2$-covariant spatial electric and
magnetic tensors are defined as follows
\bq
\label{br_E}
E_{ij}&=&R^{0}_{i0j},\\
\label{br_D}
D_{ij}&=&\frac{1}{4}\eta_{ihk}\eta_{jlm}R^{hklm},\\
\label{br_H}
H_{ij}&=&\frac{1}{2}N^{-1}\eta_{ihk}R^{hk}_{0j},\\
\label{br_B}
B_{ji}&=&\frac{1}{2}N^{-1}\eta_{ihk}R^{hk}_{0j},
\eq
where $\eta_{ijk}$ is the volume element of the space metric
$g_{t}\equiv g_{ij}dx^{i}dx^{j}$.
These four time-dependent space tensors comprise what is called a
\emph{Bianchi field}, $(E,H,D,B)$, a very important frame field used
to prove global in time results, cf. \cite{ycb02}.
The \emph{Bel-Robinson energy at time $t$} is given by
\be
\mathcal{B}(t)=\frac{1}{2}\int_{\mathcal{M}_t}\left(|E|^{2}+|D|^{2}+
|B|^{2}+|H|^{2}\right)d\mu_{g_t},
\ee
where by $|X|^{2}=g^{ij}g^{kl}X_{ik}X_{jl}$ we denote the spatial norm of the
$2$-covariant tensor $X$.

\subsection{Bel-Robinson energy of an isotropic universe}

Consider the RW metric (\ref{frw_metric}).
In order to compute the Bel-Robinson energy at time t, $\mathcal{B}(t)$, we
first calculate the two covariant spatial tensors $E_{ij}$ and $D_{ij}$ using
relations (\ref{br_E}) and (\ref{br_D}). These relations give
\beq
E_{11}&=&a\ddot{a},\\
E_{22}&=&f^{2}a\ddot{a},\\
E_{33}&=&f^{2}\sin^{2}\theta\, a\ddot{a},
\eeq
while
\newpage
\beq
D_{11}&=&f^{-2}(1-f'^{2}+f^{2}\dot{a}^{2}),\\
D_{22}&=&f^{2}\left(\dot{a}^{2}-\frac{f''}{f}\right),\\
D_{33}&=&f^{2}\sin^{2}\theta\left(\dot{a}^{2}-\frac{f''}{f}\right),
\eeq
where a prime denotes the differentiation of $f\equiv f(r)$ with respect to $r$,
while a dot denotes the differentiation of $a\equiv a(t)$ with respect to $t$.
The magnetic tensors $H_{ij}$, $B_{ji}$ make no contribution to the Bel-Robinson
energy since for a RW metric we have that $R^{hk}_{0j}=0$ for $h,k,j=1,2,3$.

Next, we make use of Eq. (\ref{norm_2}) in order to calculate the spatial norms
of the two electric tensors. We find
\be
|E|^{2}=(g^{11})^{2}E_{11}^{2}+(g^{22})^{2}E_{22}^{2}+(g^{33})^{2}E_{33}^{2}=
3\frac{\ddot{a}^{2}}{a^{2}},
\ee
and
\be
\label{d_1}
|D|^{2}=(g^{11})^{2}D_{11}^{2}+(g^{22})^{2}D_{22}^{2}+(g^{33})^{2}D_{33}^{2}=
\left(\frac{1-f'^{2}}{a^{2}f^{2}}+
\frac{\dot{a}^{2}}{a^{2}}\right)^{2}+2\left(\frac{\dot{a}^{2}}{a^{2}}-
\frac{f''}{f a^{2}}\right)^{2}.
\ee
It then follows from the form of $f(r)$ (cf. Eq. (\ref{f(r)})) that
\be
\label{d_2}
\frac{1-f'^{2}}{f^{2}}=-\frac{f''}{f}=k,
\ee
where $k$ is the constant curvature of the spatial slice. We can therefore write
$|D|$ in the form
\be
\label{d_3}
|D|^{2}=3\left(\frac{\dot{a}^{2}}{a^{2}}+\frac{k}{a^{2}}\right)^{2}.
\ee
The Bel-Robinson energy at time $t$ is then given by
\newpage
\be
\label{bre_frw}
\mathcal{B}(t)=\frac{C}{2}\left(|E|^{2}+|D|^{2}\right)=
\frac{3C}{2}\left(\frac{\ddot{a}^{2}}{a^{2}}+
\left(\frac{\dot{a}^{2}}{a^{2}}+\frac{k}{a^{2}}\right)^{2}\right),
\ee
where $C$ is the constant volume \emph{of} (or \emph{in}, in the case of a
non-compact space) the three dimensional slice at time $t$.

Finally, we have the following proposition that translates the null energy
condition into an inequality satisfied by the electric parts of the
Bel-Robinson energy.
\begin{proposition}
In an accelerating flat or closed universe the null energy condition is
equivalent to the inequality $|E|\leq |D|$.
\end{proposition}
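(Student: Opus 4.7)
The plan is to translate both sides of the claimed equivalence into conditions on the fluid variables $\mu,p$ via the Friedmann and Raychaudhuri equations, after which the statement reduces to an elementary algebraic inequality. The first step is to rewrite $|D|$ and $|E|$ using the already-derived expressions
\[
|E|^{2}=3\frac{\ddot a^{2}}{a^{2}},\qquad |D|^{2}=3\left(\frac{\dot a^{2}}{a^{2}}+\frac{k}{a^{2}}\right)^{2},
\]
and then substitute the field equations $H^{2}+k/a^{2}=\mu/3$ and $\ddot a/a=-(\mu+3p)/6$. This yields the clean identities $|D|=\mu/\sqrt{3}$ and $|E|=|\mu+3p|/(2\sqrt{3})$. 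The restriction to $k=0,1$ enters precisely here: in the flat or closed cases the Friedmann equation forces $\mu=3(H^{2}+k/a^{2})\geq 0$, so that $|D|$ is literally $\mu/\sqrt{3}$ and not $|\mu|/\sqrt{3}$; for $k=-1$ this sign control is lost, which is why the statement is restricted.

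Next I would use the acceleration hypothesis. By the Raychaudhuri equation, $\ddot a>0$ is equivalent to $\mu+3p<0$, so $|E|=-(\mu+3p)/(2\sqrt{3})$ with a definite sign. The inequality $|E|\leq|D|$ then becomes
\[
-(\mu+3p)\leq 2\mu \quad\Longleftrightarrow\quad \mu+p\geq 0,
\]
which is exactly the null energy condition as recorded in the footnote of Section 3.4.2. This direction also gives, almost for free, the converse: assuming $\mu+p\geq 0$ one obtains $-(\mu+3p)\leq 2\mu$, and since acceleration ensures $\mu+3p<0$ so that $|\mu+3p|=-(\mu+3p)$, the inequality $|E|\leq|D|$ follows at once.

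I do not anticipate a real obstacle; the only subtlety is to be explicit about why the spatial curvature assumption is needed. In particular, one must remark that without $k\in\{0,1\}$ the step $|D|=\mu/\sqrt{3}$ would only give $|D|=|\mu|/\sqrt{3}$, and the equivalence with $\mu+p\geq 0$ would no longer hold if $\mu<0$. Likewise, without the acceleration assumption the sign of $\mu+3p$ is not fixed, and the bound $|\mu+3p|\leq 2\mu$ would be equivalent to the two-sided bound $-\mu\leq p\leq \mu/3$ rather than to the NEC alone; acceleration forces the upper bound $p\leq \mu/3$ to be automatic, leaving exactly the NEC as the content of $|E|\leq|D|$.
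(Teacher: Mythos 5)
Your proof is correct, but it follows a different route from the paper's. The paper's argument is purely geometric: it evaluates $R_{ab}k^{a}k^{b}$ componentwise for the RW metric and a null vector, finds that it reduces to $2\left(-\frac{\ddot a}{a}+\frac{\dot a^{2}}{a^{2}}-\frac{f''}{fa^{2}}\right)(k^{4})^{2}$, and hence identifies the null energy condition with the inequality $-\frac{\ddot a}{a}+\frac{\dot a^{2}}{a^{2}}+\frac{k}{a^{2}}\geq 0$; since $k\geq 0$ and $\ddot a>0$ let one write $|D|=\sqrt{3}\left(\frac{\dot a^{2}}{a^{2}}+\frac{k}{a^{2}}\right)$ and $|E|=\sqrt{3}\frac{\ddot a}{a}$ without absolute values, this is literally $|E|\leq|D|$. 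You instead pass immediately to the fluid variables via the Friedmann and Raychaudhuri equations, obtaining $|D|=\mu/\sqrt{3}$, $|E|=|\mu+3p|/(2\sqrt{3})$, and reduce the claim to $\mu+p\geq 0$. The two computations are equivalent (the paper's inequality is just yours divided by the field equations), but they buy slightly different things: the paper's version never invokes a matter model, so it establishes the equivalence with the geometric condition $R_{ab}k^{a}k^{b}\geq 0$ directly, whereas yours presupposes a perfect-fluid source and the identification of the NEC with $\mu+p\geq 0$ (which the paper only records in a footnote). On the other hand, your write-up is more transparent about where each hypothesis is used — in particular your closing remarks, that $k\in\{0,1\}$ is what guarantees $\mu\geq 0$ and that acceleration is what collapses the two-sided bound $-\mu\leq p\leq \mu/3$ to the NEC alone, are a genuine improvement in exposition over the paper's proof, which uses the same facts silently.
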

\begin{proof}
The null energy condition demands that
\be
\label{nullcond}
R_{ab}k^{a}k^{b}\geq 0,
\quad \textrm{for every null vector}\quad k^{a}.
\ee
In an isotropic universe with metric described by Eqs. (\ref{frw_metric}) and
(\ref{f(r)}), the above condition translates to
\beq
R_{ab}k^{a}k^{b}&=&R_{11}(k^{1})^{2}+R_{22}(k^{2})^{2}+R_{33}(k^{3})^{2}+
R_{44}(k^{4})^{2}\\
&=&
\left(2\dot{a}^{2}-2\frac{f''}{f}+a\ddot{a}\right)(k^{1})^{2}+(1-f'^{2}-f
f''+f^{2}(2\dot{a}^{2}+a\ddot{a}))(k^{2})^{2}+
\eeq
\be
\label{re_1}
+\sin^{2}\theta(1-f'^{2}-f
f''+f^{2}(2\dot{a}^{2}+a\ddot{a}))(k^{3})^{2}-
3\frac{\ddot{a}}{a}(k^{4})^{2}.
\ee
Keeping in mind that $1-f'^{2}=-f f''$, we can write Eq.
(\ref{re_1}) as follows
\beq
\label{re_2}
R_{ab}k^{a}k^{b}&=&\left(2\frac{\dot{a}^{2}}{a^{2}}+\frac{\ddot{a}}{a}
-2\frac{f''}{fa^{2}}\right)[a^{2}((k^{1})^{2}+f^{2}((k^{2})^{2}
+\sin^{2}\theta(k^{3})^{2}))-(k^{4})^{2}]+\\
&+&\left(-3\frac{\ddot{a}}{a}+2\frac{\dot{a}^{2}}{a^{2}}
+ \frac{\ddot{a}}{a}-2\frac{f''}{fa^{2}}\right)(k^{4})^{2}
=\left(-2\frac{\ddot{a}}{a}+2\frac{\dot{a}^{2}}{a^{2}}
-2\frac{f''}{fa^{2}}\right)(k^{4})^{2}.
\eeq
Therefore if the null energy condition, Eq. (\ref{nullcond}), holds then
\be
\label{nullcondfrw}
-\frac{\ddot{a}}{a}+\frac{\dot{a}^{2}}{a^{2}}+\frac{k}{a^{2}}\geq 0,
\ee
where we have used the fact that $-f''/f=k$, with $k$ being the constant
curvature of the spatial slices. Since we have assumed a $k\geq 0$ accelerating
universe, the electric parts are given by
\beq
|D|&=&\sqrt{3}\left(\frac{\dot{a}^{2}}{a^{2}}+\frac{k}{a^{2}}\right),\\
|E|&=&\sqrt{3}\frac{\ddot{a}}{a}
\eeq
hence Eq.(\ref{nullcondfrw}) implies that $|E|\leq |D|$.
\end{proof}

%
\chapter{Energy-momentum tensor of a scalar field}
In order to define the energy-momentum tensor for a scalar field we start from
the action
\be
S_{\phi}=\int \mathcal{L}_{\phi}\sqrt{-g}d^{4} x,
\ee
where the Lagrangian density is
\be
\label{lagr_phi}
\mathcal{L}_{\phi}=-\frac{1}{2}(\nabla_{c}\phi\nabla^{c}\phi+2V(\phi))=
-\frac{1}{2}(g^{dc}\nabla_{d}\phi\nabla_{c}\phi+2V(\phi)).
\ee
The energy-momentum tensor for a matter field is defined to be,
cf. \cite{wald}, p. 455,
\be
T_{\mu\nu}\equiv-\frac{2}{\sqrt{-g}}
\frac{\delta(\mathcal{L}_{matter}\sqrt{-g})}{\delta g_{\mu\nu}},
\ee
for the Lagrangian (\ref{lagr_phi}), we find that
\bq
T_{\mu\nu}&=&\frac{2}{\sqrt{-g}}\left((\nabla^{c}\phi\nabla_{c}\phi
+2V(\phi))\frac{\delta(\sqrt{-g})}{\delta g^{\mu\nu}}+
\sqrt{-g}\frac{\delta(g^{\mu\nu}\nabla_{\mu}\phi\nabla_{\nu}\phi)}
{\delta g^{\mu\nu}}\right)=\\
&=&\frac{1}{\sqrt{-g}}\left((\nabla^{c}\phi\nabla_{c}\phi
+2V(\phi))(-\frac{1}{2}\sqrt{-g}g_{\mu\nu})+
\nabla_{\mu}\phi\nabla_{\nu}\phi\sqrt{-g}\right).
\eq
Thus the energy-momentum tensor of a scalar field is
\be
T_{\mu\nu}=\nabla_{\mu}\phi\nabla_{\nu}\phi-
\frac{1}{2}g_{\mu\nu}(\nabla_{c}\phi\nabla^{c}\phi+2V(\phi)),
\ee
or equivalently,
\be
\label{emtscalar}
T_{\mu\nu}=\nabla_{\mu}\phi\nabla_{\nu}\phi-
\frac{1}{2}g_{\mu\nu}(g^{dc}\nabla_{d}\phi\nabla_{c}\phi+2V(\phi)).
\ee
Comparing this with the energy-momentum tensor of a perfect fluid,
\be
T_{\mu\nu}=(\mu+p)u_{\mu}u_{\nu}+pg_{\mu\nu},
\ee
where $u_{\mu}$ is the unit timelike vector ($u^{\mu}u_{\mu}=-1$),
it follows that
\be
\label{pressure}
p=\mathcal{L}_{\phi}=-\frac{1}{2}(g^{dc}\nabla_{d}\phi\nabla_{c}\phi+2V(\phi)),
\ee
and
\be
(\mu+p)u_{\mu}u_{\nu}=\nabla_{\mu}\phi\nabla_{\nu}\phi.
\ee
Contracting the above equation, we find \cite{mad}
\be
-(\mu+p)=\nabla^{\nu}\phi\nabla_{\nu}\phi=-2\mathcal{L}_{\phi}-2V(\phi),
\ee
and substituting for the pressure from Eq. (\ref{pressure}) leads to
\be
\label{density}
\mu=\mathcal{L}_{\phi}+2V(\phi).
\ee
In a homogeneous universe, a scalar field is a function of time only and
hence Eqs. (\ref{pressure}), (\ref{density}) give the standard result,
\bq
p&=&\frac{\dot{\phi}^{2}}{2}-V(\phi),\\
\mu&=&\frac{\dot{\phi}^{2}}{2}+V(\phi).
\eq
The equation of motion of a scalar field can be found either by using the
conservation law of energy, i.e., $\nabla^{\nu} T_{\mu\nu}=0$, or by variation
of the action $S_{\phi}$ with respect to $\phi$. We use the second approach.
We have that:
\bq
\frac{\delta\mathcal{L}_{\phi}}{\delta\phi}=
\frac{\partial\mathcal{L}_{\phi}}{\partial\phi}-
\left(\frac{\partial\mathcal{L_{\phi}}}{\partial(\phi_{,\mu})}\right)_{,\mu}=
2V'(\phi)-2\nabla_{c}\nabla^{c}\phi,
\eq
and so $\delta\mathcal{L}_{\phi}/\delta\phi=0$ gives directly
\be
\label{eqmotscalar}
\Box\phi-V'(\phi)=0,
\ee
where $\Box\phi=(-g)^{-1/2}\nabla_{d}((-g)^{1/2}g^{dc}\nabla_{c}\phi)$.
\chapter{The method of asymptotic splittings}

Consider a system of $n$ first order ordinary differential equations
\be
\label{arb_syst}
\mathbf{\dot{x}}=\mathbf{f}(\mathbf{x}),
\ee
where $\mathbf{x}=(x_{1},\ldots,x_{n})\in \mathbb{R}^{n}$ and
$\mathbf{f}(\mathbf{x})=(f_{1}(\mathbf{x}),\ldots,f_{n}(\mathbf{x}))$.
The \emph{general solution} of the above system contains $n$ arbitrary constants
and hence describes the behaviour of the system for arbitrary initial data. A
\emph{particular solution} on the other hand contains less than $n$ arbitrary
constants and describes the behaviour of the system for a subset of initial data.

We say that a general or particular 
solution of the above dynamical system exhibits a finite-time singularity if
there exists a $t_{s}\in \mathbb{R}$ and a $\mathbf{x}_{0}\in \mathbb{R}^{n}$
such that 
\be
\lim_{t\rightarrow t_{s}}\|\mathbf{x(t;\mathbf{x}_{0})}\|\rightarrow\infty,
\ee
where $\|\centerdot\|$ is any $L^{p}$ norm.

The purpose of \emph{singularity analysis} (cf. \cite{skot},
\cite{gorielybook}, pp. 129-155) is to build series expansions of
solutions around the presumed position of a singularity at $t_{s}$
in order to study the different asymptotic behaviours of the solutions of the
system (\ref{arb_syst}).

In particular, we look for series expansions of solutions that take the form
of a Puiseux series,
\be
\label{Puiseux}
\mathbf{x}=\tau^{\mathbf{p}}\left(\mathbf{a}
+\Sigma_{i=1}^{\infty}\mathbf{c}_{i}\tau^{i/s}\right),
\ee
where $\tau=t-t_{s}$, $\mathbf{p}\in \mathbb{Q}^{n}$, $s\in\mathbb{N}$.

The method of asymptotic splittings for the system (\ref{arb_syst}) is realised
in the following steps\footnote{Here we will refer briefly to the basic steps of
the method of asymptotic splittings. A detailed analysis
can be found in \cite{skot}, \cite{goriely}.}:

$\bullet$ First, we find all the possible \emph{weight-homogeneous decompositions}
of the vector field $\mathbf{f}$ by splitting it into components
$\mathbf{f}^{(j)}$:
\be
\mathbf{f}=\mathbf{f}^{(0)}+\mathbf{f}^{(1)}+\ldots +\mathbf{f}^{(k)},
\ee
with each of these components being \emph{weight homogeneous}, that is to say
\be
\mathbf{f^{(j)}}(\mathbf{a}\tau^{\mathbf{p}})=\tau^{\mathbf{p}
+\mathbf{1}(q^{(j)}-1)}\mathbf{f}^{(j)}(\mathbf{a}) \quad j=0,\ldots,k,
\ee
where $\mathbf{a}\in\mathbb{R}^{n}$ and $q^{(j)}$ are the positive non-dominant
exponents that are defined by (\ref{sub_exp}) below.

$\bullet$ We then substitute the forms
$\mathbf{x}=\mathbf{a}\mathbf{\tau}^{\mathbf{p}}$
in the system $\dot{\mathbf{x}}=\mathbf{f}^{(0)}(\mathbf{x})$ in order to find
all possible \emph{dominant balances}, i.e., sets of the form
$\{\mathbf{a},\mathbf{p}\}$. The \emph{order} of each balance is defined by
the number of the non-zero components of $\mathbf{a}$.

$\bullet$ For each of these balances we check the validity of the following
\emph{dominance condition}
\be
\label{dominance}
\lim_{\tau\rightarrow 0}
\frac{\Sigma_{j=1}^{k}\mathbf{f}^{(j)}(\mathbf{a}\tau^{\mathbf{p}})}
{\tau^{\mathbf{p}-1}}=0,
\ee
and define the non-dominant exponents $q^{(j)}$, $j=1,\ldots, k$ by
the requirement
\be
\label{sub_exp}
\frac{\Sigma_{j=1}^{k}\mathbf{f}^{(j)}(\tau^{\mathbf{p}})}
{\tau^{\mathbf{p}-1}}\sim\tau^{q^{(j)}}.
\ee
The balances that cannot satisfy the condition (\ref{dominance}) are then
discarded.

$\bullet$ Next, we compute the Kovalevskaya matrix $\mathcal{K}$ defined by
\be
\mathcal{K}=D\mathbf{f}^{(0)}(\mathbf{a})-diag \mathbf{p},
\ee
where $D\mathbf{f}^{(0)}(\mathbf{a})$ is the Jacobian matrix of
$\mathbf{f}^{(0)}$ evaluated on $\mathbf{a}$.

$\bullet$ We calculate the spectrum of the $\mathcal{K}$-matrix,
$spec(\mathcal{K})$, that is the set of its $n$ eigenvalues, called also
the \emph{$\mathcal{K}$-exponents}.
The arbitrary constants of any particular or general
solution first appear in those terms in the series (\ref{Puiseux}) whose
coefficients $\mathbf{c}_{k}$ have indices $k=\rho s$, where $\rho$ is a
non-negative $\mathcal{K}$-exponent and $s$ is the least common multiple of
the denominator of the set consisting of the non-dominant exponents $q^{(j)}$
and of the positive $\mathcal{K}$-exponents (cf. \cite{skot}, \cite{goriely}).
The number of non-negative $\mathcal{K}$-exponents equals therefore the number of
arbitrary constants that appear in the series expansions of (\ref{Puiseux}).
There is always the $-1$ exponent that corresponds to an arbitrary
constant that is the position of the singularity at $t_{s}$. A dominant balance
corresponds thus to a general solution if it possesses $n-1$ non-negative
$\mathcal{K}$-exponents (the $n$th arbitrary constant is the position of the
singularity, $t_{s}$).

$\bullet$ Next, we substitute the Puiseux series:
\be
x_{i}=\Sigma_{j=0}^{\infty}c_{ji}\tau^{p_{i}+j/s}, i=1,\ldots, n,
\ee
in the system (\ref{arb_syst}).

$\bullet$ We find the coefficients $\mathbf{c}_{j}$ by
solving the recursion relations
\be
\mathcal{K}\mathbf{c}_{j}-\frac{j}{s}\mathbf{c}_{j}=
\mathbf{P}_{j}(\mathbf{c}_{1},\ldots, \mathbf{c}_{j-1})
\ee
where $\mathbf{P}_{j}$ are polynomials that are read off from the original system.

$\bullet$ We verify that for every $j=\rho s$, with $\rho$ a positive
$\mathcal{K}$-exponent, the following compatibility conditions hold:
\be
\mathbf{\upsilon}^{\intercal}\cdot\mathbf{P}_{j}=0,
\ee
where $\mathbf{\upsilon}$ is the eigenvector associated with the
positive $\mathcal{K}$-exponent $\rho$.

$\bullet$ We repeat the procedure for each possible decomposition.

We note here that if the compatibility condition is violated in some eigenvalue
then the Puiseux series is not valid and instead we have to substitute a
\emph{$\psi$-series}, which is a series of the form
\be
\mathbf{x}=\tau^{\mathbf{p}}\left(\mathbf{a}
+\Sigma_{i=1}^{\infty}\Sigma_{j=1}^{\infty}\mathbf{c}_{ij}\tau^{i/s}(\tau^{\rho}\log\tau)^{j/s}\right),
\ee
where $\rho$ is the first $\mathcal{K}$-exponent for which
the compatibility condition is violated. The rest of the procedure in this case
is the same as before.

\end{document}